%% file: main.tex
\documentclass[11pt]{article}

\usepackage[margin=1in]{geometry}
\usepackage[T1]{fontenc}
\usepackage[utf8]{inputenc}
\usepackage{lmodern}
\usepackage{microtype}
\usepackage{amsmath,amssymb,amsthm,mathtools}
\usepackage{booktabs,tabularx,array,multirow}
\usepackage{xcolor}
\usepackage{enumitem}
\usepackage{placeins}
\usepackage{xspace}
\usepackage{tikz}
\usetikzlibrary{arrows.meta,positioning,fit,quotes,calc,shapes.geometric}
\usepackage[hidelinks]{hyperref}
\usepackage[nameinlink,noabbrev]{cleveref}

\hypersetup{
  pdftitle={Beyond Distance Ordering: Resource Complexity and Universal Optimality of Exact Labeled Directed Shortest Paths},
  pdfauthor={Bin Cai}
}

\newtheorem{theorem}{Theorem}[section]
\newtheorem{lemma}[theorem]{Lemma}
\newtheorem{corollary}[theorem]{Corollary}
\newtheorem{proposition}[theorem]{Proposition}
\theoremstyle{definition}
\newtheorem{definition}[theorem]{Definition}

\newtheorem{remark}[theorem]{Remark}
\newtheorem{openproblem}[theorem]{Open Problem}

\crefname{theorem}{theorem}{theorems}
\Crefname{theorem}{Theorem}{Theorems}
\crefname{lemma}{lemma}{lemmas}
\Crefname{lemma}{Lemma}{Lemmas}
\crefname{corollary}{corollary}{corollaries}
\Crefname{corollary}{Corollary}{Corollaries}
\crefname{proposition}{proposition}{propositions}
\Crefname{proposition}{Proposition}{Propositions}
\crefname{definition}{definition}{definitions}
\Crefname{definition}{Definition}{Definitions}
\crefname{figure}{figure}{figures}
\Crefname{figure}{Figure}{Figures}
\crefname{equation}{equation}{equations}
\Crefname{equation}{Equation}{Equations}
\crefname{openproblem}{open problem}{open problems}
\Crefname{openproblem}{Open Problem}{Open Problems}

\newcommand{\R}{\mathbb{R}}
\newcommand{\Nzero}{\mathbb{N}_0}
\newcommand{\cR}{\mathcal{R}}
\newcommand{\CA}{\ensuremath{\mathrm{CA}}\xspace}
\newcommand{\DIST}{\ensuremath{\mathrm{DIST}}\xspace}
\newcommand{\DO}{\ensuremath{\mathrm{DO}}\xspace}
\newcommand{\SPT}{\ensuremath{\mathrm{SPT}}\xspace}
\newcommand{\rhoF}{\ensuremath{\rho_{\mathrm{fwd}}}\xspace}
\newcommand{\OPT}{\ensuremath{\operatorname{OPT}_{\mathrm{DIST}}}\xspace}
\newcommand{\WIN}{\ensuremath{\operatorname{WIN}}}
\newcommand{\Ready}{\ensuremath{\operatorname{Ready}}}

\newcommand{\spanof}{\operatorname{span}}
\newcommand{\ceil}[1]{\left\lceil #1\right\rceil}
\newcolumntype{Y}{>{\raggedright\arraybackslash}X}

\title{Beyond Distance Ordering:\\
Resource Complexity and Universal Optimality of\\
Exact Labeled Directed Shortest Paths}
\author{
  Bin Cai\\
  Wuhan University of Technology\\
  Wuhan, China\\
  \texttt{369028@whut.edu.cn}
}
\date{}

\begin{document}
\maketitle

\input{abstract}
\input{introduction}
\input{model}
\input{exact_additions_main}
\input{cyclic_pareto_main}
\input{scalar_benchmark}
\input{universal_main}
\input{efficient_main}
\input{related_work}
\input{discussion}
\input{limitations}
\input{conclusion}

\FloatBarrier
\clearpage
\appendix
\input{topology_parameter}
\input{addition_coordinate}
\input{dag_frontier}
\input{cyclic_atom}
\input{shared_hub}
\input{pareto_lower_bound}
\input{bbwo}
\input{rolling_block}
\input{pareto_law}
\input{universal_optimality}
\input{efficient_universality}
\input{appendix_new}
\input{appendix}

\FloatBarrier
\bibliographystyle{alpha}
\bibliography{references_core,references_flagship}

\end{document}

%% file: abstract.tex
\begin{abstract}
We study exact single-source shortest paths when the output is only the
materialized labeled distance vector (\DIST), rather than a distance order.
In the full deterministic comparison--addition model, the minimum worst-case
number of additions on every fixed directed topology is exactly the maximum
number $\rhoF$ of forward nonsource endpoint classes over rooted vertex
orders; the lower bound permits adaptive control, literals, and arbitrary
mixed sums.  This arithmetic law aligns with the comparison optimum on DAGs,
where the full resource region is an exact rectangle.  Cycles destroy that
alignment: a two-spoke shared-hub graph has coordinatewise optima $(4,2)$ but
requires five comparisons at the two-addition budget.  Its $k$-spoke
extension forces $k\log_2k+O(k)$ comparisons at the addition optimum and has
an entropy-tight deterministic tradeoff
$C_{k+r}^*(H_k)=\Theta(k+\Lambda_{k,r})$, where
$\Lambda_{k,r}=\log_2(k!/[r!(r+1)^{k-r}])$, with leading constant one when
$\Lambda_{k,r}/k\to\infty$.  Because the two coordinatewise minima need not
belong to one program, these conflicts lead to the same-program benchmark
$\OPT=\inf_A\sup_w(C_A(w)+P_A(w))$.  An exact transcript-cone game yields one
uniform interpreter whose charged addition--comparison cost equals $\OPT$ on
every topology; its optimal actions are synthesizable in polynomial space but
may require exponential time.  Finally, an active-core reduction and the
current deterministic directed-SSSP bound give an efficient uniform
$O\!\bigl(\OPT\sqrt{\log(2+\OPT)\log\log(4+\OPT)}\bigr)$ charged-operation
bound.  Thus optimal numerical policies exist uniformly, while efficient
constant-competitive navigation remains open.
\end{abstract}

%% file: introduction.tex
\section{Introduction}
\label{sec:introduction}

Shortest paths admit several output contracts.  A program can return the
labeled numerical vector, a nondecreasing order of vertices, or a tree of
witness paths.  These outputs are related, but they need not have the same
numerical complexity.  This paper isolates
\[
 \DIST(G,s)=\bigl(d(s,v)\bigr)_{v\ne s},
\]
with every coordinate held in an actual numerical register.  No settlement
order or path witness is requested.  Recent directed-SSSP algorithms below
the sorting barrier already make this distinction algorithmically relevant
\cite{duan2025breaking,duan2026faster}; universal-optimal Dijkstra results,
by contrast, benchmark the order-producing task \DO
\cite{haeupler2024universal,vanderhoog2025simpler}.

A common-suffix example makes the output boundary literal: two comparisons
determine all labeled distances, while producing their order requires
$\Theta(n\log n)$ comparisons (\cref{app:dist-do-separation}).  Our central
question is therefore:

\begin{quote}
\emph{If the output is only the exact labeled distance vector, what numerical
resources are intrinsically necessary, and can one uniform algorithm match
the best topology-specialized program?}
\end{quote}

We answer this in the deterministic comparison--addition (\CA) model.  A
program may adaptively compare any two held registers and may add any two
held finite values.  It may reuse registers, form mixed non-path sums, use
repeated operands and finitely many real literals, and share intermediate
values among output labels.  Copying and discrete control are free, but an
output value must be materialized.  The topology is fixed for a specialized
program; arc weights range over all nonnegative reals.

\paragraph{Exact arithmetic.}
Our first result characterizes the addition coordinate on every topology.
For a rooted order of the vertices, count the nonsource endpoint classes
directed forward, and let $\rhoF(G,s)$ be the maximum count.  Then
\[
 \boxed{P^*(G,s)=\rhoF(G,s).}
\]
Endpoint-class Dijkstra attains the value uniformly without computing the
NP-hard parameter.  The content is the lower bound: it applies to arbitrary
adaptive mixed-sum programs, rather than merely counting relaxations in a
particular shortest-path algorithm.  A generic potential-and-slack input,
the full path-difference kernel, and a projected fiber-opening argument force
one addition per forward endpoint class.

\paragraph{Acyclic alignment, cyclic conflict.}
On a reachable DAG with $n$ vertices, $m$ primitive arcs, $q$ endpoint
classes, and $d_s$ source-out classes, the entire resource region is
\[
 \boxed{\cR(G,s)=\{(c,p):c\ge m-n+1,\ p\ge q-d_s\}.}
\]
Thus the two resources align perfectly in the acyclic case.  We use this as
the baseline, not as the endpoint of the paper.

A minimal active cycle already breaks the rectangle.  For the two-spoke
shared-hub graph $H_2$,
\[
 \boxed{C^*(H_2)=4,\qquad P^*(H_2)=2,\qquad C_2^*(H_2)=5.}
\]
The first two values are coordinatewise optima, yet no program attains both.
The last inequality follows from a mathematical reduction to eight strict
cells and a finite exhaustive coefficient-state certificate.  We state this
computer-assisted boundary explicitly and provide a deterministic verifier.

\paragraph{Endogenous sorting and its tradeoff.}
The graph $H_k$ repeats the spoke around one hub.  It still requests only
labeled \DIST, and $P^*(H_k)=k$, but an addition-optimal program must
distinguish $k!$ scale-separated spoke orders:
\[
 \ceil{\log_2(k!)}\le C_k^*(H_k)\le S(k)+2k,
\]
where $S(k)$ is the optimal pairwise-sorting comparison count.  Hence
$C_k^*(H_k)=k\log_2 k+O(k)$ while $C^*(H_k)=\Theta(k)$.  This is endogenous
sorting in the information sense.  We do not claim that arbitrary aggregate
comparisons can be converted one-for-one into pairwise key comparisons.

The full interpolation exposes the mechanism.  Set
\[
 F_k(r)=C_{k+r}^*(H_k),\qquad
 \Lambda_{k,r}=\log_2\frac{k!}{r!(r+1)^{k-r}}
 \quad(0\le r\le k).
\]
We prove
\[
 \boxed{F_k(r)=\Theta(k+\Lambda_{k,r})}
\]
uniformly, with leading constant one on the entropy term whenever
$\Lambda_{k,r}/k\to\infty$.  The lower bound is causal.  After $t$ additions
at most $t$ disjoint spoke pairs can be exposed.  If the pair in position $s$
is still hidden after $s+r$ additions, a synchronized threat makes that pair
essential to $k-s+1$ distinct outputs although only $k-s$ additions remain.
The resulting deadline allows at most
$r!(r+1)^{k-r}$ permutations per transcript.  For the upper bound,
deterministic multiple selection produces ordered blocks of width $r+1$, and
a rolling computation spends a second addition only inside the crossing
block.  The standard deterministic multiple-selection theorem contributes a
lower-order entropy term; we therefore do not claim a uniform additive-$O(k)$
formula.

\paragraph{The same-program benchmark.}
The cyclic tradeoff is also the bridge to universal optimality: it shows that
the topology-specific comparator must price one program, not assemble a
fictitious algorithm from separately optimized coordinates.  In particular,
the $H_2$ nonrectangle rules out $C^*+P^*$ as a measure of total numerical
work because the two minima may belong to different programs.  We therefore define
\[
 \boxed{\OPT(G,s)=\inf_A\sup_{w\ge0}
        \bigl(C_A(G,s,w)+P_A(G,s,w)\bigr).}
\]
Both resources are charged to one execution of one program.  This is the
benchmark for universal optimality throughout the paper.

\paragraph{No numerical uniformity tax.}
We characterize $\OPT$ by a transcript-cone game.  A state records the
homogeneous coefficient vectors of held registers and the physical
comparison outcomes seen so far.  Those outcomes define a cone of consistent
inputs.  A state is terminal only when every label has one fixed held
register that is correct throughout the entire cone.  Addition and comparison
moves reproduce the physical program model exactly.  We prove
\[
 \WIN_G(\{0,e_1,\ldots,e_m\},\varnothing,K)
 \Longleftrightarrow \OPT(G,s)\le K.
\]
A canonical minimax policy therefore gives one uniform interpreter
$U_{\rm pspace}$ satisfying
\[
 \boxed{T_{U_{\rm pspace}}(G,s)=\OPT(G,s)}
\]
for every topology.  The interpreter reconstructs topology-specific optimal
actions from the graph; they are not nonuniform advice.  Its state and
depth-first policy search use polynomial space, but the ordinary time between
charged operations may be exponential.  Exact numerical universal optimality
is settled; efficient navigation is not.

\paragraph{The efficient boundary.}
A canonical active-core reduction deletes dominator-inactive arcs, reduces
source-parallel classes, and retains only endpoints of active nonsource arcs.
If $k$ is the number of those arcs and $\sigma$ is the source-class reduction
cost, then
\[
 |V(H)|\le2k+1,\quad |E(H)|\le3k,\quad
 \sigma+k\le2\OPT(G,s),
\]
and
\[
 \OPT(H,q)\le\OPT(G,s)
 \le\sigma+\OPT(H,q)\le2\OPT(G,s).
\]
Combining this same-input charge with the current deterministic directed-SSSP
algorithm of Duan, Mao, Shu, and Yin~\cite{duan2026faster} gives one explicit
efficient uniform algorithm using
\[
 O\!\left(\OPT\sqrt{\log(2+\OPT)\log\log(4+\OPT)}\right)
\]
charged operations after linear topology preprocessing.

\begin{table}[htbp]
\centering
\small
\begin{tabularx}{\textwidth}{@{}Yccc@{}}
\toprule
Result & statement type & uniformity & computation status \\
\midrule
$P^*=\rhoF$ & exact, all topologies & uniform upper & efficient \\
DAG rectangle & exact baseline & uniform & efficient \\
$H_2$ nonrectangle & exact & family-specific & lower endpoint computer-assisted \\
$H_k$ law & entropy-tight asymptotic & constructive upper & efficient \\
Transcript-cone theorem & exact, all topologies & ratio one & PSPACE, possibly exponential \\
Active-core lifting & sublogarithmic factor & uniform & efficient \\
\bottomrule
\end{tabularx}
\caption{The theorem hierarchy and scope.  ``Exact'' refers to charged
numerical operations, not ordinary running time.}
\label{tab:theorem-map}
\end{table}

\paragraph{Organization.}
\Cref{sec:model} fixes the output contract and program model.
\Cref{sec:main-additions,sec:main-dag} give the exact arithmetic law and its
acyclic control case.  \Cref{sec:main-cycles,sec:main-pareto} develop cyclic
conflict, endogenous sorting, and the Pareto law.
\Cref{sec:scalar-benchmark,sec:main-universal} introduce $\OPT$ and prove the
transcript-cone equivalence.  \Cref{sec:main-efficient} locates the efficient
boundary.  Detailed proofs and reproducibility material follow in the
appendices.

%% file: model.tex
\section{Model, output tasks, and benchmarks}
\label{sec:model}

\subsection{Rooted directed topology}

Let \(G=(V,E)\) be a finite directed multigraph with distinguished source
\(s\).  Every vertex is reachable from \(s\).  Each primitive arc
\(e\in E\) has an independent weight \(w_e\in\R_{\ge0}\).  Parallel
primitives, ties, zero weights, and zero-weight cycles are allowed.  A
self-loop is permitted syntactically but is benchmark-neutral: a nonnegative
self-loop never improves a distance, so it can be ignored in one direction
and replaced by the zero literal in the other.  Unreachable labels, if any,
are removed before the numerical instance is formed.

For \(v\in V\), nonnegativity gives a simple shortest-path representative,
and
\[
 d_w(s,v)=\min\left\{\sum_{e\in P}w_e:
                 P\text{ is a directed }s\text{--}v\text{ path}\right\}.
 \tag{10}\label{eq:distance}
\]

\paragraph{Output contracts.}
The task \DIST requires one existing finite numerical register equal to
\(d_w(s,v)\) for every labeled \(v\ne s\).  It does not require an order of
the distances, a settlement order, a predecessor tree, or witness paths.  The
task \DO additionally outputs a total vertex order nondecreasing in distance,
with a fixed rule for ties.  The task \SPT outputs witness predecessor arcs.
All complexity quantities below refer to \DIST unless explicitly subscripted
otherwise.

\subsection{Topology-specialized comparison--addition programs}

A deterministic topology-specialized \CA program is finite code with finite
topology-dependent discrete advice.  Its initial numerical registers are the
primitive weights and finitely many literal real constants.  The charged
operations are:

\begin{enumerate}[leftmargin=*,itemsep=2pt]
  \item a binary addition of two already materialized finite registers,
        creating a new materialized register; and
  \item a comparison of two already materialized finite registers, branching
        on \(<\), \(=\), or \(>\).
\end{enumerate}

The program may branch adaptively, use equality outcomes, add a register to
itself, retain or reuse arbitrary registers, form mixed or non-path sums,
compare aggregates, and share work across output labels.  Copying, selecting,
indexing, and discrete control are free only when they reveal no new numerical
information.  There is no free subtraction, numerical scalar multiplication,
arbitrary linear-form construction, bit inspection, or output of an
unevaluated expression.  An output is legal only if its numerical value is
already held in a register.

This model is intentionally more permissive than relaxation-only and
path-register models.  Every lower bound below therefore has to survive
mixed supports, constants, repeated operands, equality faces, and adaptive
register reuse.

\subsection{Split resources}

For a program \(A\) and a weighting \(w\), let
\(C_A(G,s,w)\) and \(P_A(G,s,w)\) be the comparisons and additions on that
execution.  Define
\[
 C_A(G,s)=\sup_{w\ge0}C_A(G,s,w),\qquad
 P_A(G,s)=\sup_{w\ge0}P_A(G,s,w).
 \tag{11}\label{eq:worst-costs}
\]
The coordinate optima are
\[
 C^*(G,s)=\inf_A C_A(G,s),\qquad
 P^*(G,s)=\inf_A P_A(G,s),
 \tag{12}\label{eq:scalar-optima}
\]
and for \(b\in\Nzero\) the budgeted comparison optimum is
\[
 \boxed{C_b^*(G,s)=
 \inf\{C_A(G,s):A\text{ computes \DIST and }P_A(G,s)\le b\}.}
 \tag{13}\label{eq:budget-optimum}
\]
The value is \(+\infty\) when the budget is infeasible.  The upward-closed
resource region is
\[
 \cR(G,s)=\{(c,p)\in\Nzero^2:
    \exists A\text{ computing \DIST with }C_A\le c,\ P_A\le p\}.
 \tag{14}\label{eq:resource-region}
\]
Since all finite costs are integers and a finite Bellman--Ford program exists,
the finite infima used in this paper are attained.  A coordinatewise optimum
does not imply that one program attains both coordinates.

\subsection{Total numerical benchmark}

For one program \(A\), define its worst-case total charged cost by
\[
 T_A(G,s)=\sup_{w\ge0}
   \bigl(C_A(G,s,w)+P_A(G,s,w)\bigr),
 \qquad
 \boxed{\OPT(G,s)=\inf_A T_A(G,s).}
 \tag{15}\label{eq:opt-dist}
\]
The supremum is taken after adding the two resources on the same execution.
In general \(\OPT\) is not \(C^*+P^*\), nor is it defined by adding two
separately optimized worst cases.  This same-program scalar benchmark is the
reference quantity in \cref{sec:universal-numerical,sec:efficient-universality}.

\paragraph{Uniform algorithms and RAM work.}
A uniform algorithm has one finite description and takes \((G,s)\) as input.
Its topology preprocessing, ordinary RAM work, memory, comparisons, and
additions are distinct resources.  ``Exact numerical universal optimality''
means equality with \(\OPT\) in charged operations, with no assertion about
the ordinary computation selecting those operations.  ``Efficient
universality'' additionally requires \(O(m+n)\) topology preprocessing and
RAM work bounded by \(O(m+n+C+P)\) on each query.

\subsection{Affine coefficient records}
\label{sec:coefficient-record}

On a fixed reached branch, every materialized register has the affine form
\[
 R(w)=\langle a,w\rangle+\alpha,
 \qquad a\in\Nzero^E,
\]
where \(\alpha\) belongs to the field generated by the program's literal
constants.  Primitive inputs have unit coefficient vectors, literals have
zero coefficient vector, and a charged addition adds the two records.
Comparisons affect control but not coefficients.  These records describe
arbitrary held aggregates; they do not impose a path-register normal form.

%% file: exact_additions_main.tex
\section{Exact addition complexity}
\label{sec:main-additions}

Parallel primitives with the same ordered endpoints form an \emph{endpoint
class}; write $\overline E$ for the nonempty classes.  A total order $\prec$
of $V$ is \emph{rooted} if $s$ is first and every other vertex has an
incoming class from an earlier vertex.  Define
\[
 \rhoF(G,s)=\max_{\prec\ \mathrm{rooted}}
 \bigl|\{(u,v)\in\overline E:u\ne s,\ u\prec v\}\bigr|.
\]
This parameter counts forward nonsource endpoint classes, not primitive arcs.
It is NP-hard to evaluate, but the algorithm below does not evaluate it.

\begin{theorem}[Exact addition law]
\label{thm:main-addition}
For every finite reachable directed multigraph with nonnegative real weights,
\[
 \boxed{P^*(G,s)=\rhoF(G,s).}
\]
One uniform endpoint-class version of Dijkstra attains this value on every
topology.
\end{theorem}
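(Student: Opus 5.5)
The proof has two halves: an upper bound from an endpoint-class Dijkstra, and a lower bound valid for arbitrary adaptive mixed-sum programs.

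\emph{Upper bound.} We run Dijkstra on endpoint classes rather than primitive arcs. First, inside each class $(u,v)$, take the minimum of its parallel primitives by comparisons only; this uses no additions. For a source class $(s,v)$ this minimum is already a valid candidate for $v$, since $d(s,s)=0$ and $0+w=w$ needs no addition. Then run ordinary Dijkstra. When a nonsource vertex $u$ is settled, each class $(u,v)$ with $v$ not yet settled is relaxed with exactly one addition $d(u)+\min_{(u,v)}w$. Classes pointing to already-settled vertices are skipped, since nonnegativity makes them useless.

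The settlement order $\prec$ produced on input $w$ is rooted: $s$ comes first, and every later vertex was reached through a class from an earlier settled vertex. The additions performed are exactly the forward nonsource classes of $\prec$. Hence $P_A(w)\le\rhoF(G,s)$ for every $w$, and the algorithm never has to compute $\rhoF$. Ties need a fixed tie-breaking rule but do not change the count.

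\emph{Lower bound.} Fix an arbitrary program $A$ and a rooted order $\prec$ attaining $\rhoF$. Build a generic potential-and-slack input. Assign potentials $\pi(s)=0<\pi(v_1)<\cdots$ increasing along $\prec$, with the gaps scale-separated. On each forward class $(u,v)$, give one primitive the tight weight $\pi(v)-\pi(u)+\varepsilon_{uv}$ and the remaining primitives larger generic slacks. Backward classes get huge weights. Choose the slacks generically, so that the branch of $A$ reached is stable on an open neighborhood $N$ and all comparisons there are strict (or equalities that hold identically).

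On this branch every register is an affine record $\langle a,w\rangle+\alpha$ (\cref{sec:coefficient-record}). Correctness on the open set $N$ forces the output register for $v$ to have coefficient vector equal to the indicator of the path determined by the tight forward arcs. Perturbing the slacks independently makes each forward class $(u,v)$ with $u\neq s$ essential: the label $d(v)$ must combine the coefficient of a tight arc in $(u,v)$ with that of some arc into $u$.

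The key step is to show that distinct forward nonsource classes need distinct additions. Consider the path-difference kernel, i.e.\ the span of differences of the output coefficient vectors together with the literals. Order the forward classes by the $\prec$-position of their heads. Then use a projected fiber-opening argument: project the coefficients onto the class coordinates and trace the first addition whose output's projection contains $e_{uv}$ jointly with a coordinate of some arc into $u$. We claim this map from forward classes to additions is injective. An addition adds only two records, so it can ``open'' at most one new class fiber. If one addition opened two fibers, some output would be forced to carry a spurious class coefficient, and moving the slack of that class inside $N$ would change the register without changing $d(v)$, a contradiction.

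Summing over the forward nonsource classes gives $P_A(w)\ge\rhoF$. Source classes are excluded because their primitives are already held and need no addition.

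\emph{Main obstacle.} The hard part is making the injectivity argument survive adaptive control, repeated operands (e.g.\ $x+x$), literals, and registers shared across many outputs. I would first try a potential-function argument: count, for each register, the set of forward classes $(u,v)$ whose tight coefficient co-occurs with an in-arc of $u$ in its record. Each addition increases the size of the union of these sets over all held registers by at most one. Every output must contain its own class, so at the end the union must cover all $\rhoF$ classes. Scale separation of the potentials is what rules out the mixed sums that would otherwise let one addition cover two classes.
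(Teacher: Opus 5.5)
Your upper bound is the paper's endpoint-class Dijkstra and is fine. The lower bound has a genuine gap: it tries to get $\rhoF$ from materializing outputs, and outputs alone cannot force that many additions.

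\textbf{Outputs do not see off-tree classes.} On your tie-broken generic input the branch is stable on an open cell where one shortest-path tree $T$ is optimal. The output registers are therefore just the forms $\chi_{T_v}$, each using a single incoming class. A forward class off $T$ appears in no output. So the claims that ``each forward class is essential'' and that ``every output must contain its own class, so the union covers all $\rhoF$ classes'' are false. Materializing the $n-1$ tree forms costs at most $n-1$ additions, whereas $\rhoF=q-d_s$ is $\Theta(n^2)$ on a dense DAG. The off-tree classes are charged only through \emph{certification}, and that is the ingredient you are missing. At the exact potential-tie input, the reached equality normals are differences of held forms of equal divergence. The paper shows these normals must span the whole path-difference kernel $\ker B_H$, of dimension $m_H-n+1$. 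Otherwise a two-sided perturbation $\pm\varepsilon z$, with $0\ne z\in\ker B_H$ orthogonal to them, preserves the branch. By \cref{lem:path-kernel-new} it also changes the minimizing slope at some label, which the fixed output register cannot follow. The ``kernel'' you define, spanned by differences of output vectors, is not this object. Nothing in your argument forces comparisons that touch non-tree classes.

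\textbf{Your per-addition count fails for mixed sums.} Take arcs $s\to a$, $s\to b$, $a\to c$, $b\to d$, and form $R_1=w_{ac}+w_{bd}$ and $R_2=w_{sa}+w_{sb}$. The single addition $R_1+R_2$ makes both $(a,c)$ and $(b,d)$ co-occur with an in-arc of their tails, so your union grows by two. Scale separation cannot prevent this, because the program chooses its operands. The ``spurious coefficient in an output'' argument also fails, since such a register need never be output. The preparatory additions did pay here (three additions for two classes). The correct potential, as in \cref{lem:fiber-opening-new}, is $\dim D_{B_H}(S)$: the span of differences of held projected vectors with equal divergence. It grows by at most one per insertion, and by zero when the insertion opens a new fiber. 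Combine three facts:
\begin{itemize}
\item $\dim D_{B_H}(S)\ge m_H-n+1$, from the spanning step;
\item $\dim D_{B_H}(S_0)=m_H-q_H$;
\item the outputs occupy $n-1-d_s$ fibers absent initially.
\end{itemize}
These give at least $q_H-d_s=f_G(\prec)$ additions.

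Your perturbed input could be repaired along these lines. Near the tie, only comparisons whose normals lie in $\ker B_H$ with equal constants are sensitive to the slacks, and these must span $\ker B_H$ by the same two-sided perturbation. But that spanning argument and the dimension count are exactly what the proposal lacks.
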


\paragraph{Upper bound.}
Reduce each parallel class to its minimum.  When Dijkstra settles $u$, create
$d(u)+\mu_{uv}$ only for a class whose tail is not $s$ and whose head is not
yet settled.  The settlement order is rooted, so the additions are exactly
the classes counted by its forward count and are at most $\rhoF$.  A weighting
realizes every rooted order, which also shows that this analysis is tight for
the algorithm.

\paragraph{Lower-bound architecture.}
Fix a maximizing rooted order and retain its forward classes, obtaining a
rooted spanning DAG $D$.  Algebraically independent vertex potentials plus
small independent slacks give one positive input on which every $s$--$v$ path
in $D$ ties at the potential of $v$.  On the reached adaptive branch, each
held register is an affine form with a nonnegative integral coefficient
vector.  Equality normals must span the full common-endpoint path-difference
kernel: otherwise a two-sided kernel perturbation preserves the branch but
changes the correct minimum at some label.  Projecting the coefficient forms
along this equality space reduces materialization to a support-fiber process.
A binary addition can open at most one new nonsource endpoint fiber, so all
$\rhoF$ forward fibers require distinct additions.  The argument permits
mixed sums, repeated operands, literals, equality branches, and reuse across
outputs.  The full kernel and fiber-opening proofs are in
\cref{sec:topology,sec:additions}.

Two points prevent common shortcuts.  First, the hard input is chosen after
the competing finite program is fixed.  The perturbation radius is smaller
than every reached strict margin, while its two signs lie in the nullspace of
every reached equality normal.  Adaptivity therefore does not let the program
change branches to avoid the kernel argument.  Second, the fiber count is
performed after projecting away precisely those equalities.  A register
containing many endpoint classes, or several copies of one primitive, is
still one coefficient point; joining two old fibers can open at most one new
fiber in the quotient.  The bound counts physical additions, not syntactic
path extensions.

Literal constants require no separate assumption.  Correct distance outputs
are positively homogeneous in the primitive weights.  Along a scaling ray,
each finite constant is lower order; after one finite transcript is fixed,
the leading coefficient records determine every nonidentity comparison.
Comparisons whose leading forms coincide remain forced equalities.  The
theorem therefore covers affine registers rather than only homogeneous
straight-line programs.

\section{The acyclic baseline}
\label{sec:main-dag}

Let $G$ now be a reachable DAG with $n$ vertices, $m$ primitive arcs, $q$
endpoint classes, and $d_s$ source-out classes.

\begin{theorem}[Exact DAG resource rectangle]
\label{thm:main-dag}
\[
 \boxed{\cR(G,s)=\{(c,p)\in\Nzero^2:
              c\ge m-n+1,\ p\ge q-d_s\}.}
\]
Thus one topological dynamic program simultaneously attains both coordinate
optima.
\end{theorem}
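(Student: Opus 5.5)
The plan has three parts. First, a single topological dynamic program attains the corner $(m-n+1,\,q-d_s)$. Second, the addition lower bound $q-d_s$ follows from \cref{thm:main-addition}. Third, the comparison lower bound $m-n+1$ is proved separately. Upward closure of $\cR$ then gives the rectangle.

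\emph{Upper bound.} Process the vertices in a topological order; the DAG is reachable, so $s$ comes first. For a vertex $v\ne s$ with incoming primitive arcs $e_1,\dots,e_{\deg^-(v)}$, form one candidate per primitive:
\begin{itemize}
\item the weight $w_e$ itself when the tail is $s$, which costs no addition;
\item otherwise $d(u)+w_e$.
\end{itemize}
This naive scheme spends one addition per nonsource primitive arc, which is too many. Instead, group each endpoint class first: take the minimum within the class (comparisons), then make one addition $d(u)+\mu_{uv}$ per nonsource class, for $q-d_s$ additions in total. Taking the minimum of the resulting class candidates at $v$ costs further comparisons. Within a class of size $c$ the minimum costs $c-1$ comparisons, and across the $k_v$ classes entering $v$ it costs $k_v-1$. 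The total number of comparisons at $v$ is therefore $\deg^-(v)-1$. Summing over $v\ne s$ gives $m-(n-1)$, because the DAG has no arcs into $s$. So a single program achieves $(m-n+1,\,q-d_s)$.

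\emph{Addition lower bound.} In a DAG every rooted order is a topological order, since each arc must go forward. Hence every nonsource class is forward, and $\rhoF=q-d_s$ for every rooted order. \Cref{thm:main-addition} then gives $P^*\ge q-d_s$ for every program.

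\emph{Comparison lower bound, and the main obstacle.} We need $C_A\ge m-n+1$ for every program, irrespective of how many additions it uses. My plan is an adversary/cone argument in the spirit of the kernel step of \cref{thm:main-addition}.

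Choose a generic input with vertex potentials $\pi$ and set every arc weight to $w_{uv}=\pi_v-\pi_u$ on a generic $\pi$ that is monotone along the topological order. On this input every $s$--$v$ path is tight, so every incoming arc at $v$ attains $d(v)$. The space of weight perturbations modulo potentials has dimension $m-(n-1)$, the cycle space of the underlying undirected graph; this is the path-difference kernel used in the addition proof.

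Fix the branch that the program reaches on this input. Strict outcomes survive small perturbations, and each equality outcome imposes one linear constraint on the perturbation. I then claim that the equality normals must span the whole path-difference kernel. If some kernel direction $\delta$ were orthogonal to all of them, then $\pm\epsilon\delta$ would keep the same branch, and hence the same output register for each label. Each register is affine in $w$, whereas some label's true distance is a minimum of path sums that is non-differentiable along $\delta$: one sign makes one path strictly best and the other sign makes a different path strictly best. A single affine register cannot equal that minimum on both sides, so the program would be wrong.

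Each comparison contributes at most one normal, which yields at least $m-n+1$ comparisons. The delicate part is justifying that non-smoothness: one must exhibit, for each nonzero $\delta$ in the kernel, a label whose distance genuinely kinks along $\delta$. I would try to do this by taking the earliest vertex, in topological order, at which $\delta$ induces unequal path sums. I would also reuse the literal-scaling remark from \cref{thm:main-addition} to dispose of affine constants. Combining the three parts with upward closure of $\cR$ gives the stated rectangle.
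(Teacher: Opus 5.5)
Your plan follows the paper's proof in all essentials. The same topological program with class reduction attains the corner. The same strictly positive potential-tie input, with a two-sided perturbation in a kernel direction orthogonal to the equality normals, forces $m-n+1$ comparisons. The addition law supplies $q-d_s$. Two steps need repair, however.

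\textbf{The addition step.} The claim that every rooted order of a DAG is topological is false. Take arcs $s\to a$, $s\to b$, $a\to b$. The order $s\prec b\prec a$ is rooted, since both vertices have an in-arc from $s$, yet $a\to b$ points backward, so $f_G(\prec)=0<1=q-d_s$. The conclusion $\rhoF=q-d_s$ still holds, but only because $\rhoF$ is a maximum. A topological order is rooted, it makes all $q-d_s$ nonsource classes forward, and no order can count more (\cref{lem:dag-rho-new}).

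\textbf{The kink.} You propose to find it by taking the earliest vertex with unequal $\delta$-path sums. That search presupposes such a vertex exists, which is exactly what must be proved. It follows at once from \cref{lem:path-kernel-new}. Write $B_G$ for the incidence matrix and $Z$ for the span of the equality normals. Since $\ker B_G$ is spanned by the common-endpoint differences $\chi_P-\chi_Q$, a nonzero $\delta\in\ker B_G$ orthogonal to all of them would satisfy $\langle\delta,\delta\rangle=0$. Hence some $v$ has two paths with different $\delta$-slopes. The map $\epsilon\mapsto d_{w+\epsilon\delta}(v)=\pi(v)+\min_P\,\epsilon\langle\delta,\chi_P\rangle$ then has right slope $\min_P\langle\delta,\chi_P\rangle$ and left slope $\max_P\langle\delta,\chi_P\rangle$. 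These differ, while the fixed output register is affine along the line. The count then follows from $\dim(\ker B_G\cap Z^\perp)\ge\dim\ker B_G-\dim Z$.

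Genericity and the literal-scaling remark are unnecessary for this comparison step. The perturbation works at any strictly positive tie input with arbitrary affine constants.

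\textbf{Where the routes differ.} The paper does need algebraic independence, because it applies the projected fiber bound \cref{cor:sp-fiber-new} to the \emph{same} execution. That requires the equality normals to lie in $\ker B_G$ and the outputs to have divergence $\mathbf 1_v-\mathbf 1_s$, and it yields both lower bounds on one input per program. Citing \cref{thm:main-addition} instead gives the two bounds on possibly different inputs. This suffices for the rectangle, because $\cR$ is defined through the separate worst cases $C_A$ and $P_A$. It does lose the same-input form that the paper later reuses to obtain $L^*\le\OPT$ in \cref{lem:active-edge-charge}.
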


Class reduction uses $m-q$ comparisons, and selecting one of the incoming
class candidates at each nonsource vertex uses $q-(n-1)$ more.  The same
schedule performs one addition for every nonsource class, giving the corner
$(m-n+1,q-d_s)$.  For the lower bound, the potential-tie input above is now
available on the whole graph.  Equality-rank forces $m-n+1$ comparisons, and
the projected fiber argument forces $q-d_s$ additions on that same execution.
The detailed proof is in \cref{sec:dag-frontier}.

On DAGs the two resources are therefore perfectly aligned.  This is the
control case: the next section shows that a minimal active cycle destroys the
rectangle.

%% file: cyclic_pareto_main.tex
\section{From cyclic conflict to endogenous sorting}
\label{sec:main-cycles}

\subsection{A minimal nonrectangle}

Let $H_2$ have vertices $s,x,v,w$ and arcs
\[
 s\to x:a,\quad s\to v:b,\quad v\to x:z,\quad x\to v:u,
 \quad s\to w:c,\quad w\to x:q,\quad x\to w:p.
\]
Its distances are
\[
 d_x=\min\{a,b+z,c+q\},\quad
 d_v=\min\{b,d_x+u\},\quad d_w=\min\{c,d_x+p\}.
\]

\begin{figure}[htbp]
\centering
\begin{tikzpicture}[
  vertex/.style={circle,draw,thin,minimum size=7mm,inner sep=1pt},
  every edge/.style={draw,thin,->,>=Stealth},
  every edge quotes/.style={font=\small,fill=white,inner sep=1pt}]
  \node[vertex] (s) at (0,0) {$s$};
  \node[vertex] (x) at (2.6,1.15) {$x$};
  \node[vertex] (v) at (2.6,-1.15) {$v$};
  \node[vertex] (w) at (5.3,0) {$w$};
  \path (s) edge["$a$"] (x) edge["$b$"'] (v)
        (v) edge[bend left=17,"$z$"] (x)
        (x) edge[bend left=17,"$u$"'] (v)
        (s) edge[bend right=12,"$c$"'] (w)
        (w) edge[bend left=17,"$q$"'] (x)
        (x) edge[bend left=17,"$p$"] (w);
\end{tikzpicture}
\caption{The two-spoke shared-hub atom $H_2$.  Two feedback spokes compete
for one hub value and then reuse it in the opposite direction.}
\label{fig:main-h2}
\end{figure}

\begin{theorem}[Exact cyclic nonrectangularity]
\label{thm:main-h2}
\[
 \boxed{C^*(H_2)=4,\qquad P^*(H_2)=2,\qquad C_2^*(H_2)=5.}
\]
Hence the coordinatewise corner $(4,2)$ is infeasible.
\end{theorem}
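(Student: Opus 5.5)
The plan is to prove the three equalities separately and then read off the infeasibility of $(4,2)$. The value $P^*(H_2)=2$ should follow from \cref{thm:main-addition} by computing $\rhoF(H_2,s)$. There are seven endpoint classes. The nonsource ones are $v\to x$, $x\to v$, $w\to x$, $x\to w$. In any rooted order, exactly one class of each antiparallel pair $\{v\to x,x\to v\}$ and $\{w\to x,x\to w\}$ points forward. Every ordering of $x,v,w$ after $s$ is rooted, since $s$ has an arc to each of them. Hence every rooted order has forward count exactly $2$, so $\rhoF=2$.

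For $C^*(H_2)=4$, the upper bound ignores additions. Form $b+z$ and $c+q$, take $d_x=\min\{a,b+z,c+q\}$ with two comparisons, then form $d_x+u$ and $d_x+p$ and use one comparison each for $d_v$ and $d_w$. This uses four comparisons (and four additions). For the lower bound I would use the equality-rank method behind \cref{thm:main-dag}. Pick an input on which the four relations
\[
a=b+z,\qquad a=c+q,\qquad b=a+u,\qquad c=a+p
\]
all hold. This forces $z=u=q=p=0$, so the tie input is $a=b=c$ with zero spokes. The path-difference normals of these four relations are linearly independent. Each comparison contributes at most one equality normal. If the reached normals span fewer than four dimensions, some admissible perturbation keeps the branch but changes the correct minimum at one of $x,v,w$.

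I expect the main subtlety here to be the boundary. Because $z,u,q,p$ sit at zero, two-sided perturbations are not available, so the perturbation must be one-sided. I would first try raising $a$ relative to $b,c$ along a direction transverse to the unspanned normal, which stays inside the nonnegative cone. If that fails, I would replace the tie input by a slack input in which each relation is tight on its own generic face.

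For $C_2^*(H_2)=5$, the upper bound needs a decision tree with at most two additions on every branch and at most five comparisons. I would build it by first ordering the spoke weights $b,c$ against $a$. If $a\le\min\{b,c\}$, then $d_x=a$, and each of $d_v,d_w$ costs one addition and one comparison. If instead, say, $b$ is minimal, then $d_v=b$. Comparing $c$ with $a$ then either removes the candidate $c+q$, or shows that $d_w=c$ once $b+z\ge c$. In the bad subcase $b\le c<a$ with $b+z<c$, both $c+q$ and $d_x+p$ seem to be needed, and a naive count gives six comparisons or three additions. Here I would search for the five-comparison tree by hand or by machine, for instance by deciding $d_w$ via comparing $c$ against $b+z+p$ only after the order is known.

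The lower bound $C_2^*\ge5$ is the main obstacle. Following the statement's description, I would reduce to finitely many generic strict cells: eight sign patterns of the relevant comparisons among $a$, $b+z$, $c+q$, $b$, $c$ together with the spoke sums. I would then show that any program with two additions must keep every register's coefficient vector in a small finite set; by the fiber argument these are unit vectors or sums of two classes. Finally, I would exhaustively verify over coefficient states and comparison transcripts that no tree of depth $4$ separates all eight cells while materializing the correct output in each. This step is a computer-assisted certificate, and a deterministic verifier would be supplied with it.
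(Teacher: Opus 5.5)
Your arguments for $P^*(H_2)=2$ and $C^*(H_2)\le4$ match the paper, and for $C_2^*(H_2)\ge5$ you take the same computer-assisted route. The genuine gap is $C^*(H_2)\ge4$: the equality-rank method cannot reach four on $H_2$.

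Take your tie input ($a=b=c$, $z=u=q=p=0$) and suppose the branch has recorded only the two hub equalities $a=b+z$ and $a=c+q$. Every admissible perturbation keeps these equalities and keeps $z,u,q,p\ge0$. Hence $a+u=b+z+u\ge b$ and $a+p=c+q+p\ge c$, so $(d_x,d_v,d_w)=(a,b,c)$ is one fixed affine triple on the whole admissible neighborhood. Your claim that rank below four yields an output-changing perturbation is therefore false. Making $d_v=a+u$ would require lowering $a$ below $b$, which $a=b+z$ with $z\ge0$ forbids; raising $a$ only reinforces $d_v=b$.

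Changing the input does not rescue the method. Each spoke closes a cycle, so $a=b+z$ and $b=a+u$ together force $z=u=0$. A case check on the tight pieces of $d_x,d_v,d_w$ then shows that at every nonnegative input at most two independent equality normals already make the outputs locally affine. Equality counts taken at different inputs lie on different branches and do not add.

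The missing idea is a global leaf count. The eight strict cells $R_1,\dots,R_8$ carry pairwise distinct affine output triples, so a three-comparison tree would need exactly one full-dimensional leaf per cell, and no query could cut a cell interior. The facet-adjacency graph of the cells is connected, so the root query must separate two adjacent cells. Its normal is therefore one of the seven walls $a=b+z$, $a=c+q$, $b=a+u$, $b=c+q+u$, $c=a+p$, $c=b+z+p$, $b+z=c+q$. Each of these cuts the interior of some other cell, witnessed by explicit positive points on both sides, which is a contradiction. Pure counting gives only $\lceil\log_2 8\rceil=3$; the wall-cutting step supplies the fourth comparison.

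There are also three smaller problems. First, for $C_2^*(H_2)\le5$ you do not produce a tree, and your diagnosis of the hard subcase is reversed. If $b\le c<a$ and $b+z<c$, then $d_x=b+z$ outright, since $c+q\ge c>b+z$ and $a>c$, so only $b+z+p$ is needed. The costly subcase is $b+z\ge c$: there $d_w=c$, but $d_x$ still needs a three-way minimum, and an opening that fully orders $a,b,c$ then spends six comparisons. The paper never compares $a$ with $c$ in that branch. After $b<a$ and $b\le c$ it forms $T=\min\{a,b+z\}$, compares $T$ with $c$, and finishes with either $\min\{c,T+p\}$ or $\min\{T,c+q\}$. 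The branch $c<b<a$ is symmetric, and $a\le b$ needs at most four comparisons.

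Second, in the finite closure two additions do not confine registers to unit vectors and sums of two classes. The second addition may reuse the first sum, and the three-term forms $b+z+p$ and $c+q+u$ are required output labels in $R_6$ and $R_8$. So every sum of a pair of held vectors, repeated operands included, must be enumerated; the fiber lemma gives no such restriction.

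Third, you need the recession (scaling) step that removes literals before passing to homogeneous coefficient states.
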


The two scalar upper bounds are short explicit programs.  Four comparisons
follow the displayed distance formulas using four additions; a case split on
the three raw source values gives an addition-optimal two-addition program.
The addition lower bound is \cref{thm:main-addition}.  A geometric wall
argument gives $C^*\ge4$.

The final inequality $C_2^*\ge5$ is computer-assisted.  Its mathematical
reduction restricts a hypothetical four-comparison, two-addition program to
eight full-dimensional strict cells.  Recession removes literals without
changing any strict outcome.  A deterministic exhaustive closure then
enumerates every comparison of held registers and every legal addition,
including repeated operands and arbitrary mixed supports; free aliasing and
selection do not create coefficient forms.  The audited run visits 975
states and rules out all terminal states within the budget.  The exact cell
witnesses, certificate boundary, and independent verifier are given in
\cref{sec:cyclic-atom} and the repository verifier package.

The human part of the comparison lower bound is small enough to expose.  The
eight cells carry the affine output triples
\[
\begin{array}{llll}
 (a,b,c),&(a,b,a+p),&(a,a+u,c),&(a,a+u,a+p),\\
 (b+z,b,c),&(b+z,b,b+z+p),&(c+q,b,c),&(c+q,c+q+u,c).
\end{array}
\]
A three-comparison tree would have at most eight strict leaves, hence exactly
one leaf per cell and no query hyperplane cutting a cell interior.  The facet
adjacency graph is connected, but every separating facet normal cuts the
interior of another cell, a contradiction.  Thus $C^*\ge4$ without
computation.  Exhaustive closure is used only to show that the added
restriction $P\le2$ raises the comparison depth from four to five.

The verifier state consists of a held set of nonnegative integer coefficient
vectors, the surviving portions of the eight strict cones, and the remaining
budgets.  Addition inserts every componentwise sum of an unordered pair of
held vectors, including a repeated operand.  Comparison branches on every
difference of held vectors; exact integer cone certificates discard
infeasible outcomes.  A terminal state requires one fixed held output triple
on every surviving cone.  Aliases are therefore irrelevant, and no held
vector is assumed to encode a path.

\subsection{The shared-hub family}

For $k\ge1$, the graph $H_k$ has vertices $s,x,v_1,\ldots,v_k$ and arcs
\[
 s\to x:a,\qquad s\to v_i:b_i,\qquad
 v_i\to x:z_i,\qquad x\to v_i:u_i.
\]
Thus
\[
 d_x=\min\{a,b_1+z_1,\ldots,b_k+z_k\},\qquad
 d_i=\min\{b_i,d_x+u_i\}.
\]

\begin{figure}[htbp]
\centering
\begin{tikzpicture}[
  vertex/.style={circle,draw,thin,minimum size=7mm,inner sep=1pt},
  every edge/.style={draw,thin,->,>=Stealth},
  every edge quotes/.style={font=\small,fill=white,inner sep=1pt}]
  \node[vertex] (s) at (0,0) {$s$};
  \node[vertex] (x) at (3,0) {$x$};
  \node[vertex] (v1) at (6,1.8) {$v_1$};
  \node[vertex] (vi) at (6,0) {$v_i$};
  \node[vertex] (vk) at (6,-1.8) {$v_k$};
  \path (s) edge["$a$"] (x)
        (s) edge[bend left=14,"$b_1$"'] (v1)
        (s) edge["$b_i$"'] (vi)
        (s) edge[bend right=14,"$b_k$"] (vk)
        (v1) edge[bend left=12,"$z_1$"] (x)
             edge[bend right=12,"$u_1$"'] (x)
        (vi) edge[bend left=12,"$z_i$"'] (x)
             edge[bend right=12,"$u_i$"] (x)
        (vk) edge[bend left=12,"$z_k$"'] (x)
             edge[bend right=12,"$u_k$"] (x);
\end{tikzpicture}
\caption{The family $H_k$.  Each spoke can contribute an inward hub
candidate or consume the final hub value through an outward candidate.}
\label{fig:main-hk}
\end{figure}

Every rooted order chooses exactly one forward class from each bidirected
spoke, so $P^*(H_k)=k$.  The unrestricted comparison optimum is linear: fix
the hub on an open product family to get $k$ independent output choices, and
use the direct formulas for the matching upper bound.

\begin{theorem}[Addition-optimal endogenous sorting]
\label{thm:main-sorting}
If $S(k)$ is the optimal pairwise-comparison sorting number, then
\[
 \boxed{\ceil{\log_2(k!)}\le C_k^*(H_k)\le S(k)+2k.}
\]
Consequently $C_k^*(H_k)=k\log_2k+O(k)$, whereas $C^*(H_k)=\Theta(k)$.
\end{theorem}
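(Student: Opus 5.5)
For the upper bound I will give an explicit program with exactly $k$ additions. First spend $k$ comparisons to form $\mu_i=\min\{b_i,\cdot\}$ style quantities. Concretely, I would sort the spokes by a key, using $S(k)$ pairwise comparisons, and then run endpoint-class Dijkstra restricted to the sorted structure, so that each spoke performs exactly one addition: either $b_i+z_i$ (inward, before the hub is settled) or $d_x+u_i$ (outward, after). The cleanest realization follows the settlement order of Dijkstra on $H_k$. The source values $a,b_1,\dots,b_k$ are raw registers, and the order in which the $b_i$ are settled relative to $d_x$ decides which of the two spoke classes is forward.

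So I sort $b_1,\ldots,b_k$ with $S(k)$ comparisons. I then scan in increasing $b$ order, maintaining the current hub candidate $h$ (initially $a$). At spoke $i$, one comparison of $b_i$ against $h$ decides whether $v_i$ settles before the hub. If it does, I output $d_i=b_i$, add $b_i+z_i$, and compare it into $h$. Once $h\le b_i$, the hub is settled, $d_x=h$, and every remaining spoke receives $d_x+u_i$ followed by one comparison with $b_i$. Each spoke costs one addition and at most two comparisons, giving $S(k)+2k$ comparisons and $k$ additions. Correctness is the Dijkstra invariant on the rooted order $s,v_{\pi(1)},\ldots,x,\ldots$.

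For the lower bound I build a scale-separated family indexed by permutations $\pi\in S_k$, and show that any program with $P\le k$ must produce distinct transcripts (leaves) on different $\pi$. That gives $\lceil\log_2 k!\rceil$ by counting leaves of a ternary tree on strict cells, where only two outcomes survive on full-dimensional cells, hence the base $2$. Here is the family. Choose $b_{\pi(j)}\approx \varepsilon^{-j}$ with the inward costs $z_i$ tiny and the outward costs $u_i$ arranged so that every spoke is "live": the hub minimum is $d_x=b_{\pi(1)}+z_{\pi(1)}$, and each later spoke has $d_x+u_i$ and $b_i$ close enough that both classes are potentially essential. I would instead use a synchronized threat. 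The hub value must be forced through a chain in which the order of spoke activation matters. With only $k$ additions, \cref{thm:main-addition} shows that the executed additions form exactly one forward class per spoke along a rooted order. Using the kernel/fiber argument from its proof, I would show that on a generic input from the cell of $\pi$, the set of materialized coefficient vectors determines the rooted order, namely which spokes were inward versus outward and in which sequence. A generic input of a neighbouring permutation cell then makes some output register wrong.

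I expect this step to be the main obstacle: extracting the permutation from the addition record. The fiber argument only counts additions and does not directly encode order. I would try to prove a tightness lemma saying that at exactly $P=k$, every addition must be a single-class path extension of an already correct label. This would mean the register for $v_{\pi(j)}$ and the hub candidate created at step $j$ have coefficient supports that reveal $\pi$ up to the fixed transcript, and that two permutations sharing a leaf would force a single affine output to be correct on two cells where the minimizers differ. Finally, $S(k)=k\log_2k+O(k)$ via merge insertion, together with the linear bound on $C^*(H_k)$ stated before the theorem, gives the consequence.
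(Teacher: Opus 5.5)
Your upper bound is correct and is the paper's own. Sorting the $b_i$ and then scanning with one candidate per spoke is the rolling construction at $r=0$. Your invariant ($b_i<h$ forces $d_x\ge b_i$; $h\le b_i$ forces $d_x=h$ for all remaining spokes) gives correctness with ties and zero weights, using $k$ additions and at most $2k$ extra comparisons.

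The lower bound has a genuine gap, and the route you sketch cannot close it.
\begin{itemize}
\item The addition theorem only counts additions, on one program-dependent potential-tie input per rooted order. It implies no normal form, so it does not say the executed additions are one forward class per spoke.
\item Your tightness lemma is false as a statement about all inputs. A correct program with worst-case budget $k$ may waste additions, or form mixed sums, on branches that need fewer.
\item On $H_k$ the forward DAG, and hence everything the kernel/fiber argument extracts, depends only on \emph{which} spokes precede the hub, not on their order. That machinery certifies one addition per spoke but carries no permutation information.
\item Your closing step also fails. A shared leaf contradicts correctness only when the two inputs have different output label patterns. $H_k$ has at most $(k+1)2^k$ such patterns, which gives only the $O(k)$ scalar bound.
\item The first family you wrote down would not work either. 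With widely separated $b$'s and tiny $z$'s, only the smallest spoke can feed the hub, and a correct $k$-addition program can confirm this with $O(k)$ comparisons.
\end{itemize}

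The missing idea is the paper's causal fan-out argument, which needs no normal form.
\begin{itemize}
\item \emph{Exposure capacity.} Let $c_i$ be the first addition after which some held support contains both members of $Q_i=\{b_i,z_i\}$. Supports combine by union, so a connectivity count shows that after $t$ additions at most $t$ pairs are exposed, even with mixed sums and repeated operands.
\item \emph{Main and threat inputs.} For a permutation $\pi$, take the main input $a=1$, $z_{\pi_s}=\epsilon^{4s}$, $b_{\pi_s}=1-\epsilon^{4s}$, $u_{\pi_s}=\epsilon^{4s+1}$. On it every inward candidate ties with $a$. The threat input changes only $z_{\pi_s}$, to $\epsilon^{4s}-\epsilon^{4s+2}$.
\item \emph{Synchronization.} For small $\epsilon$, the two executions have identical transcripts and coefficient records until $Q_{\pi_s}$ is exposed. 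Any compared difference involving $z_{\pi_s}$, without both pair members in one operand, has a nonzero term of order $\epsilon^{4s}$, which dominates the $\epsilon^{4s+2}$ change.
\item \emph{Deadline.} On the threat input, the hub and the $k-s$ later spokes have $k-s+1$ distinct correct values. Every held form equal to one of them contains both $b_{\pi_s}$ and $z_{\pi_s}$. So if $Q_{\pi_s}$ is still hidden after $s$ additions, $k-s+1$ new registers are needed with only $k-s$ additions left. Hence $c_{\pi_s}\le s$ for all $s$.
\item \emph{Counting.} With the capacity bound, the set of pairs exposed after $s$ additions is exactly $\{\pi_1,\dots,\pi_s\}$. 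The transcript fixes the coefficient record, and therefore fixes $\pi$. Choosing $\epsilon$ so that every nonidentity comparison is strict gives $k!\le 2^{C}$.
\end{itemize}
This is the rigorous version of your intuition that the coefficient supports reveal $\pi$.
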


The lower bound is deliberately information-theoretic.  Addition optimality
forces the comparison transcript to distinguish all $k!$ scale-separated
spoke orders; aggregate comparisons are not asserted to simulate pairwise
key comparisons, so the stronger lower bound $S(k)$ is not claimed.

\section{The comparison--addition law on \texorpdfstring{$H_k$}{Hk}}
\label{sec:main-pareto}

For $0\le r\le k$, put
\[
 F_k(r)=C_{k+r}^*(H_k),\qquad
 \Lambda_{k,r}=\log_2\frac{k!}{r!(r+1)^{k-r}}.
 \label{eq:main-lambda}
\]
For $h\le k$, write $k=qh+t$, $0\le t<h$, and set
\(
 \Gamma_{k,h}=\log_2\bigl(k!/((h!)^q t!)\bigr)
\); set $\Gamma_{k,k+1}=0$.

\begin{theorem}[Deterministic entropy-tight Pareto law]
\label{thm:main-pareto}
Uniformly for $0\le r\le k$,
\[
 \boxed{
 \max\{k,\ceil{\Lambda_{k,r}}\}
 \le F_k(r)
 \le \Gamma_{k,r+1}+o(\Gamma_{k,r+1})+O(k).}
\]
In particular,
\[
 \boxed{F_k(r)=\Theta(k+\Lambda_{k,r})}.
\]
Whenever $\Lambda_{k,r}/k\to\infty$, the sharper conclusion is
$F_k(r)=(1+o(1))\Lambda_{k,r}$.  No additive-$O(k)$ deterministic
approximation to $\Lambda_{k,r}$ is claimed.
\end{theorem}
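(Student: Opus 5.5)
The proof has four parts: a trivial lower bound, an information-theoretic lower bound, a constructive upper bound, and an asymptotic comparison of the two entropy terms.

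\emph{Trivial lower bound $F_k(r)\ge k$.} This is the unrestricted comparison bound $C^*(H_k)=\Theta(k)$ made exact. Fix the hub value on an open product family. Then each $d_i=\min\{b_i,d_x+u_i\}$ is an independent binary choice whose two sides must be separated. Hence each of the $k$ outputs needs at least one comparison, and adding additions does not reduce this.

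\emph{Main lower bound $F_k(r)\ge\lceil\Lambda_{k,r}\rceil$.} This is the causal deadline argument from the introduction.

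1. Consider scale-separated inputs indexed by permutations $\pi$ of the spokes. The entries $b_i+z_i$ lie at widely separated scales, so $d_x$ is determined by the smallest spoke pair, and the $u_i$ are arranged so the order of the spokes matters.
2. Fix a program with at most $k+r$ additions. By the fiber-opening argument of \cref{thm:main-addition}, each addition opens at most one new nonsource endpoint fiber. So after $t$ additions at most $t$ disjoint spoke pairs $\{b_i+z_i,\ d_x+u_i\}$ are \emph{exposed}, meaning they have a coefficient record touching that spoke's fiber.
3. Show the deadline: the spoke in rank position $s$ of $\pi$ must be exposed within the first $s+r$ additions. Otherwise, perturb the weights within the reached strict margins and along the equality kernel, as in the proof of \cref{thm:main-addition}. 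This synchronized threat makes the hidden pair essential to $k-s+1$ distinct outputs, while only $k-s$ additions remain.
4. Count permutations per comparison transcript. Fix the transcript, and hence the exposure sequence. The element in position $s$ must lie among the first $s+r$ exposed spokes, giving $\min(s+r,k)-(s-1)$ choices. The product of these choices is $(r+1)^{k-r}\cdot r!$.
5. The comparison tree therefore has at least $k!/(r!(r+1)^{k-r})$ leaves, which gives $\lceil\Lambda_{k,r}\rceil$.

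\emph{Upper bound.}
1. Run deterministic multiple selection on the keys $b_i+z_i$ to split them into ordered blocks of width $h=r+1$, with a last block of size $t$. The standard theorem costs $\Gamma_{k,h}+o(\Gamma_{k,h})+O(k)$ comparisons. These keys need $k$ additions, but spokes whose $b_i$ alone already dominates can be handled without adding, so one instead charges one addition per spoke overall.
2. Find $d_x$ by scanning blocks in order. It is the minimum of $a$ and the first block's minimum, found in $O(k)$ comparisons.
3. Compute the outputs with a rolling pass. For spokes in blocks after the crossing block, the outward candidate $d_x+u_i$ needs one addition and one comparison. Only within the single crossing block, which has at most $r+1$ spokes, is a second addition needed.
4. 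The total is at most $k+r$ additions and $O(k)$ extra comparisons.

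\emph{Asymptotics.} It remains to compare $\Gamma_{k,r+1}$ with $\Lambda_{k,r}$.
1. We have $\Gamma_{k,r+1}=\log_2 k!-q\log_2((r+1)!)-\log_2 t!$.
2. We also have $\Lambda_{k,r}=\log_2 k!-\log_2 r!-(k-r)\log_2(r+1)$.
3. By Stirling, $q\log_2((r+1)!)=(k-t)\log_2(r+1)-O(k)$, so $|\Gamma_{k,r+1}-\Lambda_{k,r}|=O(k)$.
4. Consequently $F_k(r)=\Theta(k+\Lambda_{k,r})$.
5. When $\Lambda_{k,r}/k\to\infty$, the $O(k)$ and $o(\cdot)$ terms are negligible, giving leading constant one.

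The main obstacle is the deadline step of the lower bound. One must show rigorously that a hidden pair at position $s$ becomes essential to $k-s+1$ outputs despite adaptive mixed sums. My first attempt would reuse the projected fiber argument of \cref{thm:main-addition} on the residual DAG formed by the unexposed spokes, restricted to the fixed transcript's cone.
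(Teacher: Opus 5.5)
Your banded count and the $\Gamma$--$\Lambda$ estimate are fine, but the deadline lower bound and the upper-bound construction both have genuine gaps.

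\textbf{Lower bound.} The fiber-opening lemma does not give the exposure capacity you need. In the projected quotient the inward candidates $b_i+z_i$ do not open spoke-specific fibers: those with $v_i\prec x$ all lie in the hub fiber $\mathbf 1_x-\mathbf 1_s$, and the others project onto the already occupied fiber of $b_i$. The projection also depends on a rooted order, hence on $\pi$, so exposure times would not be fixed by a transcript, as your counting step requires. Use the syntactic notion instead: $\{b_i,z_i\}$ is exposed when one held support first contains both primitives. Its cumulative capacity $|\{i:c_i\le t\}|\le t$ needs a support-component edge count, because a single addition such as $(b_1+b_2)+(z_1+z_2)$ exposes several pairs at once.

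More seriously, the deadline cannot come from a perturbation radius below every reached strict margin, as in the addition theorem. On a scale-separated input the program may compare $a$ with the finest-scale direct value early. After that, such a perturbation is below every later-spoke gap $d_x+u_{\pi_j}-b_{\pi_j}$, so it changes at most the hub output and produces no fan-out. The missing idea is a threat that is large but invisible. Take a main input on which every inward candidate ties with $a$, e.g.\ $a=1$, $b_{\pi_s}=1-\epsilon^{4s}$, $z_{\pi_s}=\epsilon^{4s}$, $u_{\pi_s}=\epsilon^{4s+1}$. Then lower $z_{\pi_s}$ by $\epsilon^{4s+2}$, which exceeds every gap $\epsilon^{4j}+\epsilon^{4j+1}$ with $j>s$. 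Suppose neither compared operand contains both $b_{\pi_s}$ and $z_{\pi_s}$. Then a nonzero net coefficient $\beta_z$ of $z_{\pi_s}$ forces $\beta_z\ne\beta_b$, so the difference has a nonzero $\epsilon^{4s}$ term that fixes its sign for $\epsilon$ chosen small after the program is fixed. The two executions therefore agree until exposure. You still need coefficient matching at exponents $4s+2$ and $4s$ to show that every form equal to one of the $k-s+1$ pairwise distinct threat outputs contains both primitives, and hence is not yet held. Rerunning the potential-tie argument inside the transcript cone does not supply this, since nothing places a potential-tie input in that cone.

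\textbf{Upper bound.} Weak-ordering the keys $b_i+z_i$ breaks the addition budget. Selection can compare only held registers, so all $k$ keys must be materialized. Now take $a=1$, all $b_i,z_i$ large, and generic $u_i\in(0,1)$. The $k$ outputs $d_i=a+u_i$ are then distinct new values needing $k$ further additions, for a total of $2k>k+r$ whenever $r<k$. Skipping spokes whose $b_i$ ``already dominates'' is circular: deciding that needs the hub value, which in your scheme comes from the keys.

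The fix is to weak-order the raw $b_i$, which costs no additions. Scan the blocks, comparing each $b_i$ with the current hub candidate $H$, and materialize $b_i+z_i$ only when $b_i<H$. Use $b_i\ge H\Rightarrow b_i+z_i\ge H$ and $b_i\le d_x\Rightarrow d_i=b_i$. Then only spokes of a single block can need both candidates. That block has size at most $r+1$ and contains at least one spoke needing just one candidate, so $P\le k+r$ with $O(k)$ extra comparisons.

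Finally, for the floor $F_k(r)\ge k$, ``each output needs a comparison'' does not add up over outputs. Count instead the $2^k$ independent output patterns, which require $2^k$ distinct leaves.
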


\paragraph{Lower bound: exposure, deadlines, entropy.}
Call the pair $Q_i=\{b_i,z_i\}$ exposed when one held support first contains
both primitives.  After $t$ additions, at most $t$ disjoint pairs are
exposed: add at most one edge per addition to a graph connecting the two
operand-support components; a component containing $p$ designated pairs
needs at least $p$ such edges.  This cumulative statement remains valid even
when one late addition exposes several pairs.

Fix a scale-separated main order $\pi$.  If $Q_{\pi_s}$ is hidden after
$s+r$ additions, a synchronized threat makes it essential to the hub and the
$k-s$ later spoke outputs.  These $k-s+1$ distinct values are all still
unmaterialized, but only $k-s$ additions remain---a contradiction.  Thus
$c_{\pi_s}\le s+r$.  On one fixed transcript, rank pairs by exposure time;
cumulative capacity gives $\operatorname{rank}(i)\le c_i$.  Compatible
permutations therefore satisfy $\sigma_s\le s+r$, of which there are exactly
\[
 B(k,r)=r!(r+1)^{k-r}.
\]
The $k!$ main orders and a binary transcript tree yield the entropy lower
bound in \cref{thm:main-pareto}.  This causal fan-out proof replaces an
invalid distinct-first-witness charging argument.

\paragraph{Upper bound: weak order, then roll.}
Use deterministic multiple selection to partition the $b_i$ into increasing
blocks of size at most $h=r+1$.  Its comparison cost is
$\Gamma_{k,h}+o(\Gamma_{k,h})+O(k)$.  Scan blocks until one crosses the
current hub candidate.  Complete earlier blocks need only inward candidates,
later blocks need only outward candidates, and at most $h-1=r$ elements of
the crossing block need both.  The addition count is at most $k+r$ and the
scan costs $O(k)$ comparisons.  Full scale synchronization, counting, and
rolling proofs appear in
\cref{sec:pareto-lower,sec:bbwo,sec:rolling,sec:pareto-law}.

\begin{figure}[htbp]
\centering
\begin{tikzpicture}[x=0.88cm,y=0.63cm,font=\small]
  \draw[-{Stealth[length=2mm]},thin] (0,0) -- (11.2,0)
    node[below=4pt] {$r$ in budget $k+r$};
  \draw[-{Stealth[length=2mm]},thin] (0,0) -- (0,6)
    node[left=7pt,rotate=90] {$F_k(r)$};
  \draw[black,thick] plot[smooth] coordinates
    {(0.8,5.2) (2.6,4.45) (4.5,3.65) (6.4,2.85) (8.2,2.15) (10.1,1.45)};
  \draw[densely dashed,gray] (0.8,5.2)--(0.8,0);
  \draw[densely dashed,gray] (10.1,1.45)--(10.1,0);
  \fill (0.8,5.2) circle (1.8pt);
  \fill (5.4,3.27) circle (1.8pt);
  \fill (10.1,1.45) circle (1.8pt);
  \node[anchor=west,align=left] at (1.05,5.3)
    {$r=0$\\sorting scale};
  \node[align=center,fill=white,inner sep=1.5pt] at (5.5,4.0)
    {entropy $\Lambda_{k,r}$};
  \node[anchor=east,align=right] at (9.85,0.83)
    {$r=k$\\linear floor};
\end{tikzpicture}
\caption{The proven asymptotic regimes of the $H_k$ tradeoff.  The curve is
schematic; the theorem specifies the deterministic lower-order term.}
\label{fig:main-pareto}
\end{figure}

%% file: scalar_benchmark.tex
\section{From split resources to a same-program benchmark}
\label{sec:scalar-benchmark}

The preceding results expose a distinction that is invisible on DAGs.  There,
one program attains the lower-left corner of the resource rectangle.  On
\(H_2\), however, the best comparison program and the best addition program
cannot be the same program, and on \(H_k\) enforcing the addition optimum can
raise comparison cost from linear to sorting scale.  Hence a single-number
benchmark for the whole numerical computation must charge both resources
along one execution, as in \cref{eq:opt-dist}.

For comparison with the split frontier, define the balanced envelope
\[
 \mathsf{Env}(G,s)=\min_{b\ge0}\bigl(b+C_b^*(G,s)\bigr).
 \tag{66}\label{eq:balanced-envelope}
\]
This quantity chooses one feasible budgeted program; it is not
\(C^*+P^*\).

\begin{proposition}[The split frontier approximates the scalar optimum]
\label{prop:scalar-envelope}
For every finite reachable rooted topology,
\[
 \boxed{\OPT(G,s)\le \mathsf{Env}(G,s)\le2\OPT(G,s).}
 \tag{67}\label{eq:scalar-envelope}
\]
\end{proposition}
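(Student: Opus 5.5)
The plan is to prove the two inequalities separately. Both rest only on the definitions \eqref{eq:worst-costs}--\eqref{eq:opt-dist} and \eqref{eq:balanced-envelope}, together with the fact, noted after \eqref{eq:resource-region}, that finite infima over programs are attained because all costs are integers and a finite Bellman--Ford program exists. That same fact shows $\mathsf{Env}(G,s)$ is finite: the Bellman--Ford program gives a finite budget $b$ with $C_b^*<\infty$.

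\emph{Lower inequality $\OPT\le\mathsf{Env}$.} I will fix a budget $b$ attaining the minimum in \eqref{eq:balanced-envelope}. By attainment of \eqref{eq:budget-optimum}, I will then choose a program $A$ computing \DIST with $P_A(G,s)\le b$ and $C_A(G,s)=C_b^*(G,s)$. For every weighting $w$,
\[
 C_A(G,s,w)+P_A(G,s,w)\le C_A(G,s)+P_A(G,s)\le C_b^*(G,s)+b .
\]
Taking the supremum over $w$ gives $T_A\le \mathsf{Env}$, and $\OPT\le T_A$ finishes this direction. The only point to flag is that the supremum of a sum is at most the sum of the suprema. This is exactly why $\OPT$ can be strictly smaller than the envelope.

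\emph{Upper inequality $\mathsf{Env}\le2\OPT$.} I will take a program $A$ with $T_A(G,s)=\OPT(G,s)$, which is attained for the same integrality reason. For every $w$, nonnegativity of both counts gives $P_A(G,s,w)\le T_A$ and $C_A(G,s,w)\le T_A$. Hence $P_A(G,s)\le\OPT$ and $C_A(G,s)\le\OPT$. Now set $b=P_A(G,s)$. Then $A$ is feasible for budget $b$, so $C_b^*\le C_A(G,s)\le\OPT$, and therefore
\[
 \mathsf{Env}\le b+C_b^*\le P_A(G,s)+C_A(G,s)\le 2\,\OPT .
\]

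The main obstacle, though a mild one, is the measure-theoretic care about suprema. The two worst cases for $C_A$ and $P_A$ may occur on different weightings, which is precisely why the factor $2$ appears and why the comparison must go through each coordinate being bounded by the joint supremum. I would also remark that the factor $2$ cannot be improved by this argument alone. One could note that the $H_2$ example shows the envelope and \OPT are genuinely different objects from $C^*+P^*$, but that is not needed for the proof.
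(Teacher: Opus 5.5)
Your proof is correct and follows the paper's argument essentially step for step. For $\OPT\le\mathsf{Env}$ you bound the same-execution sum of a budget-optimal program by $b+C_b^*$. For $\mathsf{Env}\le2\OPT$ you take a single $\OPT$-attaining program $A$, bound each worst-case coordinate by $\OPT$, and use that same $A$ as the witness for $C_b^*$ with $b=P_A(G,s)$.

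The only differences are cosmetic. You fix a minimizing budget first, whereas the paper bounds $\OPT$ by $b+C_b^*$ for every $b$ and then minimizes. Also, there is nothing measure-theoretic here: the only fact used is that a supremum of a sum is at most the sum of the suprema.
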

\begin{proof}
For any budget \(b\), one program attaining \(C_b^*\) has at most \(b\)
additions and at most \(C_b^*\) comparisons on every input, hence scalar cost
at most their sum.  Minimizing gives the first inequality.

Conversely, let \(A\) attain \(K=\OPT(G,s)\), and put
\(b=\sup_w P_A(w)\), \(c=\sup_w C_A(w)\).  Both are at most \(K\), and the
same program witnesses \(C_b^*(G,s)\le c\).  Therefore
\(
 \mathsf{Env}(G,s)\le b+C_b^*(G,s)\le b+c\le2K.
\)
No step combines coordinate optima belonging to different programs.
\end{proof}

The factor-two relation is useful but not the endpoint.  The next section
gives an exact game for \(\OPT\) itself.  The game charges an addition and a
comparison identically and takes the maximum over feasible comparison outcomes
on the same path, so its value is the scalar optimum rather than the sum of
two coordinatewise maxima.

%% file: universal_main.tex
\section{Exact numerical universal optimality}
\label{sec:main-universal}

The benchmark $\OPT$ allows a different finite program for each topology.
Exact universal optimality asks whether one interpreter can match those
nonuniform programs without paying an extra charged operation.

On a fixed branch, every register is an affine form
$\langle a,w\rangle+\alpha$.  A recession argument scales the input while
holding the finite literal set fixed.  It converts every finite execution to
an equivalent homogeneous execution on the nonnegative orthant: literals
become the zero form, strict comparisons keep their eventual sign, and
equality branches are retained exactly when their homogeneous normals vanish.
The normal form is used only after a finite program and branch are fixed.

In the homogeneous game, $M$ is the finite set of coefficient vectors of
held registers and $\Gamma$ is the sequence of physical comparisons already
made.  They define the transcript region (a relatively open cone with its
equality faces retained explicitly)
\[
 \mathcal P(\Gamma)=\{w\in\mathbb R_{\ge0}^E:
                 w\text{ satisfies every outcome in }\Gamma\}.
\]
A label $v$ is \emph{ready} only if one fixed vector in $M$ equals $d_w(v)$
for every $w$ in this entire cone.  A move either adds two vectors in $M$ or
compares two of them and follows every feasible outcome.  Let
$\WIN_G(M,\Gamma,K)$ mean that all continuations can be made terminal within
$K$ further moves.

\begin{theorem}[Transcript-cone equivalence]
\label{thm:main-game}
For every topology and integer $K\ge0$,
\[
 \boxed{
 \WIN_G(\{0,e_1,\ldots,e_m\},\varnothing,K)
 \quad\Longleftrightarrow\quad \OPT(G,s)\le K.}
\]
\end{theorem}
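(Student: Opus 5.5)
We prove the two implications separately. Both rest on one dictionary between physical programs and game strategies: a program state on a reached branch is represented by the multiset of coefficient vectors of its held registers, together with the sequence of comparison outcomes observed so far.

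\emph{Game to program.} Assume $\WIN_G(\{0,e_1,\ldots,e_m\},\varnothing,K)$. A winning strategy is a finite tree of depth at most $K$, and we translate it node by node into a topology-specialized \CA program.
\begin{itemize}
\item The initial set $\{0,e_1,\ldots,e_m\}$ is realized by the primitive weights together with the literal $0$.
\item An addition move on $a,b\in M$ becomes one physical addition of the registers holding $a$ and $b$ (possibly the same register).
\item A comparison move becomes a physical three-way comparison. Its physical outcome on input $w$ is exactly the sign of $\langle a-b,w\rangle$, and $w$ lies in $\mathcal P(\Gamma)$ by induction, so every outcome that can occur is feasible in the game and has a prescribed continuation.
\item At a terminal node each label $v$ has a fixed vector $a_v\in M$ with $\langle a_v,w\rangle=d_w(v)$ on the whole cone, so outputting the register holding $a_v$ is correct.
\end{itemize}
Every root-to-leaf path has at most $K$ moves, and each move is one charged operation. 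Hence $T_A\le K$ and $\OPT\le K$.

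\emph{Program to game.} Suppose $A$ attains $T_A(G,s)=\OPT\le K$. The obstacle is that $A$ may use literals with nonzero constant terms $\alpha$ and may branch on comparisons whose outcome depends on those constants. We therefore homogenize $A$ by the recession argument already used for \cref{thm:main-addition}, and we do it inside the game by strong induction on the remaining budget.

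At a game state $(M,\Gamma)$ we maintain the invariant that there is a physical branch of $A$ whose held registers have homogeneous parts contained in $M$, and whose remaining cost is at most the remaining budget on every input consistent with that branch.
\begin{itemize}
\item \emph{Addition step.} When $A$ adds two registers, the game makes the corresponding addition move. If an operand is a pure literal (zero form), the sum's homogeneous part is already in $M$, and the move may be skipped or spent harmlessly.
\item \emph{Comparison step.} When $A$ compares $R_1=\langle a,w\rangle+\alpha$ and $R_2=\langle b,w\rangle+\beta$, the game compares $a$ and $b$ and must follow every feasible homogeneous outcome. For an outcome, say $\langle a-b,w\rangle>0$, that is feasible in $\mathcal P(\Gamma)$, choose a point $w_0$ in the relative interior of the refined cone. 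Along the ray $\lambda w_0$ with $\lambda\to\infty$, the sign of $R_1-R_2$ is eventually $>$, so the physical program follows the matching branch for large $\lambda$. The equality outcome is handled the same way: if $a=b$ the physical outcome is the constant sign of $\alpha-\beta$, and the game move is a wasted but harmless identity comparison, or is skipped. If $a\ne b$ and the equality face is feasible, we take $\lambda w_0$ inside that face, where the physical outcome is decided by $\alpha$ versus $\beta$, and that physical branch is followed. In every case the game moves no more often than $A$ charges.
\end{itemize}

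The delicate step, and the main obstacle, is the terminal condition. When $A$ halts, each label $v$ is output by a register $\langle a_v,w\rangle+\alpha_v$ equal to $d_w(v)$ on all inputs reaching that physical branch. We must show that $\langle a_v,w\rangle=d_w(v)$ on the entire game cone $\mathcal P(\Gamma)$, not merely on a scaled subset. The plan is as follows.
\begin{itemize}
\item The set of physical inputs reaching the leaf contains $\lambda\,\mathcal P(\Gamma)\cap\{\text{large }\lambda\}$ near any relative-interior point. This holds because the strict signs are eventually fixed and the equality faces are matched.
\item $d_w$ is positively homogeneous and piecewise linear. Dividing $\langle a_v,\lambda w\rangle+\alpha_v=d_{\lambda w}(v)$ by $\lambda$ and letting $\lambda\to\infty$ gives $\langle a_v,w\rangle=d_w(v)$ on the relative interior of every face region reached.
\item Continuity of $d_w$ then extends the identity to the whole cone.
\end{itemize}
Some care is needed because the cone is relatively open with its equality faces retained. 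Each relevant piece must be full-dimensional within its own face, which the relative-interior choice of $w_0$ provides.

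The resulting game tree has depth at most $K$, so $\WIN_G(\{0,e_1,\ldots,e_m\},\varnothing,K)$ holds, completing the equivalence.
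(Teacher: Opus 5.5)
Your proposal is correct and follows the paper's proof. The game-to-program direction executes a winning strategy literally. The program-to-game direction is the paper's recession normal form (\cref{lem:recession-normal-form}) applied on the fly: strict outcomes follow the sign of the leading difference, equality faces follow the intercept sign, and leaf correctness comes from dividing by the scaling parameter. One simplification: that ray limit already holds at every point of $\mathcal P(\Gamma)$, including orthant-boundary points and the origin, so the relative-interior and continuity steps in your terminal argument are unnecessary.
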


A program induces a game strategy by recording its actually materialized
forms and physical comparison outcomes.  Correctness at a leaf supplies one
fixed ready register per label over the whole cone; otherwise two inputs with
the same transcript would require different materialized outputs.  Conversely,
every game move is a legal physical operation, and readiness supplies legal
output registers.  Recession transfers the two directions between affine and
homogeneous executions, including equality faces, zero weights, parallel
arcs, and zero-weight cycles.  Duplicate forms may be discarded because
recreating one cannot help.

The quantifier over the whole cone is essential.  It prevents a leaf from
selecting one register at one weighting and a different register at another
weighting that produced the same physical transcript.  Conversely, it does
not require one common shortest path: different paths may certify the same
fixed held form at different points of the cone.  Zero-weight cycles add no
output form because every distance has a simple shortest-path representative.
Parallel primitives remain separate initial coordinates.  When a transcript
has no strict outcome, its cone also contains the all-zero input; transcripts
with strict outcomes are handled directly on their relatively open regions,
not by a generic-limit argument.

\begin{corollary}[No numerical uniformity tax]
\label{cor:main-uniform}
There is one deterministic uniform interpreter $U_{\rm pspace}$ such that
\[
 \boxed{T_{U_{\rm pspace}}(G,s)=\OPT(G,s)}
\]
on every finite reachable topology.
\end{corollary}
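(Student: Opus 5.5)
The plan is to build $U_{\rm pspace}$ as an interpreter that plays the transcript-cone game of \cref{thm:main-game} by minimax search and carries out each chosen game move as a physical operation.

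\textbf{Step 1: compute the target value.} On input $(G,s)$, the interpreter first removes unreachable labels. It then computes $K^*=\OPT(G,s)$ as the least $K$ with $\WIN_G(\{0,e_1,\ldots,e_m\},\varnothing,K)$. This search is finite: a finite Bellman--Ford program gives an explicit upper bound $K_0$, so by \cref{thm:main-game} some $K\le K_0$ succeeds, and scanning $K=0,1,\ldots,K_0$ finds the least one.

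\textbf{Step 2: decide $\WIN_G$ without numbers.} $\WIN_G(M,\Gamma,K)$ is evaluated by depth-first recursion. At $K$ remaining moves, the state is a win if either it is terminal, or some move has all its feasible outcomes winning at $K-1$. A move is either an addition of two vectors of $M$, or a comparison of two vectors of $M$ whose feasible outcomes $<,=,>$ must all be winning.

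Each primitive test is decidable by exact linear algebra over $\mathbb Q$ on integer data:
\begin{itemize}[leftmargin=*,itemsep=2pt]
  \item Feasibility of an outcome is a rational LP/cone-feasibility question for a relatively open polyhedral cone with equalities, e.g.\ via strict-inequality Farkas certificates.
  \item Readiness of label $v$ asks whether some $a\in M$ satisfies $\langle a,w\rangle=d_w(v)$ on all of $\mathcal P(\Gamma)$. Since $d_w(v)$ is a minimum of finitely many simple-path forms, this is equivalent to two conditions: (a) $\langle a,w\rangle\le\langle\pi,w\rangle$ for every simple $s$--$v$ path $\pi$ on the cone; and (b) $a$ agrees with some path form on the cone. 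By piecewise linearity, the second condition reduces to checking that the minimum is attained by $a$.
\end{itemize}
Both readiness conditions are finite LP checks, possibly exponentially many but each of polynomial size. Coefficients stay bounded by $2^{K}$ in bit length $O(K)$, and $K\le K_0$ is polynomial in $m$.

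\textbf{Step 3: execute the strategy physically.} The interpreter maintains the real registers alongside their vectors and the actual transcript $\Gamma$. At each step it chooses the canonically first move, under a fixed lexicographic order on moves, that keeps $\WIN$ at budget one less. An addition move performs the addition. A comparison move performs the comparison, and the physical outcome is feasible because the current input $w$ lies in the cone. When the state becomes terminal, the interpreter outputs the ready registers. Correctness holds because readiness holds over the whole cone, which contains $w$, so each output register equals $d_w(v)$.

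\textbf{Step 4: count charged operations.} Every charged operation decreases the remaining budget by one, so $T_U\le K^*=\OPT$. The reverse inequality $T_U\ge\OPT$ holds simply because $U$, restricted to $G$, is a legal program. Hence $T_{U_{\rm pspace}}(G,s)=\OPT(G,s)$.

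\textbf{Main obstacle.} The work is in Step 2: keeping every game primitive exactly decidable in polynomial space. Readiness quantifies over an exponential set of paths, so it must be enumerated lazily. The cones include equality faces and zero weights, so the recession and homogeneous normal form supplied by \cref{thm:main-game} are needed to justify working with coefficient vectors alone; literals are never used by $U$. Recursion depth is at most $K_0$ and each frame is polynomial, which gives polynomial space. Only the exponential time is conceded.
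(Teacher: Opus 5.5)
Your proposal is correct and follows the paper's route. You compute the least winning budget of the transcript-cone game, which equals $\OPT$ by \cref{thm:main-game}, and then repeatedly execute the canonically first move that keeps a win at one less budget; the matching lower bound holds because the interpreter, restricted to a fixed $(G,s)$, is itself a legal specialized program. Your per-path LP test for readiness differs from the paper's Bellman-potential certificate in \cref{lem:ready-conp} only in presentation, and either one gives the decidability, indeed polynomial-space decidability, that the interpreter needs.
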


The interpreter computes the least winning budget and then chooses the first
winning legal move under a canonical ordering.  This is genuinely uniform:
the topology-specific strategy is reconstructed from the finite input graph,
not supplied as advice.

\begin{theorem}[Complexity of policy generation]
\label{thm:main-pspace}
An optimal next action in the transcript-cone game can be generated in
polynomial space.  The direct search may take
$2^{\operatorname{poly}(n+m)}$ ordinary time.
\end{theorem}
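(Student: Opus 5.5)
We prove the theorem by running the minimax recursion of the transcript-cone game as a depth-first search. The search keeps only one stack frame per move along the current branch. The budget is polynomial in $n+m$: Bellman--Ford gives $\OPT(G,s)\le K_0:=O(nm)$, and \cref{thm:main-game} turns this into $\WIN_G(\{0,e_1,\ldots,e_m\},\varnothing,K_0)$. We therefore only ever evaluate $\WIN$ with budget at most $K_0$, and every branch has depth at most $K_0$. Each node stores $M$, $\Gamma$, and the remaining budget. Every vector in $M$ arises from at most $K_0$ additions starting from unit vectors, so its entries are at most $2^{K_0}$ and have bit-length $O(K_0)$. This gives $|M|\le m+1+K_0$, $|\Gamma|\le K_0$, and each frame occupies $\operatorname{poly}(n+m)$ bits. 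The recursion sets $\WIN(M,\Gamma,0)$ to the terminal test. For $K\ge1$, $\WIN(M,\Gamma,K)$ holds if the state is terminal or if some move works. An addition move works when $\WIN(M\cup\{a+b\},\Gamma,K-1)$ holds. A comparison move works when every feasible outcome $o\in\{<,=,>\}$ of $\langle a-b,w\rangle$ satisfies $\WIN(M,\Gamma\cdot o,K-1)$. The search enumerates moves in the canonical order and reuses space across siblings. Total space is depth times frame size plus the space of the leaf tests, which is polynomial. The optimal next action is the first move whose recursive value is winning at budget $K-1$, where $K$ is the least winning budget found by scanning $K=0,1,\ldots,K_0$.

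The leaf tests must be done in polynomial space. There are two. \emph{Feasibility} asks whether $\mathcal P(\Gamma\cdot o)\neq\varnothing$. This is a linear-feasibility question over the nonnegative orthant with strict, equality, and reversed constraints whose integer coefficients have polynomial bit-length. It is solvable in polynomial time: handle the strict inequalities by homogeneity, requiring $\langle c,w\rangle\ge1$ after scaling, which is legitimate because the constraint set is a cone. \emph{Readiness} of label $v$ asks whether some fixed $a\in M$ has $\langle a,w\rangle=d_w(v)$ on the whole cone. For each candidate $a$ (polynomially many), this splits into two parts. First, $\langle a,w\rangle\le\langle\chi_P,w\rangle$ must hold for every simple $s$--$v$ path $P$ and every $w$ in the cone. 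Second, on the whole cone, $\langle a,w\rangle$ must be attained by some path.

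Readiness is the main obstacle. The first part is a universally quantified statement over exponentially many paths, but it can be checked one path at a time by enumerating paths in polynomial space. Each individual check is an LP: test whether the cone intersected with $\{\langle a-\chi_P,w\rangle>0\}$ is empty. The second part is harder, because it says the cone is covered by the finitely many closed regions $\{\langle a-\chi_P,w\rangle\le 0\}\cap\mathcal P$. I would first try to reduce it to a statement about the coefficient vector itself. Since $a\in\Nzero^E$ and $d_w(v)$ is the minimum of the path linear forms, the equality $\langle a,w\rangle=\min_P\langle\chi_P,w\rangle$ on a relatively open cone (with equality faces retained) should force $a$ to agree with a single $\chi_P$ modulo the span of the equality normals of $\Gamma$. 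The argument is that a piecewise-linear concave minimum which is linear on a relatively open set coincides there with one piece: restrict to the affine hull of the cone, note that the regions are finitely many polyhedra covering an open set, so one of them has nonempty interior, and linearity then propagates equality of that piece to the whole hull. Under this reduction, readiness for $a$ becomes: for some simple path $P$, $a-\chi_P$ lies in the span of $\Gamma$'s equality normals, and the minimality LP checks of the first part succeed. Both conditions can be checked in polynomial space by enumerating $P$ and doing linear algebra and LPs.

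The time bound follows because the recursion has branching factor $\operatorname{poly}(n+m)$ and depth $K_0$, so it visits $2^{\operatorname{poly}(n+m)}$ nodes, and the path enumerations inside the leaf tests are also at most exponential. The claim is only that this direct search may take that long, so no matching lower bound is needed.
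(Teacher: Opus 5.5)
The step that fails is your reduction of the second readiness condition to ``$a-\chi_P$ lies in the span of the equality normals of $\Gamma$.'' The covering argument itself is sound. $\mathcal P(\Gamma)$ is a nonempty convex cone, so its relative interior is open in $\spanof\mathcal P(\Gamma)$. Finitely many sets $\{w\in\mathcal P(\Gamma):\langle a-\chi_P,w\rangle=0\}$ can cover it only if some $a-\chi_P$ is orthogonal to $\spanof\mathcal P(\Gamma)$.

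However, $\spanof\mathcal P(\Gamma)$ is in general smaller than the null space of the recorded equality normals $N_\Gamma$. By \cref{eq:transcript-cone} the cone lies in the closed orthant, and nonnegativity together with the recorded equalities can force further coordinates to vanish. Hence $(\spanof\mathcal P(\Gamma))^\perp=\spanof N_\Gamma+\spanof\{e_f:w_f\equiv0\text{ on }\mathcal P(\Gamma)\}$, which can be strictly larger than $\spanof N_\Gamma$.

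Here is a concrete instance. Take arcs $e_3=(s,x)$, $e_1=(x,v)$, $e_2=(s,y)$; add $e_1+e_2$ and compare it with the zero form. On the feasible equality branch, $\mathcal P(\Gamma)=\{w\ge0:w_1=w_2=0\}$, while $\spanof N_\Gamma=\spanof\{e_1+e_2\}$. The state with held forms $\{0,e_1,e_2,e_3,e_1+e_2\}$ is $\Ready$: $x$ and $v$ are served by $e_3$, since $d_w(s,v)=w_3+w_1=w_3$ on the whole cone, and $y$ is served by $e_2$. Yet your test rejects $v$: its only path has $\chi_P=e_3+e_1$, and no held form differs from it by an element of $\spanof\{e_1+e_2\}$.

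So your terminal predicate implies $\Ready$ but not conversely. The recursion therefore evaluates a harder game whose least winning budget is only shown to be $\ge\OPT$, and the action it returns is not shown to be optimal. Separately, the covering sets in your second part should be $\{\langle a-\chi_P,w\rangle\ge0\}$; with $\le0$ the condition is already implied by your first part.

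The repair uses tools you already have. Replace the linear-algebra membership test by the requirement that both $\mathcal P(\Gamma)\cap\{\langle a-\chi_P,w\rangle>0\}$ and $\mathcal P(\Gamma)\cap\{\langle a-\chi_P,w\rangle<0\}$ are empty, which is two strict-LP feasibility tests. Alternatively, first find the implicitly zero coordinates with one LP per coordinate, then test membership in the enlarged space above.

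With this change your proof is correct, and it differs from the paper only at the terminal test. The paper does not characterize readiness structurally. Instead it shows that non-readiness has a short certificate (\cref{lem:ready-conp}): one label, and for each held form a transcript-compatible rational weighting with Bellman potentials tight along a simple path on which that form is wrong. This places $\Ready$ in coNP, hence in polynomial space.

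Your route is deterministic and gives an explicit description of readiness: some held form coincides on the cone with one fixed simple path and is dominated there by every path. The price is the single-linear-piece lemma and an exponential path enumeration, which the stated time bound tolerates anyway. The remaining components match the paper: the Bellman--Ford cap, the bit bounds, LP feasibility via scaling, depth-first minimax, and scanning $K$.
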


Bellman--Ford gives a polynomial charged-budget cap.  Within that cap,
coefficient entries and the held-form set have polynomial encodings, and
there are polynomially many candidate moves at a state.  Transcript
feasibility is linear programming.  Nonreadiness has a polynomial certificate
consisting, for each held candidate, of a cone point and a strictly better
simple-path form, so readiness lies in coNP.  Depth-first minimax stores only
one polynomial-size state per recursion level.  These claims concern space,
not time; the detailed normal-form, readiness, and search proofs are in
\cref{sec:universal-numerical,app:recession,app:ready-pspace}.

%% file: efficient_main.tex
\section{Efficient uniformity}
\label{sec:main-efficient}

The ratio-one interpreter settles existence but not efficient navigation.
A topology-only reduction relates the best current polynomial-time bound to
the same benchmark.

Delete every arc $(u,v)$ for which $v$ dominates $u$; such an arc only closes
a nonnegative subwalk.  Reduce each remaining parallel source class to its
minimum, at total comparison cost
$\sigma=\sum_v(p_v-1)_+$.  Let $E_\circ$ be the remaining primitives with
nonsource tail and $k=|E_\circ|$.  The canonical core $H$ consists of their
endpoints, the primitives of $E_\circ$, and one virtual source register for
each retained source class.  Then $|V(H)|\le2k+1$ and $|E(H)|\le3k$.

\begin{theorem}[Active-core factorization]
\label{thm:main-core}
\[
 \boxed{
 \OPT(H,q)\le\OPT(G,s)
 \le\sigma+\OPT(H,q)
 \le2\OPT(G,s).}
\]
Moreover, $\sigma+k\le2\OPT(G,s)$.
\end{theorem}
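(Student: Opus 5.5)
We prove the four inequalities separately: two reductions in opposite directions between $G$ and $H$, and one lower bound, $\sigma+k\le 2\OPT(G,s)$, that drives the last inequality.

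\textbf{Step 1: deleting dominated arcs.} First we show that deleting an arc $(u,v)$ for which $v$ dominates $u$ changes no distance. Every $s$--$u$ path passes through $v$. Hence any path that uses $(u,v)$ contains a closed subwalk through $v$, and removing that subwalk cannot increase the weight. So $\DIST$ on $G$ equals $\DIST$ on the pruned graph, for every $w\ge0$. Vertices outside the core (heads of source arcs only, with no active out-arcs) have distance equal to the minimum of their source class.

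\textbf{Step 2: $\OPT(H,q)\le\OPT(G,s)$.} Given an optimal program $A$ for $G$, restrict inputs so that each deleted or dominated arc receives a large weight, or is set by the game so as to be irrelevant. Substitute for each virtual source register of $H$ the primitive realizing its class minimum, padding the other parallel source primitives with large values. Then $A$ computes $\DIST(H)$ on that input family. The charged cost only decreases along this specialization. For the lower bound we need $A$ to run on genuine inputs of $G$, so we embed $H$-inputs as $G$-inputs. Each virtual register is one chosen primitive, and the others in its class are set to a huge constant, which the recession argument turns into dominated coordinates. This yields an $H$-program of no greater worst-case cost.

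\textbf{Step 3: $\OPT(G,s)\le\sigma+\OPT(H,q)$.} Compute each source-class minimum with $(p_v-1)_+$ comparisons, then run the optimal $H$-program. Non-core labels are exactly these source minima, so no further work is needed, and deleted arcs are harmless by Step 1.

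\textbf{Step 4: $\sigma+k\le2\OPT(G,s)$.} This gives the last inequality, since Step 2 yields $\sigma+\OPT(H,q)\le \sigma+\OPT(G,s)$, and $\sigma\le\OPT$ follows from Step 4. The plan is to prove separately that $\sigma\le\OPT(G,s)$ and $k\le\OPT(G,s)$, which together give $\sigma+k\le 2\OPT$.
\begin{itemize}
\item For $\sigma\le\OPT$, use a potential-tie or generic input on which every source class is the only route to its head. Each class of size $p$ then needs $p-1$ comparisons (the equality-rank argument of the DAG theorem). Classes at distinct heads are disjoint, so the counts add.
\item For $k\le\OPT$, we charge one operation per active arc $e=(u,v)$ with $u\ne s$. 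Non-domination means $v$ has an $s$--$v$ path avoiding $u$. We choose a weighting near the tie where the path through $e$ and the alternative path both compete for $d(v)$. On that input the program must either add $w_e$ into some register or compare to exclude it. We then argue that these charges are distinct across arcs, in the spirit of the fiber-opening argument.
\end{itemize}

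The main obstacle is this $k\le\OPT$ bound. A single comparison between aggregates may exclude several arcs at once, and several arcs may be handled on different branches. I would try a combined counting argument: additions open new arc-fibers as in \cref{thm:main-addition}, and each arc not absorbed into a fiber must appear in an equality normal of the tie input. The equality-rank argument then bounds the number of such arcs by the number of comparisons, so additions plus comparisons is at least $k$ on that single execution.
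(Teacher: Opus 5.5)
\textbf{Step 4 has a genuine gap.} Your plan splits the final bound into $\sigma\le\OPT(G,s)$ and $k\le\OPT(G,s)$, but the second claim is false.

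Here is a counterexample. Take vertices $s,x,y$ with single source arcs $s\to x$ (weight $a$) and $s\to y$ (weight $b$). Add $p\ge3$ parallel primitives $x\to y$ (weights $f_j$) and $p$ parallel primitives $y\to x$ (weights $g_j$). Neither of $x,y$ dominates the other, so all $2p$ nonsource arcs are active: $k=2p$ and $\sigma=0$. Now consider the following program. Compare $a$ with $b$. If $a\le b$, output $d_x=a$, select $\min_j f_j$ with $p-1$ comparisons, form $a+\min_j f_j$, and compare it with $b$. Handle $b<a$ symmetrically. This uses $p+2<2p$ charged operations, so $\OPT(G,s)<k$.

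Your combined fiber/equality-rank count cannot reach $k$ for a structural reason. A potential-tie input is built from one rooted order $\prec$. Every arc that is backward in $\prec$ carries slack and is never charged on that execution. What one execution actually witnesses is
\[
 L(\prec)=(m_\prec-n+1)+(q_\prec-d_s)\;\ge\;\sigma+f_\prec ,
\]
where $f_\prec$ counts forward active nonsource primitives and the inequality uses $q_\prec\ge n-1$. In the example, $f_\prec=p=k/2$ for both rooted orders. Charging different arcs on different inputs does not help, because $\OPT$ is the supremum of the same-execution sum.

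\textbf{The missing idea} is a global choice of order that makes half of the active arcs forward simultaneously. Your per-arc use of non-domination ($v$ has an $s$--$v$ path avoiding $u$) does not supply this. The paper takes a low-high preorder $\delta$ of the dominator tree and the preorder $\bar\delta$ obtained by reversing every child list. Both are rooted, by the low-high property. Activity means the head of an arc is never a dominator-tree ancestor of its tail. Hence each active nonsource arc is forward in both orders when its tail is an ancestor of its head, and in exactly one order when the endpoints are incomparable. Averaging gives an order with $f_\prec\ge k/2$, so
\[
\sigma+k/2\le\max_\prec L(\prec)\le\OPT(G,s).
\]
This yields both $\sigma\le\OPT$ and $\sigma+k\le2\OPT$.

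\textbf{Step 2 also needs repair.} Padding the other parallel source primitives with a huge literal is incorrect. Weights are unbounded, so whenever $a_v$ exceeds that literal, the $G$-program returns the literal as the class minimum. Also, the recession argument sends literals to zero rather than making them dominant. Instead, alias the virtual register $a_v$ to every primitive of its class. This is free and makes the class minimum exactly $a_v$. Inactive and source-only primitives can simply be set to zero.
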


The simulations merely set deleted inputs to zero or alias source-class
minima, so they preserve arbitrary mixed-sum competitors.  The quantitative
charge is same-input: two opposite low--high dominator orders ensure that one
rooted order exposes enough active edges, while the potential-tie construction
forces the corresponding comparisons and additions on a single execution.

More precisely, for a rooted order $\prec$, let $m_\prec$ and $q_\prec$ be
the forward primitive-arc and endpoint-class counts.  The potential-tie input
forces at least $m_\prec-n+1+q_\prec-d_s$ charged operations on one
execution.  A low--high preorder of the dominator tree and the preorder
obtained by reversing every child list place each active nonsource arc forward
in at least one of the two orders, except for the explicitly charged source
multiplicities.  Averaging gives $\sigma+k\le2\OPT$.  This is stronger than
adding separate worst-case comparison and addition lower bounds.

Using the deterministic directed-SSSP algorithm of Duan, Mao, Shu, and
Yin~\cite{duan2026faster} on this core gives the current efficient endpoint.

\begin{theorem}[Efficient topology-wise bound]
\label{thm:main-efficient}
There is one explicit deterministic uniform exact-\DIST algorithm with
$O(m+n)$ topology preprocessing and
\[
 \boxed{
 T_U(G,s)=O\!\left(
  \OPT(G,s)\sqrt{\log_+(2+\OPT(G,s))
                       \log\log_+(4+\OPT(G,s))}
 \right).}
\]
Its ordinary query work is $O(m+n+C_U+P_U)$ under the stated bookkeeping
convention.
\end{theorem}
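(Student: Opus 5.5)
The algorithm applies the active-core reduction of \cref{thm:main-core} and runs the deterministic directed-SSSP algorithm of Duan, Mao, Shu, and Yin~\cite{duan2026faster} on the core $H$, never on $G$. The cost is then expressed through the core size $k$ and the source-class charge $\sigma$, both of which are bounded by $2\OPT(G,s)$.

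\emph{Preprocessing.} Compute the dominator tree of $(G,s)$ in $O(m+n)$ time with a linear-time dominator algorithm. Delete every arc $(u,v)$ for which $v$ dominates $u$; dominance queries are answered in constant time using pre/post numbering of the tree. Group the remaining source arcs by head by bucketing, and record $E_\circ$ and its endpoint set. This builds the topology of $H$ with $|V(H)|\le 2k+1$ and $|E(H)|\le 3k$. Nothing here touches weights, so no charged operations occur.

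\emph{Query.} Perform the $\sigma=\sum_v(p_v-1)_+$ comparisons that reduce each source class to its minimum; these minima become the virtual source registers. Then run the Duan--Mao--Shu--Yin algorithm on $H$. It is a comparison--addition algorithm, with running time $O(m'\sqrt{\log n'\log\log n'})$ for $m'$ arcs and $n'$ vertices, so $m',n'=O(k)$ gives $O(k\sqrt{\log(2+k)\log\log(4+k)})$ charged operations (adding small constants inside the logarithms to handle $k\le 1$).

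\emph{Recovering \DIST on $G$.} Every vertex $v$ of $G$ outside $H$ is reached only by source arcs, since all nonsource arcs into $v$ were deleted as dominator-inactive or would place $v$ in $H$. So $d(v)$ is the already-held source-class minimum, and output is just aliasing. For vertices of $H$, the simulation argument behind $\OPT(G,s)\le\sigma+\OPT(H,q)$ in \cref{thm:main-core} shows that core distances equal $G$-distances: deleted arcs only close nonnegative subwalks. This step needs care, and I would reuse that simulation rather than reprove it.

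\emph{Accounting.} The total is $T_U\le\sigma+O(k\sqrt{\log(2+k)\log\log(4+k)})$. The moreover clause of \cref{thm:main-core} gives $\sigma+k\le 2\OPT$, and $x\mapsto x\sqrt{\log(2+x)\log\log(4+x)}$ is monotone, so substituting $k\le 2\OPT$ and absorbing constants yields the boxed bound. Ordinary work is $O(m+n)$ for aliasing and output plus the RAM work of the core algorithm. Under the bookkeeping convention that this work is proportional to its charged operations, it is $O(C_U+P_U)$, giving $O(m+n+C_U+P_U)$.

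\emph{Main obstacle.} The hardest step is making sure the cited algorithm's comparison--addition count, not just its RAM time, is $O(m'\sqrt{\log n'\log\log n'})$ in this charged model, including behaviour on a degenerate tiny core. I would first confirm that the algorithm is comparison--addition based, as stated in \cite{duan2026faster}. I would then pad small cores to constant size.
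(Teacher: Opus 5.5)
Your proposal is correct and takes essentially the paper's route: linear-time dominator and core construction, then the $\sigma$ source-class comparisons, then the Duan--Mao--Shu--Yin routine on $H$ with $M,N=O(k)$, and finally the same-input charge $\sigma+k\le2\OPT(G,s)$ to absorb both terms. The step you flag as the main obstacle is immediate, because each charged operation is one RAM step and so the black box's running time already bounds its comparison--addition count; an empty core ($k=0$) should simply be skipped rather than padded. For the query-work clause, the paper does not assume the black box's work is proportional to its charged operations; it pairs each residual weight-dependent step with a harmless comparison, which keeps $T_U$ within the same bound.
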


The external SSSP routine is used only as a black box; the new statement is
the active-core lifting to a topology-sensitive numerical benchmark.  Thus
exact optimal policy existence is settled, while a polynomial-time
constant-competitive navigation rule remains open.  Full deletion,
virtual-source, dominator-order, and accounting arguments are in
\cref{sec:efficient-universality,app:active-core}.

%% file: related_work.tex
\section{Related work}
\label{sec:related}

\paragraph{Universal optimality and output contracts.}
Haeupler, Hlad\'{i}k, Rozho\v{n}, Tarjan, and T\v{e}tek prove that Dijkstra,
with a suitable beyond-worst-case heap, is universally optimal for producing
vertices in distance order~\cite{haeupler2024universal}.  Van der Hoog,
Rotenberg, and Rutschmann give a simpler construction and
analysis~\cite{vanderhoog2025simpler}.  Their benchmark distinguishes distance
orders as outputs.  Our topology-specific benchmark instead asks only for
materialized labeled distances; \cref{app:dist-do-separation} makes the
boundary explicit.

\paragraph{Directed numerical SSSP below sorting.}
Duan et al. broke the sorting barrier for deterministic directed SSSP in the
comparison--addition setting~\cite{duan2025breaking}.  The subsequent bound of
Duan, Mao, Shu, and Yin~\cite{duan2026faster} is the efficient black box in
\cref{sec:efficient-universality}.  These works give uniform worst-case
running-time algorithms; they do not characterize the fixed-topology
same-program optimum \(\OPT\).  Our lifting theorem is not a new construction
of their collective shortest-path routine.

\paragraph{Algebraic path complexity.}
Mahr gives graph-sensitive semiring-operation lower bounds for fixed-network
path problems and tight adapted schemes on cycle-free graphs
\cite{mahr1982}.  Jerrum and Snir study exact straight-line semiring
complexity~\cite{jerrumSnir1982}, and Jukna develops tropical-circuit and
dynamic-program lower bounds~\cite{jukna2015}.  These are important
predecessors, but their program/output requirements differ from adaptive
comparison--addition programs computing only explicit single-source numerical
coordinates.  Our arbitrary-topology addition proof works directly with one
adaptive branch and does not convert it to a path circuit.

\paragraph{Comparison--addition and relaxation models.}
Spira and Pan study verification lower bounds for shortest paths
\cite{spiraPan1975}; Pettie studies comparison--addition APSP
\cite{pettie2002}; and Pettie and Ramachandran study real-weighted undirected
SSSP~\cite{pettieRamachandran2005}.  Eppstein analyzes nonadaptive relaxation
sequences~\cite{eppstein2023relaxation}, and Atalig et al. give lower bounds
for an adaptive relaxation-based model~\cite{atalig2024adaptive}.  Our
competitors may compare arbitrary held aggregates and use mixed sums,
equality, literals, and cross-output sharing; conversely, our conclusions are
specific to fixed directed topologies and labeled \DIST.

\paragraph{Selection and partial orders.}
The upper side of the \(H_k\) law uses deterministic multiple selection
\cite{kaligosi2005multiple} and the related partial-order-production framework
\cite{cardinal2010partial}.  Their deterministic guarantees have the form
information lower bound plus a lower-order information term and $O(k)$; we do
not strengthen that to a uniform additive-$O(k)$ statement.  No novelty is
claimed for these primitives.  The contribution is the SSSP reduction showing
that an addition budget induces banded-permutation entropy and that a
bounded-width weak order supports the rolling computation.

\paragraph{Structural decompositions.}
Tarjan's dominator strong-component framework decomposes algebraic path
problems~\cite{tarjan1981path}.  Linear-time dominator algorithms support our
topology-only active reduction~\cite{alstrup1999dominators}.  Recent
Acyclic-Connected-tree work gives topology-dependent algorithms for SPT and
DIST under a different decomposition objective~\cite{stefansson2025acyclic}.
Our canonical core is used only to preserve and lift the unrestricted scalar
benchmark.

\begin{table}[htbp]
\centering
\scriptsize
\begin{tabularx}{\textwidth}{@{}>{\raggedright\arraybackslash}p{0.21\textwidth}>{\raggedright\arraybackslash}p{0.28\textwidth}Y@{}}
\toprule
Closest line of work & Its benchmark or restriction & Boundary of the present claims \\
\midrule
Haeupler et al.; van der Hoog et al.
& topology-wise universal optimality for producing a distance order
& labeled \DIST only; exact charged numerical optimum rather than an
order-sensitive comparison benchmark \\
Duan et al. 2025; Duan et al. 2026
& uniform worst-case running time for directed numerical SSSP
& fixed-topology resource identities and lifting to $\OPT$; no novelty claim
for their SSSP routines \\
Mahr; Jerrum--Snir; Jukna
& straight-line semiring or tropical circuit complexity
& adaptive comparisons, affine literals, mixed held sums, and explicit
labeled register output \\
Pettie; Pettie--Ramachandran
& comparison--addition upper bounds for APSP or undirected SSSP
& directed fixed-topology lower bounds and the two-resource feasible region \\
Eppstein; Atalig et al.
& nonadaptive or adaptive relaxation-based lower bounds
& competitors may compare arbitrary held aggregates and are not restricted to
relaxations \\
Multiple selection and partial-order production
& information-efficient production of selected ranks or a target partial order
& standard upper-bound primitive; the addition-budget reduction and causal
fan-out lower bound are graph-specific \\
\bottomrule
\end{tabularx}
\caption{Scope comparison with the closest prior frameworks.}
\label{tab:prior-scope}
\end{table}

Our novelty statements are correspondingly narrow.  We claim the displayed
fixed-topology identities and separations in the stated \CA model, not priority
for comparison--addition shortest paths, graph-sensitive algebraic bounds,
selection, or universal optimality as general ideas.  A targeted literature
search through September 2026 located no source stating the same
$\rhoF$ identity, cyclic nonrectangle, arithmetic-budget entropy law, or
transcript-cone equivalence; this negative search is evidence for positioning,
not a proof of priority.

%% file: discussion.tex
\section{Interpretation of the resource theory}
\label{sec:discussion}

Three distinctions organize the results.  The first is between output
contracts.  A comparison lower bound for producing a distance order need not
apply to labeled \DIST, because the latter may materialize all coordinates
without certifying their relative order.  Conversely, the shared-hub family
shows that removing the order from the output does not remove ordering from
every optimal computation: a tight arithmetic budget can force a transcript
with sorting-scale information.  The relevant order is therefore endogenous
to the resource constraint, not prescribed by the task.

The second distinction is between coordinatewise and same-program
optimization.  On a DAG, the potential-tie input makes the comparison and
addition lower bounds coexist and a topological program attains their common
corner.  The graph $H_2$ is the smallest audited obstruction to extrapolating
that geometry through cycles.  Once the feasible region is nonrectangular,
$C^*+P^*$ ceases to describe any physical algorithm.  The benchmark $\OPT$
repairs the quantifiers: it adds the two charges on an execution, takes the
worst weighting for that program, and only then minimizes over programs.  The
factor-two envelope in \cref{prop:scalar-envelope} is useful for intuition,
but the transcript-cone game is needed for an exact statement.

The third distinction is between existence and navigation.  The ratio-one
interpreter is not merely a nonuniform optimal policy restated.  Its canonical
minimax search reconstructs the next topology-specific action from the graph,
the held forms, and the observed cone.  This removes every charged-operation
advantage of nonuniform advice.  It does not compress the search needed to
find that action.  The active-core theorem addresses the latter problem by
shrinking the numerical instance to size $O(\OPT)$ and invoking a current
uniform SSSP routine, at the price of a sublogarithmic factor.  A future
constant-competitive algorithm would have to navigate enough of the exact
game implicitly, or find a different certificate of a good move.

The proof methods follow the same hierarchy.  The addition theorem is
algebraic and arbitrary-topology: kernel perturbations and quotient fibers
exclude mixed-sum shortcuts.  The $H_k$ lower bound is causal: support
connectivity limits cumulative exposure, while fan-out turns a hidden pair
into more unmaterialized labeled outputs than the remaining addition budget
can create.  The universal theorem is semantic: readiness is quantified over
an entire transcript cone rather than over one representative weighting.
These mechanisms are complementary; none is presented as a generic lower
bound for every computational model of shortest paths.

Several natural extensions remain outside the present scope.  Randomized
programs would require distributional versions of both the transcript-cone
game and the $H_k$ threat family.  Allowing subtraction or scalar
multiplication destroys the monotone support arguments used for additions.
Replacing independent arc weights by correlated or succinctly encoded inputs
changes both the coefficient geometry and the appropriate cost measure.  Even
within the present model, an exact comparison characterization for arbitrary
cyclic topologies is open.  The current paper therefore supplies a complete
addition coordinate, exact and asymptotic cyclic witnesses, and an exact
scalar game---not a closed formula for every two-resource frontier.

%% file: limitations.tex
\section{Limitations and open problems}
\label{sec:limitations}

\paragraph{Model scope.}
The results concern deterministic finite comparison--addition programs on
independent nonnegative real primitive weights.  They do not provide
randomized lower bounds, bit-complexity bounds for encoded numbers, or
arithmetic results in a model with free subtraction or scalar multiplication.
The requirement that each labeled distance be an actually held register is
essential.

\paragraph{Scope of the resource characterizations.}
The addition coordinate is exact on every topology and the full region is
exact on DAGs.  The entropy-tight Pareto law is a theorem for the shared-hub
family \(H_k\), not an arbitrary-topology frontier formula.  Deterministically
we prove a lower-order entropy overhead plus $O(k)$, not a uniform
additive-$O(k)$ correction to $\Lambda_{k,r}$.  Exact comparison complexity on
general cyclic topologies remains open.

\paragraph{Computer-assisted boundary.}
The endpoint assertion \(C_2^*(H_2)=5\) uses an exact finite exhaustive
certificate after a mathematical reduction to homogeneous coefficient states.
The verifier checks every legal addition and comparison action and validates
cone feasibility or infeasibility by integer certificates; see
\cref{app:h2-obstruction}.  The other principal theorems have symbolic proofs
independent of finite regression.

\paragraph{Numerical versus efficient universal optimality.}
The transcript-cone interpreter proves ratio one only for charged additions
and comparisons.  Its exact action search may take exponential ordinary time,
and the PSPACE theorem is not a claim of practicality.  Conversely, the
active-core algorithm is efficient but retains a sublogarithmic competitive
factor.

\begin{openproblem}[Efficient exact uniformity]
\label{open:efficient-uniformity}
Is there an explicit uniform deterministic algorithm that, on every rooted
directed topology, computes exact labeled \DIST with
\[
 \operatorname{RAM}_U(G,s,w)=O\!\left(m+n+\OPT(G,s)\right)
\]
for all nonnegative weightings, and hence uses \(O(\OPT(G,s))\) charged
numerical operations?
\end{openproblem}

The ratio-one theorem changes the interpretation of this question.  The
remaining obstacle is not existence of an optimal numerical policy, nor the
size of an unfolded policy tree.  It is fast, implicit navigation of a policy
from the topology and the observed transcript.

%% file: conclusion.tex
\section{Conclusion}
\label{sec:conclusion}

Removing distance ordering from exact directed SSSP exposes a distinct
topology-specific theory.  Arithmetic materialization has the exact invariant
\(\rhoF\); DAGs align comparison and addition costs; and a minimal active
cycle breaks that alignment.  In the shared-hub family, saving additions
forces enough internal information to recover sorting scale, yielding an
entropy-tight Pareto law although the requested output remains a labeled
vector.

That conflict identifies the right total benchmark: the same-program
worst-case sum \(\OPT\).  Transcript cones then show that one fixed uniform
interpreter matches every specialized numerical optimum exactly, with a
polynomial representation of the online state.  Nonuniformity therefore
offers no advantage in charged numerical operations.  What it can still offer
is efficient access to the right next operation.  Canonical active-core
lifting and collective directed SSSP give the current sublogarithmic
competitive factor, leaving \cref{open:efficient-uniformity} as the central
question: can exact numerical universal optimality be navigated in linear
topology overhead and constant amortized ordinary work per charged action?

%% file: topology_parameter.tex
\section{The forward topology parameter}
\label{sec:topology}

For distinct vertices $u,v$, the endpoint class
\[
 E_{uv}=\{e\in E:\operatorname{tail}(e)=u,
                      \operatorname{head}(e)=v\}
\]
collects parallel arcs.  Let $\overline E$ be the set of nonempty endpoint
classes and let $q=|\overline E|$.  The total number of arcs is $m=|E|$.
For a class $E_{uv}$, its minimum input value can be selected with
$|E_{uv}|-1$ comparisons and no additions.  A source-out class is already a
raw input candidate; a non-source tail requires a path-extension addition when
its tail is settled before its head.

\begin{definition}[Rooted order]
\label{def:rooted-order}
A total order $\prec$ of $V$ is \emph{rooted} at $s$ if $s$ is first and every
$v\ne s$ has an endpoint class $(u,v)$ with $u\prec v$.
\end{definition}

For a rooted order define
\[
 f_G(\prec)=
 \bigl|\{(u,v)\in\overline E:u\ne s,\ u\prec v\}\bigr|,
 \qquad
 \rhoF(G,s)=\max_{\prec\text{ rooted}} f_G(\prec).
 \tag{16}\label{eq:rho}
\]
The maximum is finite.  On a simple loopless graph, the related arc-level
maximum-forward count differs by the source out-degree; endpoint classes are
the right form for multigraph addition accounting.

\begin{lemma}[DAG specialization]
\label{lem:dag-rho-new}
If $G$ is a reachable directed acyclic multigraph and $d_s$ is its number of
source-out endpoint classes, then
\[
 \rhoF(G,s)=q-d_s.
 \tag{17}\label{eq:dag-rho}
\]
\end{lemma}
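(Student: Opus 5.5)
The plan is to prove the two inequalities separately, working throughout with endpoint classes rather than primitive arcs.

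For the upper bound $\rhoF\le q-d_s$: in any rooted order $\prec$, each endpoint class $(u,v)$ with $u\ne s$ is either forward or not. Hence $f_G(\prec)$ is at most the number of nonsource-tail classes. In a loopless multigraph every class has distinct endpoints, so the nonsource-tail classes are exactly the $q$ classes minus the $d_s$ source-out classes. This gives $f_G(\prec)\le q-d_s$ for every rooted order, and therefore $\rhoF(G,s)\le q-d_s$. Nothing beyond the definition of \eqref{eq:rho} is needed here.

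For the lower bound I will exhibit one rooted order in which every class is forward. Since $G$ is acyclic, it has a topological order, but I must make sure $s$ comes first. I will argue that $s$ has no incoming arc. If some arc $(u,s)$ existed, then $u$ would be reachable from $s$, and the path $s\to\cdots\to u\to s$ would be a directed cycle. The only degenerate case is a self-loop at $s$, which is a cycle and is therefore excluded by acyclicity (the paper's self-loop convention also makes it benchmark-neutral). So $s$ is a source, and I can take a topological order beginning with $s$: for example, run the standard Kahn procedure and pop $s$ first.

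It remains to check rootedness and count. Every $v\ne s$ is reachable from $s$, so it has some incoming arc $(u,v)$. Topologicality gives $u\prec v$, so the order is rooted. In this order every class $(u,v)$ satisfies $u\prec v$, hence $f_G(\prec)$ equals the number of classes with $u\ne s$, namely $q-d_s$. Combined with the upper bound, this yields \eqref{eq:dag-rho}.

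I expect no real obstacle. The only points needing care are that $s$ can be placed first, which uses reachability together with acyclicity, and that the count is over endpoint classes, so parallel arcs are not double-counted. The proof should state both points explicitly.
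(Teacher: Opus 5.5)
Your proof is correct and follows the same argument as the paper. The upper bound counts all endpoint classes whose tail is not $s$, and the lower bound takes a topological order, which is rooted and makes every such class forward; your check that $s$ has no incoming arc, so the order begins at $s$, supplies a detail the paper leaves implicit.
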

\begin{proof}
A topological order is rooted and makes every non-source endpoint class
forward, giving $q-d_s$.  No order can count more than all those classes.
\end{proof}

\begin{lemma}[Realizing a rooted order]
\label{lem:realize-rooted-order-new}
Every rooted order $s=v_1\prec\cdots\prec v_n$ is the strict Dijkstra
settlement order of some strictly positive weighting.  On that weighting every
forward endpoint class is processed before its head is settled.
\end{lemma}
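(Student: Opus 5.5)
The plan is to build the weighting directly from the order, using a Dijkstra tie-free construction that separates successive vertices by geometrically growing gaps. First give vertex $v_i$ the target potential $\phi(v_i)=i$ for $i=1,\dots,n$, so $\phi(s)=1$ is the minimum. We will arrange that $d_w(s,v_i)=\phi(v_i)-1=i-1$. Then the distances are strictly increasing along the order, and every nonsource distance is positive.

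The weights are set class by class, with all primitives of a class $E_{uv}$ receiving the same weight.
\begin{itemize}[leftmargin=*,itemsep=2pt]
  \item If $u\prec v$ and the class is $v$'s designated earliest rooted in-class, set $w_e=\phi(v)-\phi(u)$, which is a positive integer.
  \item For every other forward class ($u\prec v$), set $w_e=\phi(v)-\phi(u)+1$.
  \item For every backward class ($v\prec u$), set $w_e=1$.
\end{itemize}
The designated in-class exists because the order is rooted. All weights are strictly positive. If strictness matters for ties among parallel primitives, we may perturb within a class by adding $\varepsilon_e\in(0,1/2)$ to the non-designated primitives. This keeps the designated primitive unique within its class and changes no inequality below.

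Next we verify the distances. Define the reduced cost $w_e-\phi(v)+\phi(u)$ for each arc $(u,v)$.
\begin{itemize}[leftmargin=*,itemsep=2pt]
  \item Designated arcs have reduced cost $0$.
  \item Other forward arcs have reduced cost at least $1$.
  \item Backward arcs have reduced cost $1+\phi(u)-\phi(v)>0$.
\end{itemize}
Every $s$–$v$ path therefore has length at least $\phi(v)-\phi(s)$. The designated arcs form a spanning arborescence rooted at $s$, since each vertex's designated tail is earlier, and it attains this bound. Hence $d(v_i)=i-1$, and the $d(v_i)$ are pairwise distinct.

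Dijkstra on a positive weighting settles vertices in nondecreasing distance. With distinct distances, the settlement order is forced to be $v_1,\dots,v_n$ regardless of tie-breaking, so the order is strict.

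For the second clause, take a forward class $(u,v)$ with $u\prec v$. When $u$ is settled, $v$ is still unsettled, because $d(u)<d(v)$. Dijkstra therefore processes that class, forming the candidate $d(u)+\mu_{uv}$, while its head is unsettled, as the upper-bound accounting of \cref{thm:main-addition} requires. For a source-out class, "processed" means its raw value is inserted at initialization, which also precedes settling the head.

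The main obstacle is minor: we must make sure the nondesignated forward arcs never create ties that could change the settlement order. The $+1$ slack handles this, because it gives them strictly positive reduced cost while the distances remain the integer potentials.
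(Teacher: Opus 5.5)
Your proof is correct. It shares the paper's skeleton: potentials strictly increasing along the order, a tight rooted arborescence realizing them, and distinct distances forcing the settlement order. The way you verify the distances is different.

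\textbf{The paper's route.} It makes every forward arc tight, $w_{uv}=\pi(v)-\pi(u)$ with $0=\pi(s)<\pi(v)<1$, and gives backward arcs weights in $(2,3)$. It then argues by magnitude separation: every forward path has length below one, and every path using a backward arc exceeds two.

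\textbf{Your route.} You make only one designated arborescence tight and put positive slack on every other arc. You then bound path lengths from below by telescoping nonnegative reduced costs.

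\textbf{What each buys.} Yours needs no two-scale separation, and after your $\varepsilon$-perturbation it even gives a unique shortest-path arborescence. The paper's choice produces exactly the potential-tie weighting, in which all forward paths tie at $\pi(v)$. That same weighting reappears in \cref{lem:generic-slack-new} and in the DAG lower bound, so one input shape serves both the upper-bound tightness and the lower-bound arguments.

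\textbf{The obstacle you flag is not real.} The $+1$ slack on nondesignated forward arcs is unnecessary. Ties among candidates for the same head only affect which candidate is recorded, not which vertex Dijkstra extracts. Once the true distances $d(v_i)$ are pairwise distinct, the extraction order is forced. This is why the paper can leave all forward arcs tight.

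\textbf{A harmless slip.} Under your optional perturbation, parallel primitives inside a designated class have reduced cost $\varepsilon_e$, not ``at least $1$'' as your list states. Your argument only uses nonnegativity of reduced costs, so nothing breaks.
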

\begin{proof}
Choose $0=\pi(s)<\pi(v_2)<\cdots<\pi(v_n)<1$.  Give every forward arc the
weight $\pi(v)-\pi(u)$ and every backward arc a weight in $(2,3)$.  The
forward classes contain a rooted arborescence, so each vertex has a forward
path of length $\pi(v)$.  Any path using a backward arc has length greater
than two, while every forward path has length below one.  Hence
$d(v)=\pi(v)$ and the strict settlement order is the chosen one.
\end{proof}

\begin{remark}
Computing $\rhoF$ itself is NP-hard by the standard reduction from directed
feedback arc set~\cite{karp1972reducibility}: add a new source with an arc to
every old vertex, so every order is rooted and maximizing forward classes is
maximum acyclic subgraph.
The addition-optimal uniform algorithm below never computes $\rhoF$.
\end{remark}

%% file: addition_coordinate.tex
\section{The addition coordinate on arbitrary topologies}
\label{sec:additions}

We first settle the resource coordinate that is insensitive to the comparison
budget.  The proof is deliberately phrased in terms of coefficient records,
not path registers.  This distinction is what makes the result apply to
arbitrary adaptive programs with mixed and non-path sums.

\subsection{A uniform upper bound}

Reduce each endpoint class $E_{uv}$ to a minimum register
$\mu_{uv}=\min_{e\in E_{uv}}w_e$.  Run Dijkstra with a fixed deterministic tie
rule.  When $u$ is settled, ignore a class whose head is already settled.  If
$u=s$, the class minimum is already a primitive candidate.  Otherwise create
the single register $d(u)+\mu_{uv}$ and relax the head.  Copy the final
settled register to each labeled output.

\begin{lemma}[Endpoint-class Dijkstra bound]
\label{lem:addition-upper-new}
For every nonnegative weighting $w$,
\[
 P_A(G,s,w)\le \rhoF(G,s).
 \tag{18}\label{eq:addition-upper}
\]
For every fixed topology, equality holds for some strictly positive weighting.
\end{lemma}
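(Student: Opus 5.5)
The plan is to prove the two halves separately: the inequality by charging each addition to a forward class of the settlement order, and tightness by feeding in a weighting from Lemma~\ref{lem:realize-rooted-order-new}.

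\textbf{Upper bound.} Fix a nonnegative weighting $w$ and one execution of endpoint-class Dijkstra with its deterministic tie rule. Let $\prec$ be the settlement order. It starts at $s$. Each $v\ne s$ is settled with a finite tentative value, because every vertex is reachable. That value was created by relaxing some class $(u,v)$ with $u$ settled earlier. Hence $u\prec v$, and $\prec$ is rooted in the sense of Definition~\ref{def:rooted-order}.

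Now charge additions to classes. An addition is performed only when $u$ is settled, $u\ne s$, and the class $(u,v)$ has a head $v$ not yet settled. Then $v$ is settled later, so $u\prec v$. Each class is processed at most once, namely when its tail is settled. Thus the additions inject into $\{(u,v)\in\overline E:u\ne s,\ u\prec v\}$, giving
\[
P_A(G,s,w)\le f_G(\prec)\le\rhoF(G,s).
\]
Class minimization and the final copies to outputs use comparisons and free copying only, so they add nothing to $P_A$. Ties and zero weights need no special care: the argument uses only that settlement is sequential and that relaxation happens at the tail's settlement time.

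\textbf{Tightness.} Choose a rooted order $\prec^*$ attaining $\rhoF$. Lemma~\ref{lem:realize-rooted-order-new} supplies a strictly positive weighting with distinct distances $\pi(v)$ whose strict Dijkstra settlement order is $\prec^*$. We must check that the tie rule does not disturb this. There are no distance ties, and settlement follows strict distance order, so the run settles in exactly the order $\prec^*$.

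On that run, every forward class $(u,v)$ with $u\ne s$ is examined when $u$ is settled. At that moment $v$ is unsettled, so the algorithm performs exactly one addition for it. Hence the count is exactly $f_G(\prec^*)=\rhoF$.

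\textbf{Main obstacle.} The only delicate step is matching the algorithm's settlement order with the realized order in the tightness part. This requires that the lemma's weighting settle vertices in strict distance order, so that tentative values are minimized consistently, and that backward classes never trigger additions. Backward classes are safe because their heads are already settled when their tails are processed.
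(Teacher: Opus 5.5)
Your proposal is correct and follows the paper's proof: the settlement order is rooted, and each addition goes to a distinct forward nonsource endpoint class, giving $P_A\le f_G(\prec_w)\le\rhoF$. Tightness comes from applying \cref{lem:realize-rooted-order-new} to a maximizing rooted order; your extra check that distinct distances rule out tie-rule interference only spells out what the paper leaves implicit.
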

\begin{proof}
The settlement order is rooted: every nonsource vertex is first reached from
the source or from a previously settled vertex.  The algorithm adds once for
each class $(u,v)$ with $u\ne s$ and $u$ earlier than $v$, and never for a
backward class or a class entering an already settled head.  The count is
therefore at most $f_G(\prec_w)\le\rhoF(G,s)$.  Zero weights and ties only
change the deterministic rooted order.  Conversely, choose a rooted order
maximizing $f_G$ and apply \cref{lem:realize-rooted-order-new}; the algorithm
performs one addition for every counted class.
\end{proof}

The algorithm is uniform.  Class formation can be done by topology
preprocessing, class reduction costs exactly $m-q$ comparisons, and the
remaining control work is ordinary Dijkstra bookkeeping.  The theorem below
concerns only additions; no computation of the NP-hard parameter $\rhoF$ is
needed.

\subsection{Three algebraic tools}

Fix a rooted order $\prec$ and let $H$ contain all arcs in endpoint classes
forward under $\prec$.  Then $H$ is a rooted spanning DAG.  Write $B_H$ for
its signed vertex--arc incidence matrix, with a column
$\mathbf 1_v-\mathbf 1_u$ for an arc $(u,v)$.

\begin{lemma}[Common-endpoint path differences]
\label{lem:path-kernel-new}
Let $H$ be a rooted reachable DAG.  If $\mathcal P_v$ is the set of directed
$s$--$v$ paths, then
\[
 \ker B_H=spanof\{\chi_P-\chi_Q:
          v\in V,\ P,Q\in\mathcal P_v\}.
 \tag{19}\label{eq:path-kernel}
\]
\end{lemma}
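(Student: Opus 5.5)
The plan is to prove the two inclusions separately, using a spanning arborescence of $H$ and the fact that $H$ is acyclic.

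\emph{Easy inclusion ($\supseteq$).} For $P,Q\in\mathcal P_v$, the vector $B_H\chi_P$ telescopes along the path to $\mathbf 1_v-\mathbf 1_s$, and the same holds for $Q$. Hence $B_H(\chi_P-\chi_Q)=0$, and every spanning vector lies in the kernel.

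\emph{Reverse inclusion ($\subseteq$): choice of tree.} Because $H$ is rooted and reachable, fix a spanning arborescence $T\subseteq H$ rooted at $s$; for instance, take for each $v\ne s$ one forward incoming arc from an earlier vertex. For each vertex $u$, let $T_u$ be the unique tree path from $s$ to $u$.

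\emph{Fundamental-cycle basis.} For each non-tree arc $f=(u,v)\in H\setminus T$, the walk $T_u$ followed by $f$ is a directed $s$--$v$ walk. Since $H$ is a DAG, every directed walk is a simple path, so this walk lies in $\mathcal P_v$. Set
\[
 \kappa_f=\chi_{T_u\cdot f}-\chi_{T_v},
\]
which is a path difference with common endpoint $v$. The vectors $\kappa_f$ are linearly independent, since each has coordinate $1$ on its own non-tree arc $f$ and $0$ on every other non-tree arc.

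\emph{Dimension count.} It remains to show that $\dim\ker B_H=|E(H)|-|V|+1$, which equals the number of non-tree arcs. The incidence matrix of a connected directed graph has rank $|V|-1$. A direct argument: the columns indexed by $T$ are independent, because a leaf of $T$ gives a row with a single nonzero entry among those columns, and induction removes leaves one at a time. Also, every row vector of $B_H$ is orthogonal to $\mathbf 1$, so the rank is at most $|V|-1$.

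\emph{Conclusion.} The kernel therefore has dimension equal to the number of non-tree arcs. The $\kappa_f$ are that many independent kernel vectors, all lying in the span of path differences, so they form a basis, and \eqref{eq:path-kernel} follows. Multigraph parallel arcs are treated as ordinary separate columns: a parallel copy that is not in $T$ simply yields its own $\kappa_f$.

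\emph{Anticipated obstacle.} The only step that needs care is checking that $T_u\cdot f$ is a genuine directed path and not merely a walk, because only paths belong to $\mathcal P_v$. Acyclicity of $H$ is exactly what guarantees this: a repeated vertex on a directed walk would close a directed cycle.
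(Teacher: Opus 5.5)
Your proposal is correct and follows essentially the same route as the paper: a spanning out-arborescence, the fundamental vectors $\chi_{T_u}+\mathbf 1_f-\chi_{T_v}$ (genuine paths by acyclicity), independence via their unit non-tree coordinates, and the count $|E(H)|-|V|+1=\dim\ker B_H$, with your leaf-removal argument supplying the incidence-rank fact that the paper only asserts. One small slip: the statement ``every row vector of $B_H$ is orthogonal to $\mathbf 1$'' is false as written; what you need is that every column $\mathbf 1_v-\mathbf 1_u$ is orthogonal to the all-ones vector of $\mathbb{R}^V$ (equivalently, the rows sum to the zero vector), and that is what gives $\operatorname{rank}B_H\le|V|-1$.
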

\begin{proof}
Every displayed difference has zero incidence.  Choose a rooted
out-arborescence $T$.  For each non-tree arc $e=(u,v)$, the path $T_u e$ is an
$s$--$v$ path: otherwise $T_u$ and $e$ would contain a directed cycle.  The
vector
\[
 z_e=\chi_{T_u}+\mathbf 1_e-\chi_{T_v}
\]
is therefore a common-endpoint path difference.  Its own non-tree coordinate
is one and all other non-tree coordinates are zero, so the $z_e$ are linearly
independent.  There are $|E(H)|-(|V|-1)=\dim\ker B_H$ of them, proving the
claim.  Parallel arcs are covered because they are simply distinct non-tree
coordinates when necessary.
\end{proof}

\begin{lemma}[Generic-tie and slack projection]
\label{lem:generic-slack-new}
Let $E(H)$ and $E(G)\setminus E(H)$ be denoted by $E_H$ and $E_O$.  For a
fixed finite program, there is a strictly positive weighting with the
following properties on the reached execution:

\begin{enumerate}[leftmargin=*,itemsep=2pt]
  \item every $H$-path to $v$ has the same length $\pi(v)$ and every path
        using an outside arc is strictly longer;
  \item if two reached registers are equal, their outside coefficient vectors
        agree, their $H$-coefficient difference is in $\ker B_H$, and their
        constant terms agree;
  \item a correct output for $v$ has no outside coefficient, has zero
        constant term, and its $H$ coefficient has incidence
        $\mathbf 1_v-\mathbf 1_s$; and
  \item if $z$ is orthogonal to all reached equality normals projected to
        $E_H$, then both sufficiently small perturbations
        $w_H\pm\varepsilon z$ preserve positivity, the outside-path gap, and
        the complete adaptive branch.
\end{enumerate}
\end{lemma}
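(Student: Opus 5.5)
We construct the weighting in two layers and then use that the program is fixed and finite, so its reached branch meets only finitely many registers and comparisons.

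The first layer is a potential on $H$. Pick $0=\pi(s)<\pi(v_2)<\cdots<\pi(v_n)<1$ respecting $\prec$, with $\pi(v_2),\ldots,\pi(v_n)$ algebraically independent over the field $\mathbb F$ generated by the program's literals. Set $w_e=\pi(v)-\pi(u)+\delta_e$ for $e=(u,v)\in E_H$. Every arc of $H$ is forward, so $\pi(v)-\pi(u)>0$.

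The second layer handles the outside arcs. For $e\in E_O$, put $w_e=3+\eta_e$, with the $\eta_e$ small and algebraically independent over $\mathbb F(\pi)$. Take $\delta=0$ initially, so that every $H$-path to $v$ has length exactly $\pi(v)$ by telescoping. Any path using an outside arc has length greater than $3>1>\pi(v)$. This gives item 1 with a uniform gap, which survives small perturbations.

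Item 2 rests on a genericity argument. A reached register has the form $R=\langle a_H,w_H\rangle+\langle a_O,w_O\rangle+\alpha$. On $H$, telescoping gives $\langle a_H,w_H\rangle=\langle B_H a_H,\pi\rangle$, viewing $B_Ha_H$ as a vector on vertices. An equality $R=R'$ then reads
\[
 \langle B_H(a_H-a'_H),\pi\rangle+3\,\mathbf 1^\top(a_O-a'_O)+\langle a_O-a'_O,\eta\rangle+(\alpha-\alpha')=0 .
\]
Algebraic independence of $\eta$ over $\mathbb F(\pi)$ forces $a_O=a'_O$. Independence of the nonsource $\pi(v)$ over $\mathbb F$ then forces $B_H(a_H-a'_H)$ to vanish on nonsource coordinates. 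Since the column sums of $B_H$ are zero, the source coordinate vanishes too, so $a_H-a'_H\in\ker B_H$. Finally $\alpha=\alpha'$.

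The one subtlety is that the reached branch, and hence the finite set of registers, depends on the weighting. I avoid circularity the standard way. The program is a finite tree, so there are finitely many possible coefficient records, and I choose the generic values to avoid the proper algebraic subsets they define. One must also guarantee the reached branch is well defined once the choice is made, which holds because the choice fixes $w$ and hence the path.

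Item 3 follows from item 2 applied to a hypothetical correct output register compared against the true distance. The difficulty is that $d(v)$ is not itself a register, but it equals $\pi(v)=\langle \mathbf 1_v-\mathbf 1_s,\pi\rangle$ with no outside terms and no constant. The same independence argument on $R(w)=\pi(v)$ gives $a_O=0$, $\alpha=0$, and $B_Ha_H=\mathbf 1_v-\mathbf 1_s$.

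Item 4 is the main obstacle, because strict outcomes and equality outcomes must be preserved simultaneously under a perturbation $w_H\pm\varepsilon z$. For a strict reached comparison, take $\varepsilon$ below the minimum reached strict margin divided by the maximal $|\langle a_H-a'_H,z\rangle|$. This works since there are finitely many comparisons and the branch is fixed once the weighting is.

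For a reached equality, item 2 shows the outside and constant parts cancel. The perturbed difference is then $\pm\varepsilon\langle a_H-a'_H,z\rangle=0$, because $z$ is orthogonal to that projected normal. The equality therefore persists, and induction along the branch keeps every later register and comparison identical.

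Positivity and the outside gap survive for small $\varepsilon$ since $w_H>0$ and the gap exceeds $2$. I expect the delicate point to be the ordering of quantifiers: the generic point is chosen first, after the program is fixed, and $\varepsilon$ is chosen afterward from the finite reached data.
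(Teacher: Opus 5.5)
Your proposal is correct and follows essentially the same route as the paper. Both arguments use algebraically independent potentials on the forward DAG and outside weights that are independent over the potential field. Independence first removes the outside coefficients and then the nonsource incidence and constant terms, with zero column sums handling the source coordinate. For item 4, both take a radius below every reached strict margin divided by its slope. The only differences are inessential: you use outside weights $3+\eta_e$ where the paper uses $\pi(v)-\pi(u)+\sigma_e$ with $\sigma_e\in(2,3)$. Also, your worry about the branch depending on the weighting needs no avoidance of special algebraic sets, because algebraic independence over the literal field already rules out every relation among the affine registers of any branch.
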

\begin{proof}
Let $K$ be the countable field generated by the program's finite literals.
Choose $0=\pi(s)<\pi(v)<1$ algebraically independent over $K$ in the
chosen order cone.  For $e=(u,v)\in E_H$, set
$w_e=\pi(v)-\pi(u)$.  For each $e\in E_O$, choose an independent slack
$\sigma_e\in(2,3)$ over $K(\pi(v):v\ne s)$ and set
$w_e=\pi(v)-\pi(u)+\sigma_e$.  These weights are positive.  Telescoping
shows that an $H$-path to $v$ has length $\pi(v)<1$, whereas a path using an
outside arc has length greater than two.

On the finite branch, each register has the affine form
$a_H\cdot w_H+a_O\cdot w_O+\gamma$.  An equality between two such forms is a
polynomial identity in the independent slacks and potentials.  Independence
first forces $a_O=a'_O$, and then forces
$B_H(a_H-a'_H)=0$ and $\gamma=\gamma'$.  Applying the same argument to a
correct output, whose value is $\pi(v)$, forces $a_O=0$, $\gamma=0$, and the
displayed divergence.

For completeness, the perturbation radius can be chosen explicitly after the
finite branch is fixed.  Let $g>0$ be the minimum baseline gap between an
outside-arc simple path and an $H$-path to the same endpoint, and let $M$ be
the maximum absolute change of a path difference in direction $z$.  Take
$\varepsilon<g/(2\max\{1,M\})$ and also smaller than
$w_e/(2|z_e|)$ for every nonzero $z_e$.  For each strict comparison on the
branch, let $\mu_j$ be its positive baseline margin and $\theta_j$ its slope
in direction $z$; additionally take
\[
 \varepsilon<\min_j
 \frac{\mu_j}{2\max\{1,|\theta_j|\}}.
 \tag{20}\label{eq:branch-radius}
\]
Equality comparisons are unchanged because their projected normals are
orthogonal to $z$, and strict comparisons retain their signs.  Finiteness of
the path and branch sets makes all minima positive whenever the corresponding
constraint is nonvacuous.  This proves all four claims for both signs.
\end{proof}

\begin{lemma}[Fiber opening]
\label{lem:fiber-opening-new}
Let $L:\R^M\to\R^d$ be linear.  Suppose a coefficient-vector execution starts
with a finite set $S_0$ and inserts one vector per charged binary addition.
For a finite set $S$, write
\[
 D_L(S)=\spanof\{a-b:a,b\in S,\ La=Lb\}.
 \tag{21}\label{eq:fiber-difference}
\]
If the required equality normals span an $R$-dimensional subspace of
$D_L(S)$, and the required outputs occupy at least $k$ $L$-fibers absent from
$S_0$, then an execution with $P$ additions satisfies
\[
 P\ge R-\dim D_L(S_0)+k.
 \tag{22}\label{eq:fiber-opening}
\]
\end{lemma}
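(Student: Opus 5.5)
The plan is a potential-function argument on the cumulative set of materialized coefficient vectors. Let $S_t$ be $S_0$ together with every vector inserted by the first $t$ charged additions, so $S_0\subseteq S_1\subseteq\cdots\subseteq S_P=S$. Registers that are discarded or overwritten stay in $S_t$: the cumulative set only enlarges $D_L$ and the fiber set, so every hypothesis stated for ``held'' vectors remains true for $S$. Write $\phi(S_t)$ for the number of distinct $L$-fibers met by $S_t$, that is, the number of values $La$ with $a\in S_t$. Define
\[
 \Phi(S_t)=\dim D_L(S_t)+\phi(S_t).
\]
The whole proof reduces to the single-step claim $\Phi(S_{t+1})\le\Phi(S_t)+1$.

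To prove the step claim, let the new vector be $c=a+b$ with $a,b\in S_t$. Repeated operands are allowed, and since only $c$ matters, mixed supports play no role. If $c\in S_t$ already, nothing changes. Otherwise there are two cases.

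\emph{New fiber.} Suppose $Lc$ is not attained by $S_t$. Then $\phi$ rises by exactly one. The only new same-fiber pair is $(c,c)$, whose difference is zero, so $D_L$ is unchanged.

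\emph{Existing fiber.} Suppose $Lc=Ld$ for some $d\in S_t$. Then $\phi$ is unchanged. Every new same-fiber difference has the form $c-d'$ with $Ld'=Lc$. We can write
\[
 c-d'=(c-d)+(d-d'),
\]
and $d-d'\in D_L(S_t)$. Hence $D_L(S_{t+1})=D_L(S_t)+\spanof\{c-d\}$, so its dimension grows by at most one.

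In both cases $\Phi$ grows by at most one. Telescoping over the $P$ additions then gives $\Phi(S)\le\Phi(S_0)+P$.

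To finish, bound $\Phi(S)$ from below.
\begin{itemize}
 \item The required equality normals span an $R$-dimensional subspace of $D_L(S)$, so $\dim D_L(S)\ge R$.
 \item The required outputs are held in $S$ and occupy at least $k$ fibers not met by $S_0$.
 \item Every fiber met by $S_0$ is still met by $S\supseteq S_0$, so $\phi(S)\ge\phi(S_0)+k$.
\end{itemize}
Substituting these bounds gives
\[
 R+\phi(S_0)+k\le\dim D_L(S_0)+\phi(S_0)+P,
\]
which is exactly \eqref{eq:fiber-opening}.

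The only delicate point is the existing-fiber case. One must see that joining an existing fiber adds at most one new direction, even when that fiber already holds many vectors, because all old members of the fiber differ by vectors already in $D_L(S_t)$. This is the formal content of ``an addition opens at most one new fiber or one new equality direction, never both.''
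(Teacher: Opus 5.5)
Your proof is correct and follows essentially the same route as the paper, which likewise notes that inserting into an occupied fiber raises $\dim D_L$ by at most one while opening a new fiber raises it by zero, so $\dim D_L(S)\le\dim D_L(S_0)+P-h$ with $h\ge k$ new fibers. Your potential $\dim D_L+\phi$ repackages that same count, and your decomposition $c-d'=(c-d)+(d-d')$ spells out the at-most-one-direction step that the paper states without justification.
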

\begin{proof}
Insert the vectors in execution order.  An insertion into an already occupied
fiber can increase the within-fiber difference space by at most one.  The
first insertion in a new fiber increases it by zero, because it has no mate
in that fiber.  If $h$ new fibers are opened, then
\[
 \dim D_L(S)\le \dim D_L(S_0)+P-h.
\]
The hypotheses give $\dim D_L(S)\ge R$ and $h\ge k$, which rearranges to
the claimed inequality.  The argument counts every inserted vector, including
ones in registers later discarded.
\end{proof}

\begin{corollary}[Shortest-path fiber bound]
\label{cor:sp-fiber-new}
In the setting of \cref{lem:generic-slack-new}, suppose the projected equality
normals span $\ker B_H$.  If $m_H$ is the number of arcs of $H$, $q_H$ its
number of endpoint classes, and $d_s$ its source-out class count, then the
execution uses at least
\[
 (m_H-n+1)-(m_H-q_H)+(n-1-d_s)=q_H-d_s
 \tag{23}\label{eq:projected-addition-lower}
\]
additions.
\end{corollary}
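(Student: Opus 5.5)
We derive this from \cref{lem:fiber-opening-new}. The linear map $L$ is the coordinate projection that deletes everything a correct output is forced to agree on; concretely, $L$ sends a coefficient vector $a\in\Nzero^{E}$ to the pair $(B_H a_H,\ a_O)$. This consists of the incidence image of its $H$-part together with its outside part. By \cref{lem:generic-slack-new}(2), two reached registers that are equal on the chosen input lie in a common $L$-fiber. Hence every reached equality normal, projected to $E_H$, is a difference of two vectors in one fiber, and so lies in $D_L(S)$ for the final held set $S$. The hypothesis that these projected normals span $\ker B_H$ then gives $R=\dim\ker B_H=m_H-n+1$, since $H$ is a rooted spanning DAG on $n$ vertices and \cref{lem:path-kernel-new} identifies the kernel dimension.

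Next we compute $\dim D_L(S_0)$ for the initial set $S_0=\{0,e_1,\ldots,e_m\}$. Literals have the zero vector. Two unit vectors $e_f,e_g$ share an $L$-fiber exactly when they are parallel arcs of $H$ in the same endpoint class, because then $B_H e_f=B_H e_g$ and both outside parts vanish. An $H$-arc and an outside arc never share a fiber, since their outside parts differ. The zero vector is alone in its fiber, because no column of $B_H$ vanishes (loops are excluded). Thus $D_L(S_0)$ is spanned by within-class differences, and its dimension is $\sum_{\text{classes}}(|E_{uv}|-1)=m_H-q_H$. Outside parallel classes only contribute if they share both the outside vector and the incidence. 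Distinct primitives have distinct outside unit vectors, so these contribute nothing.

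For the fiber count $k$, \cref{lem:generic-slack-new}(3) says the output for $v$ lies in the fiber $\{a_O=0,\ B_H a_H=\mathbf 1_v-\mathbf 1_s\}$. These $n-1$ fibers are pairwise distinct. Such a fiber is occupied in $S_0$ only if some unit vector $e_f$ with $f\in E_H$ has incidence $\mathbf 1_v-\mathbf 1_s$, which means $f$ is a source-out arc into $v$. Hence exactly $d_s$ of the output fibers, one per source-out class of $H$, are present initially, and $k=n-1-d_s$. Substituting into \eqref{eq:fiber-opening} gives $P\ge (m_H-n+1)-(m_H-q_H)+(n-1-d_s)=q_H-d_s$.

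The main care point is the first step: we must show that each equality normal lies in $D_L(S)$ itself, not merely in some ambient space. This is why the normals are taken as differences of actually held vectors in a common fiber, and why part (2) of \cref{lem:generic-slack-new} must give full fiber equality, including the outside coordinates. The remaining steps are bookkeeping. One must also check that the identification of the occupied output fibers does not miscount source-out classes with several parallel primitives; these occupy one fiber regardless of multiplicity, so they are subtracted once.
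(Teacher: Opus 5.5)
Your proof is correct and follows the paper's argument: both apply \cref{lem:fiber-opening-new} with $R=m_H-n+1$, $\dim D_L(S_0)=m_H-q_H$, and $k=n-1-d_s$. The only variation is how the outside coordinates are neutralized. The paper projects every coefficient vector to $\R^{E_H}$ and takes $L=B_H$, so every outside unit collapses onto the single zero vector. You instead keep $a_O$ in the fiber label, so each outside unit occupies its own singleton fiber. This uses the outside-agreement clauses of \cref{lem:generic-slack-new}(2)--(3), which the projected version does not need, but it yields the same count.
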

\begin{proof}
Project every coefficient vector to $\R^{E_H}$.  The free initial vectors
are zero and the units of the $H$ arcs.  Units in one endpoint class have the
same divergence, so
\[
 \dim D_{B_H}(S_0)=\sum_{E_{uv}\subseteq E_H}(|E_{uv}|-1)=m_H-q_H.
 \]
The incidence rank is $n-1$, hence
$\dim\ker B_H=m_H-n+1$.  A nonsource vertex without a source-out class has
output divergence $\mathbf 1_v-\mathbf 1_s$ in a fiber absent from the free
initial set.  There are $n-1-d_s$ such distinct fibers.  Apply
\cref{lem:fiber-opening-new} with
$R=m_H-n+1$ and $k=n-1-d_s$.

This projection is essential.  Outside primitive coordinates become the one
zero vector rather than separate occupants of a zero-divergence fiber.  Thus
the argument permits outside-contaminated mixed sums and does not assume a
path-register normal form.
\end{proof}

\subsection{The exact theorem}

\begin{theorem}[Arbitrary-topology addition theorem]
\label{thm:arbitrary-addition-new}
For every finite reachable rooted directed multigraph in the model of
\cref{sec:model},
\[
 \boxed{P^*(G,s)=\rhoF(G,s).}
 \tag{24}\label{eq:arbitrary-addition-theorem}
\]
The statement remains true with unlimited additional comparisons and is
unchanged by self-loops.
\end{theorem}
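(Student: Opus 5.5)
The plan has two directions. The upper bound $P^*(G,s)\le\rhoF(G,s)$ is immediate from \cref{lem:addition-upper-new}. Endpoint-class Dijkstra uses no more than $\rhoF$ additions on every nonnegative weighting, so $P_A(G,s)\le\rhoF$ and hence $P^*\le\rhoF$. All the work is in the lower bound. For that, I fix an arbitrary finite deterministic program $A$ computing \DIST, with any number of comparisons, and a rooted order $\prec$ attaining $\rhoF$. Let $H$ be the rooted spanning DAG formed by all primitives in classes that are forward under $\prec$. I then apply \cref{lem:generic-slack-new} to $A$ and $H$, which produces a strictly positive weighting $w$ together with the reached branch of $A$ on $w$. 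Note that the number of $H$ endpoint classes with nonsource tail is exactly $f_G(\prec)=\rhoF$, because the source-out classes of $H$ are exactly the $d_s$ forward source classes. So it suffices to show that this branch performs at least $q_H-d_s$ additions, and \cref{cor:sp-fiber-new} gives that count once its hypothesis is verified.

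The key step is therefore to show that the projected equality normals on the reached branch span $\ker B_H$. I argue by contradiction. Suppose they do not. Then there is a nonzero $z\in\ker B_H$ orthogonal to every projected equality normal. By \cref{lem:path-kernel-new}, $z$ is a combination of common-endpoint path differences. Because $z\neq 0$ and these differences span, there is a vertex $v$ and two $s$--$v$ paths $P,Q$ in $H$ with $\langle\chi_P-\chi_Q,z\rangle\ne0$. Part~4 of \cref{lem:generic-slack-new} lets me use both perturbations $w_H\pm\varepsilon z$: the whole adaptive branch is preserved, so $A$ outputs the same register for $v$ on both inputs. By part~3, that output register has the form $\langle a_H,w_H\rangle$ with no outside coefficients, no constant term, and $B_Ha_H=\mathbf 1_v-\mathbf 1_s$.

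The difficulty is that such an $a_H$ need not be a path vector, so I cannot read off directly that $\langle a_H,z\rangle\neq 0$. Instead I use correctness on the perturbed inputs. The outside-path gap is preserved, so on $w\pm\varepsilon z$ the distance is $d(v)=\pi(v)+\min_{P}\varepsilon\langle\chi_P,\pm z\rangle$, the minimum over $H$-paths $P$ to $v$. Since the branch is unchanged, the output register $\langle a_H,w_H\pm\varepsilon z\rangle$ is affine in $\varepsilon$. The true distance is concave piecewise-linear in $\varepsilon$, and it is non-affine near $0$ whenever the path values $\langle\chi_P,z\rangle$ are not all equal. Equality on both sides of $0$ then forces the slopes $-\min$ and $\max$ to agree, which is a contradiction. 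So the spanning condition holds, \cref{cor:sp-fiber-new} applies, and $P_A(G,s,w)\ge q_H-d_s=\rhoF$. One point needs care: the witness pair $P,Q$ must share an endpoint with differing $z$-values, which is exactly what \cref{lem:path-kernel-new} provides.

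Three remaining checks. First, unlimited comparisons do not matter, since the argument bounds additions on one execution regardless of how many comparisons occur. Second, literals are already absorbed by \cref{lem:generic-slack-new}, where the constant terms are handled via algebraic independence over $K$. Third, self-loops are irrelevant: they are never forward classes $(u,v)$ with $u\neq v$, so they do not change $\rhoF$, and for the program they act as outside arcs whose coefficients part~3 forces to zero in outputs. The main obstacle I expect is the kernel-spanning step, specifically making sure the perturbation stays inside a neighborhood where the minimizing path set is governed only by $z$. I would handle this by choosing $\varepsilon$ below the outside gap, as part~4 of the lemma already arranges.
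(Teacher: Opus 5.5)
Your proposal is correct and follows the paper's proof essentially step for step. The upper bound comes from endpoint-class Dijkstra; for the lower bound you take the potential/slack input on the forward DAG $H$, use a two-sided kernel perturbation to force the projected equality normals to span $\ker B_H$, and then apply the projected fiber corollary to obtain $q_H-d_s=f_G(\prec)$ additions. Your explicit concave-versus-affine slope argument spells out the paper's terse ``different minimum affine outputs'' step. Your treatment of self-loops as outside arcs in the slack construction is a valid substitute for the paper's argument, which ignores loops in one direction and sets their weights to zero in the other.
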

\begin{proof}
The upper bound is \cref{lem:addition-upper-new}.  For the lower bound, fix
any correct adaptive program and any rooted order.  On its forward DAG $H$,
use the algebraically independent potential/slack weighting of
\cref{lem:generic-slack-new}.  If the projected equality normals failed to
span $\ker B_H$, choose a nonzero direction $z$ in the orthogonal complement.
By \cref{lem:path-kernel-new}, some common-endpoint path pair has different
$z$-slopes.  The two perturbations from
\cref{lem:generic-slack-new} follow the same complete branch but require
different minimum affine outputs at that endpoint, a contradiction.  Hence
the equality normals span the full kernel, and
\cref{cor:sp-fiber-new} gives at least $q_H-d_s=f_G(\prec)$ additions on
this program-dependent positive input.  Maximize over rooted orders.  The
self-loop claim follows by ignoring loops in one direction and substituting
zero loop weights in the other; nonnegative loops cannot improve a distance.
\end{proof}

\begin{corollary}[Uniform addition optimality]
\label{cor:uniform-addition-new}
Endpoint-class Dijkstra is one uniform algorithm satisfying, for every fixed
topology,
\[
 \sup_{w\ge0}P_A(G,s,w)=P^*(G,s)=\rhoF(G,s).
\]
\end{corollary}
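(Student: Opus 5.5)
This corollary is a direct assembly of the two halves of \cref{thm:arbitrary-addition-new}, and I would prove it by chaining one inequality from each half. The plan is to exhibit the single uniform program $A$: endpoint-class Dijkstra from \cref{lem:addition-upper-new}. Here $A$ means class reduction to the minima $\mu_{uv}$, followed by Dijkstra with a fixed deterministic tie rule, where an addition $d(u)+\mu_{uv}$ is formed only for a nonsource tail whose head is unsettled. The class partition is computed from the topology alone, so $A$ has one finite description taking $(G,s)$ as input and uses no topology-specific advice. In particular it never evaluates $\rhoF$, which matters because $\rhoF$ is NP-hard to compute.

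First I would take the upper bound. \Cref{lem:addition-upper-new} gives $P_A(G,s,w)\le\rhoF(G,s)$ for every nonnegative weighting $w$, hence $\sup_{w\ge0}P_A(G,s,w)\le\rhoF(G,s)$.

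Second, I would argue in either of two ways that the supremum actually reaches $\rhoF$.
\begin{itemize}
\item Directly: the equality clause of the same lemma realizes a maximizing rooted order by \cref{lem:realize-rooted-order-new}, giving a strictly positive weighting on which $A$ adds once per forward nonsource class.
\item Abstractly: $A$ is a correct \DIST program, so by definition $\sup_w P_A(G,s,w)\ge P^*(G,s)$.
\end{itemize}

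Third, the lower-bound half of \cref{thm:arbitrary-addition-new} gives $P^*(G,s)\ge\rhoF(G,s)$. Combined with $P^*(G,s)\le\sup_w P_A(G,s,w)\le\rhoF(G,s)$, this closes the chain into equalities.

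I expect no real obstacle, since all the substance lives in the kernel-spanning and fiber-opening arguments already proved. The only points needing care are these:
\begin{itemize}
\item The supremum is over all $w\ge0$, including zero weights and ties. This is covered because the upper bound in \cref{lem:addition-upper-new} holds for every weighting: ties only change which rooted settlement order occurs.
\item Self-loops are handled by the self-loop clause of the theorem.
\item The finite infimum defining $P^*$ is attained. Even so, the chain of inequalities above does not rely on that attainment.
\end{itemize}
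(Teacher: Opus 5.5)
Your proposal is correct and follows the paper's proof, which just combines the uniform upper bound of \cref{lem:addition-upper-new} with the lower bound $P^*(G,s)\ge\rhoF(G,s)$ from \cref{thm:arbitrary-addition-new} to close the chain $P^*(G,s)\le\sup_{w\ge0}P_A(G,s,w)\le\rhoF(G,s)\le P^*(G,s)$. Your remarks on ties, zero weights, self-loops, and the fact that the algorithm never evaluates $\rhoF$ are consistent with how the paper treats these points.
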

\begin{proof}
Combine \cref{lem:addition-upper-new} with
\cref{thm:arbitrary-addition-new}.
\end{proof}

%% file: dag_frontier.tex
\section{The exact frontier on directed acyclic graphs}
\label{sec:dag-frontier}

The arbitrary-topology addition theorem does not by itself determine a joint
frontier.  On a DAG, however, all shortest-path candidates can be exposed in
one topological schedule, and the two resources align exactly.

Let $G$ be a reachable directed acyclic multigraph.  Its $n$ vertices and $m$
arcs have $q$ nonempty endpoint classes, of which $d_s$ leave the source.
Every endpoint class enters a nonsource vertex, since an arc entering $s$
would create a directed cycle with a source path to its tail.

\subsection{A simultaneous upper bound}

First reduce each endpoint class to its minimum primitive value.  Process
vertices in a fixed topological order.  A source-out class contributes its
primitive minimum directly.  For every other class $(u,v)$, materialize
$d(u)+\mu_{uv}$ once.  At $v$, compare the incoming class candidates and
retain the first minimum under a fixed equality rule.

The class reductions cost
\[
 \sum_{E_{uv}}(|E_{uv}|-1)=m-q
 \tag{25}\label{eq:dag-class-comparisons}
\]
comparisons.  After reduction, a nonsource vertex with $r_v$ incoming
classes needs $r_v-1$ comparisons, and
\[
 \sum_{v\ne s}(r_v-1)=q-(n-1).
 \tag{26}\label{eq:dag-vertex-comparisons}
\]
The total is $m-n+1$.  The number of additions is exactly the number of
non-source classes, $q-d_s$.  All choices remain valid for zero weights and
ties.

\subsection{Why the lower bounds share an input}

For completeness, we spell out the comparison part of the generic-tie
argument.  Choose algebraically independent potentials
\[
 0=\pi(s)<\pi(v)<1
 \tag{27}\label{eq:dag-potentials}
\]
and set $w_{uv}=\pi(v)-\pi(u)$ on every arc.  Every source path to $v$ has
length $\pi(v)$.  On the finite execution of an arbitrary correct adaptive
program, every register is an affine function with a nonnegative integral
coefficient vector.  Algebraic independence implies that each reached
equality comparison has coefficient normal in $\ker B_G$, and that a correct
output for $v$ has divergence $\mathbf 1_v-\mathbf 1_s$ and zero constant.

Let $Z$ be the span of the equality normals on this execution.  If
$Z\subsetneq\ker B_G$, choose $0\ne z\in\ker B_G\cap Z^\perp$.  By
\cref{lem:path-kernel-new}, there are two source paths $P,Q$ to one endpoint
with different $z$-slopes.  For both sufficiently small signs of
$w+\varepsilon z$, all strict comparisons retain their outcome and all
equality comparisons remain equal.  Thus the same affine output register is
used on both sides.  At the displayed endpoint, the minimum path slope is
different on the two sides, so that register cannot be correct on both.  This
contradiction gives
\[
 \dim Z=\dim\ker B_G=m-n+1.
 \tag{28}\label{eq:dag-kernel-rank}
\]
One comparison supplies at most one normal, so the execution has at least
$m-n+1$ comparisons.  Applying the projected fiber bound
\cref{cor:sp-fiber-new} to the same execution gives at least $q-d_s$
additions.  The input is a single strictly positive potential-tie weighting
for the chosen program, so the two lower bounds are simultaneous.

\begin{theorem}[Exact DAG comparison--addition frontier]
\label{thm:dag-frontier-new}
For every finite reachable directed acyclic multigraph,
\[
 \boxed{\cR(G,s)=
 \{(c,p)\in\Nzero^2:
       c\ge m-n+1,\quad p\ge q-d_s\}.}
 \tag{29}\label{eq:dag-frontier}
\]
Consequently,
\[
 C^*(G,s)=m-n+1,\qquad
 P^*(G,s)=q-d_s=\rhoF(G,s),
 \tag{30}\label{eq:dag-scalar-optima}
\]
and one program attains both scalar optima.
\end{theorem}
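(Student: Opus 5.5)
The proof has two halves: an upper bound that exhibits one program attaining the corner, and a lower bound that rules out every point outside the rectangle. Since $\cR(G,s)$ is upward closed by definition, it suffices to show that the corner $(m-n+1,\,q-d_s)$ is feasible, and that every correct program $A$ has $C_A(G,s)\ge m-n+1$ and $P_A(G,s)\ge q-d_s$. The two lower bounds need not hold on the same input, since the region is defined through worst-case suprema. Still, the argument of the preceding subsection gives both on one input, and that yields the stronger same-execution statement.

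\textbf{Upper bound.} We use the topological dynamic program of the first subsection, with a fixed first-minimum rule at each vertex. Class reduction costs $m-q$ comparisons by \eqref{eq:dag-class-comparisons}, and vertex selection costs $q-(n-1)$ by \eqref{eq:dag-vertex-comparisons}. There is exactly one addition per nonsource class, so $q-d_s$ additions, and these counts do not depend on $w$. For correctness we induct along the topological order: when $v$ is processed, every class $(u,v)$ has $u$ already finalized. The minimum over incoming class candidates $d(u)+\mu_{uv}$ (or $\mu_{sv}$ for a source class) then equals $d(v)$ by \eqref{eq:distance}, since every $s$--$v$ path ends with some class into $v$. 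Ties and zero weights are harmless because any minimizer gives the correct value. This puts $(m-n+1,q-d_s)\in\cR$, and every dominating pair is in $\cR$ as well.

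\textbf{Lower bound.} Fix an arbitrary correct program and take the algebraically independent potential-tie weighting \eqref{eq:dag-potentials} built for it. This is \cref{lem:generic-slack-new} with $H=G$: a topological order is rooted and makes every arc forward, so there are no outside arcs and no slacks are needed. The argument leading to \eqref{eq:dag-kernel-rank} shows that the equality normals $Z$ on the reached branch span $\ker B_G$. Since each comparison contributes at most one normal, the branch has at least $m-n+1$ comparisons. The same spanning hypothesis is exactly what \cref{cor:sp-fiber-new} needs with $H=G$, $m_H=m$, $q_H=q$, and that corollary gives at least $q-d_s$ additions on the same execution. So every program satisfies $C_A\ge m-n+1$ and $P_A\ge q-d_s$, which gives the rectangle. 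The scalar optima \eqref{eq:dag-scalar-optima} follow immediately, and $q-d_s=\rhoF$ is \cref{lem:dag-rho-new}, consistent with \cref{thm:arbitrary-addition-new}.

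\textbf{Main obstacle.} The delicate point is the kernel-spanning step. If $Z\subsetneq\ker B_G$, we need the two perturbed inputs $w\pm\varepsilon z$ to be nonnegative and to follow the same complete branch, while forcing different correct outputs at some endpoint. Nonnegativity and branch preservation come from choosing $\varepsilon$ below every strict margin, as in \eqref{eq:branch-radius}. Different outputs are forced by \cref{lem:path-kernel-new}, which supplies two paths to one endpoint with different $z$-slopes. The fixed output register is affine and hence linear in $\varepsilon$, while the true distance is a minimum of path slopes, which is concave with a genuine kink at $\varepsilon=0$, so no single affine register can match it on both sides. I would verify this kink argument carefully: all paths tie at $\varepsilon=0$, so the distance behaves like $\varepsilon\cdot\min$ of the slopes for $\varepsilon>0$ and like $\varepsilon\cdot\max$ for $\varepsilon<0$, and these differ because the slopes are not all equal.
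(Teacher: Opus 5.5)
Your proposal is correct and follows essentially the paper's route: the topological dynamic program attains the corner $(m-n+1,q-d_s)$. A single algebraically independent potential-tie input then forces both the equality-rank bound of $m-n+1$ comparisons and, via the projected fiber corollary, $q-d_s$ additions, and upward closure gives the rectangle. Your one-sided-derivative check of the kink, and your remark that the region only needs the two bounds separately, are accurate expansions of steps the paper states more tersely.
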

\begin{proof}
The topological program above attains $(m-n+1,q-d_s)$.  The generic-tie
argument gives the two coordinatewise lower bounds for every competing
program, on one input for that program.  Since the resource region is upward
closed, it is exactly the displayed rectangle.  The identity
\cref{eq:dag-rho} identifies the addition coordinate with $\rhoF$.
\end{proof}

\begin{remark}[A source-feedback corollary]
If every directed cycle contains the source, deleting self-loops and all arcs
entering $s$ leaves a reachable DAG with the same labeled distance vector.
The resource region is therefore the rectangle in \cref{eq:dag-frontier},
with $m,q,d_s$ counted in that active DAG.  Return arcs are genuine input
coordinates but are distance-irrelevant under nonnegative weights.
\end{remark}

%% file: cyclic_atom.tex
\section{The first cyclic obstruction}
\label{sec:cyclic-atom}

The DAG rectangle can fail as soon as a cycle makes two path extensions
compete for the same hub.  The smallest useful example has two spokes.

\subsection{The two-spoke shared hub}

Let $H_2$ have vertices $s,x,v,w$ and arcs
\[
 s\to x:a,
 \qquad s\to v:b,\quad v\to x:z,\quad x\to v:u,
 \qquad s\to w:c,\quad w\to x:q,\quad x\to w:p.
 \tag{31}\label{eq:h2-arcs}
\]
All seven weights are nonnegative.  A shortest walk has a simple
representative, so the distances are
\[
 d_x=\min\{a,b+z,c+q\},
 \qquad
 d_v=\min\{b,d_x+u\},
 \qquad
 d_w=\min\{c,d_x+p\}.
 \tag{32}\label{eq:h2-distances}
\]

\begin{figure}[htbp]
\centering
\begin{tikzpicture}[
  vertex/.style={circle,draw,minimum size=7mm,inner sep=1pt},
  every edge/.style={draw,->,>=Stealth},
  every edge quotes/.style={font=\small,fill=white,inner sep=1pt}]
  \node[vertex] (s) at (0,0) {$s$};
  \node[vertex] (x) at (2.6,1.15) {$x$};
  \node[vertex] (v) at (2.6,-1.15) {$v$};
  \node[vertex] (w) at (5.3,0) {$w$};
  \path (s) edge["$a$"] (x)
            edge["$b$"'] (v)
        (v) edge[bend left=17,"$z$"] (x)
        (x) edge[bend left=17,"$u$"'] (v)
        (s) edge[bend right=12,"$c$"'] (w)
        (w) edge[bend left=17,"$q$"'] (x)
        (x) edge[bend left=17,"$p$"] (w);
\end{tikzpicture}
\caption{The two-spoke shared-hub atom $H_2$.  The hub distance $d_x$ is
selected from the direct route and the two inward spoke routes; it is then
reused by the two outward routes.}
\label{fig:h2}
\end{figure}

The topology in \cref{fig:h2} makes the shared hub and its two directed
feedback spokes explicit.

\begin{theorem}[Exact nonrectangularity of $H_2$]
\label{thm:h2-nonrectangular}
In the model of \cref{sec:model},
\[
 \boxed{C^*(H_2)=4,\qquad P^*(H_2)=2,\qquad C_2^*(H_2)=5.}
 \tag{33}\label{eq:h2-result}
\]
In particular, $(4,2)\notin\cR(H_2,s)$ although $(4,p)$ is achievable for
some $p$ and $(c,2)$ is achievable for some $c$.
\end{theorem}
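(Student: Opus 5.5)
The proof splits into five claims: the two upper bounds $C^*\le4$ and $C_2^*\le5$, the addition value $P^*=2$, the human lower bound $C^*\ge4$, and the computer-assisted lower bound $C_2^*\ge5$. The last is the only one that needs a certificate.

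\textbf{Upper bounds.} For $C^*\le4$, follow the distance formulas \cref{eq:h2-distances}. Materialize $b+z$ and $c+q$, and take $d_x$ as the minimum of $a,b+z,c+q$ with two comparisons. Then materialize $d_x+u$ and $d_x+p$ and compare each with $b$ and with $c$ respectively. This uses four comparisons and four additions. For $C_2^*\le5$ I would give an explicit two-addition program with five comparisons:
\begin{itemize}
\item Sort the three source values $a,b,c$ adaptively, at most three comparisons.
\item Case $a$ minimal: $d_x=a$. Form $a+u$ and $a+p$ and compare them with $b$ and $c$. This is two additions, but potentially five comparisons in total, so the comparisons must be spent carefully. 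I would try to organize the tree so that only the two comparisons of $a$ against $b$ and $c$ precede the output comparisons.
\item Case $b$ minimal: $d_v=b$. Form $b+z$, compare it with $a$ and $c$ to get the hub candidate, then form one outward sum for $w$.
\item Case $c$ minimal: symmetric to the previous case.
\end{itemize}
I expect a careful tree to hit exactly five comparisons, and I would check the worst branch by hand. The same case split, counting additions only, shows that $P^*\le2$.

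\textbf{Addition value.} $P^*(H_2)=2$ is \cref{thm:arbitrary-addition-new} with $\rhoF(H_2)=2$. Each bidirected spoke contributes exactly one forward nonsource class in any rooted order, and the source classes are not counted.

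\textbf{Comparison lower bound $C^*\ge4$.} I would exhibit the eight full-dimensional strict cells whose output triples are listed in the main text, with an explicit interior witness weighting for each. On each cell the correct output is a distinct affine triple. A comparison tree of depth three has at most eight strict leaves; an equality leaf is lower-dimensional and cannot cover an open set. So each cell would have to lie in a single leaf, and no query hyperplane could cut a cell interior. A depth-three tree cannot be trivial, so its root hyperplane must separate the cells into two nonempty groups. That hyperplane must contain a common facet between adjacent cells from different groups. I would then list each facet normal of the cell arrangement and exhibit, for each, a point interior to some other cell on which that linear form changes sign. Every possible root hyperplane therefore cuts a cell, which is a contradiction.

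\textbf{The main obstacle, $C_2^*\ge5$.} I would argue by recession. Scaling the input removes literals without altering any strict outcome on the cells, so a hypothetical four-comparison, two-addition program may be assumed homogeneous. Its state is then a finite set of nonnegative integer coefficient vectors together with the surviving cone portions of the eight cells. I would run the exhaustive closure described in the main text:
\begin{itemize}
\item every pairwise sum of held vectors, including repeated operands, within the addition budget;
\item every comparison of held differences, with LP or integer Farkas certificates deciding which outcomes are feasible on each cell;
\item a terminal test requiring a single fixed held triple correct on every surviving cone.
\end{itemize}
The key reduction is that aliasing and free selection create no new forms, so the state space is finite (975 states). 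Showing that no state is terminal within the budget then gives the bound. This is the step I cannot do symbolically; I would rely on the deterministic verifier and its audited certificates.
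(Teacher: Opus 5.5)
\textbf{Verdict.} Your decomposition matches the paper's:
\begin{itemize}
\item the same four-comparison program for $C^*(H_2)\le4$;
\item $\rhoF(H_2)=2$ together with the arbitrary-topology addition theorem for $P^*(H_2)=2$;
\item the eight-cell, one-leaf-per-cell wall argument for $C^*(H_2)\ge4$;
\item recession followed by the exhaustive coefficient-state closure for $C_2^*(H_2)\ge5$, which the paper also leaves computer-assisted.
\end{itemize}
The one step that fails as written is your explicit two-addition program. That program is what establishes $C_2^*(H_2)\le5$, so it needs repair.

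\textbf{The missing subcase.} In your branch where $b$ is the smallest source value, you form $b+z$, compare $T=\min\{a,b+z\}$ with $c$, and then spend the second addition on an outward sum. This is wrong when $c<T$. In that case $d_x=\min\{T,c+q\}$ can equal $c+q$, which you never materialize. For example, $a=5$, $b=1$, $c=1.2$, $z=3$, $q=0.3$ gives $d_x=c+q=1.5$, while $T=4$.

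In that subcase the second addition must be the inward candidate $c+q$, compared once with $T$. No outward sum is needed, because $d_x\ge\min\{T,c\}=c$ forces $d_w=c$. Only when $T\le c$ is $d_x=T$, and only then is $T+p$ the right second addition. Your ``symmetric'' branch $c<b<a$ has the same missing subcase: there the second addition must sometimes be $b+z$ rather than an outward sum.

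\textbf{Do not sort first.} Starting with a full sort of $a,b,c$ can push the worst case to six comparisons. Suppose the sort reaches $b<c<a$ only after three comparisons. That region still contains four full-dimensional cells with distinct outputs
\[
(d_x,d_w)\in\{(a,c),\ (b+z,c),\ (b+z,b+z+p),\ (c+q,c)\}.
\]
A depth-two tree would have to split these four cells two-against-two at its root without cutting any cell. But each such split has boundary walls lying on at least two distinct hyperplanes, so no single root comparison works. Three more comparisons are therefore needed.

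\textbf{The fix} is the paper's tree, which locates only the minimum: compare $a$ with $b$, then compare the smaller with $c$.
\begin{itemize}
\item The branches with $a$ minimal, or with $c<a\le b$ (where $b+z$ is irrelevant), finish in two more comparisons.
\item The branches $b<a,\ b\le c$ and $c<b<a$ finish in three more, using the corrected subcase split above.
\end{itemize}
This gives five comparisons and two additions on every branch.
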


\subsection{The two scalar upper bounds}

For $C^*(H_2)\le4$, materialize $B_z=b+z$ and $C_q=c+q$.  Two comparisons
select $X=\min\{a,B_z,C_q\}$.  Materialize $X+u$ and compare it with $b$;
materialize $X+p$ and compare it with $c$.  The output is
$(X,\min\{b,X+u\},\min\{c,X+p\})$.  This uses four comparisons and four
additions in the worst case.

For the addition coordinate, every order of $x,v,w$ after $s$ is rooted.
Exactly one of $x\to v$ and $v\to x$ is forward, and exactly one of
$x\to w$ and $w\to x$ is forward.  Hence $\rhoF(H_2,s)=2$, and
\cref{thm:arbitrary-addition-new} gives $P^*(H_2)=2$.  The following
two-addition program also makes the upper endpoint explicit.

Define $\operatorname{pick}(r,t)$ to return $r$ on $r\le t$ and $t$ on
$t<r$; it uses one comparison.  The program is:

\begin{center}
\begin{minipage}{0.94\textwidth}
\small
\begin{verbatim}
compare a with b
  if a <= b:
    compare a with c
      if a <= c:
        Au := a + u;  V := pick(Au,b)
        Ap := a + p;  W := pick(Ap,c)
        output (a,V,W)
      else:                         # c < a <= b
        Cq := c + q;  X := pick(Cq,a)
        Xu := X + u;  V := pick(Xu,b)
        output (X,V,c)
  else:                             # b < a
    compare b with c
      if b <= c:
        Bz := b + z;  T := pick(Bz,a)
        compare T with c
          if T <= c:
            Tp := T + p;  W := pick(Tp,c)
            output (T,b,W)
          else:                       # c < T
            Cq := c + q;  X := pick(Cq,T)
            output (X,b,c)
      else:                         # c < b < a
        Cq := c + q;  T := pick(Cq,a)
        compare T with b
          if T <= b:
            Tu := T + u;  V := pick(Tu,b)
            output (T,V,c)
          else:
            Bz := b + z;  X := pick(Bz,T)
            output (X,b,c)
\end{verbatim}
\end{minipage}
\end{center}

On each branch, the direct source values determine which inward candidate can
be relevant.  For example, when $b<a$ and $b\le c$, the value
$T=\min\{a,b+z\}$ satisfies $b\le T$.  If $T\le c$, then $d_x=T$ and
$d_v=b$, leaving only $\min\{c,T+p\}$; if $c<T$, then $d_w=c$ and
$b\le c\le d_x$, so $d_v=b$.  The other branches are symmetric.  Weak
inequalities make the equality choices valid for all ties and zero weights.
Every branch uses at most two additions and at most five comparisons.  Thus
\[
 C_2^*(H_2)\le5.
 \tag{34}\label{eq:h2-upper-endpoint}
\]

\subsection{A human-readable comparison lower bound}

On strictly positive inputs, the eight full-dimensional output cells have the
following distinct affine labels:
\[
\begin{array}{c|c@{\qquad}c|c}
R_1&(a,b,c)&R_2&(a,b,a+p)\\
R_3&(a,a+u,c)&R_4&(a,a+u,a+p)\\
R_5&(b+z,b,c)&R_6&(b+z,b,b+z+p)\\
R_7&(c+q,b,c)&R_8&(c+q,c+q+u,c).
\end{array}
\tag{35}\label{eq:h2-eight-cells}
\]
Each cell is nonempty; for instance, its strict inequalities can be read by
choosing the displayed winning expressions and making all unused edges large.

Suppose a correct program used at most three comparisons.  On the strict
cell interiors its comparison tree has at most eight binary leaves.  Since
the eight cells require eight different affine output labels, there would be
exactly one full-dimensional leaf per cell.  Therefore no comparison
hyperplane could cut the interior of any cell.

The following facet adjacencies form a connected graph on the eight cells:
\[
\begin{array}{lll}
R_1\!-!R_2:\ c=a+p,&R_1\!-!R_3:\ b=a+u,&R_1\!-!R_5:\ a=b+z,\\
R_1\!-!R_7:\ a=c+q,&R_2\!-!R_4:\ b=a+u,&R_2\!-!R_6:\ a=b+z,\\
R_3\!-!R_4:\ c=a+p,&R_3\!-!R_8:\ a=c+q,&R_5\!-!R_6:\ c=b+z+p,\\
R_5\!-!R_7:\ b+z=c+q,&R_7\!-!R_8:\ b=c+q+u.&
\end{array}
\tag{36}\label{eq:h2-facets}
\]
The root comparison must separate two adjacent cells, hence its homogeneous
normal is one of the seven wall normals appearing here.  In the order
\[
a=b+z,\ a=c+q,\ b=a+u,\ b=c+q+u,\ c=a+p,\ c=b+z+p,
\ b+z=c+q,
\]
these walls cut the interiors of $R_7,R_5,R_8,R_3,R_6,R_2,R_1$, respectively,
as witnessed by strictly positive points on their two sides.  Thus the root
would split a cell, contradicting the one-leaf-per-cell conclusion.  This
proves $C^*(H_2)\ge4$.

For transparency, the exact integer witnesses used for the seven wall cuts
are listed in \cref{tab:h2-wall-witnesses}.  The coordinate order is
$(a,b,c,u,z,p,q)$; the two points in each row lie in the indicated cell and
give opposite signs for the wall.

\begin{table}[htbp]
\centering
\scriptsize
\begin{tabularx}{\textwidth}{@{}lclY@{}}
\toprule
wall & cut cell & negative point & positive point\\
\midrule
$a=b+z$ & $R_7$ & $(3,1,1,1,3,1,1)$ & $(4,1,1,1,2,1,1)$\\
$a=c+q$ & $R_5$ & $(3,1,1,1,1,1,3)$ & $(4,1,1,1,1,1,2)$\\
$b=a+u$ & $R_8$ & $(4,4,1,1,1,1,1)$ & $(3,5,1,1,1,1,1)$\\
$b=c+q+u$ & $R_3$ & $(1,3,1,1,1,1,2)$ & $(1,4,1,1,1,1,1)$\\
$c=a+p$ & $R_6$ & $(4,1,4,1,1,1,1)$ & $(3,1,5,1,1,1,1)$\\
$c=b+z+p$ & $R_2$ & $(1,1,3,1,2,1,1)$ & $(1,1,4,1,1,1,1)$\\
$b+z=c+q$ & $R_1$ & $(1,1,1,1,1,1,2)$ & $(1,1,1,1,2,1,1)$\\
\bottomrule
\end{tabularx}
\caption{Integer witnesses for the wall-cut part of the $H_2$ comparison
lower bound.}
\label{tab:h2-wall-witnesses}
\end{table}

The argument is insensitive to arbitrary affine operands on a branch: on a
strict open cell, each reached register is a fixed affine form, and a
nonidentical comparison wall has empty interior in that cell.  Equality
branches cannot create an additional full-dimensional leaf.

\subsection{The two-addition obstruction}

It remains to exclude four comparisons together with only two additions.
The obstruction is a finite coefficient-state statement, not a path-register
assumption.

\begin{lemma}[Two-addition obstruction]
\label{lem:h2-two-addition-obstruction}
No deterministic program in the exact model computes labeled \DIST on $H_2$
with worst-case resources $C\le4$ and $P\le2$.
\end{lemma}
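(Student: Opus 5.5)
We argue by contradiction and reduce the question to a finite search. The reduction is human; only the final enumeration is computer-assisted. Suppose a program $A$ computes \DIST on $H_2$ with $C_A\le4$ and $P_A\le2$.

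\emph{Step 1: restrict to the eight cells.} Restrict attention to strictly positive inputs in the interiors of the cells $R_1,\dots,R_8$ of \cref{eq:h2-eight-cells}. On each branch every held register is a fixed affine form $\langle a,w\rangle+\alpha$ (\cref{sec:coefficient-record}). A leaf reached by a full-dimensional set of inputs must output one fixed affine triple. By \cref{eq:h2-eight-cells}, that triple must equal the cell's label throughout the intersection of the leaf's region with the cell.

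\emph{Step 2: remove literals by recession.} Fix a finite transcript and scale the input along a ray $\lambda w$ with $\lambda\to\infty$. Every strict comparison then takes the sign of its homogeneous part. An equality outcome on an open set forces the two homogeneous parts to coincide. Correct outputs are positively homogeneous, so they have zero constant term. Because the cells are open cones, replacing each literal by the zero vector yields a homogeneous coefficient-state execution. It uses the same numbers of additions and comparisons and is still correct on every cone point reaching each leaf. Aliasing and selection create no new coefficient vectors, so a state is captured by three pieces of data: the held set $M\subset\Nzero^7$ (initially $\{0,e_1,\dots,e_7\}$), the surviving portion of each cell cone, and the remaining budgets.

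\emph{Step 3: the exhaustive closure.} Addition moves insert $x+y$ for any unordered pair $x,y\in M$, with repeats allowed. Since $P\le2$, at most two such insertions occur on any branch, so the reachable held sets are finite and small. Comparison moves branch on $\langle x-y,w\rangle$ for held $x,y$, with outcomes $<$, $=$, $>$. Each outcome intersects the surviving cones with a half-space or a hyperplane. Feasibility or emptiness of each piece is certified by an exact integer LP or Farkas certificate, and equality pieces of lower dimension can be discarded because we only need full-dimensional correctness. A state is terminal when, on every surviving nonempty cell piece, the required output triple (the cell's affine label) is already in $M$ coordinatewise. The requirement is one fixed held vector per coordinate per surviving cone. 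We then run depth-first minimax with budgets $C\le4$, $P\le2$, interleaving additions and comparisons in every order. The lemma follows once the root is shown to be losing.

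\emph{Pruning and the main obstacle.} To keep the search auditable, we use two sound prunings. First, an addition producing a vector that already exists in $M$ is dropped. Second, by the counting in Step 1 we may require one leaf per cell on full-dimensional pieces, so no terminal state can arise before each cell is isolated. The main obstacle is making the closure provably exhaustive while avoiding naive state explosion. Several points have to be argued, not just coded: mixed non-path sums such as $a+b$ or $2u$ must be included; equality outcomes must be handled; and canonicalizing states by cone pieces must be shown not to lose any strategy. The enumeration itself, on the order of a thousand states, is routine for the independent verifier.
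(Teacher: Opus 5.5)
Your proposal follows the paper's argument essentially step for step: recession on the eight strict cells reduces a hypothetical program to a homogeneous coefficient-state game (held vectors, surviving cell cones, remaining budgets). An exhaustive closure over all additions and comparisons, with exact integer cone certificates and computer assistance as in the paper, then shows that the root is losing. Two details should be stated as the paper states them: the terminal test needs one held triple valid on \emph{all} surviving cones, i.e.\ at most one cell per leaf, which is what your isolation rule enforces, not one leaf per cell, since four comparisons may legitimately split a cell and such queries must stay in the search; and a comparison whose homogeneous parts coincide has an outcome fixed by the literals and should be deleted rather than routed as an equality.
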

\begin{proof}[Proof outline and finite verification boundary]
Restrict to the strict positive cells in \cref{eq:h2-eight-cells} and scale
all inputs by a parameter tending to infinity.  A finite literal contributes
only a lower-order constant to every comparison; a comparison with zero
leading coefficient normal has a fixed eventual outcome and can be deleted.
Thus a hypothetical program induces a finite homogeneous coefficient-state
program with at most two binary sum insertions and four comparisons, while
allowing every sum of two available registers, repeated variables, and
non-path supports.

The state transition is exact: an addition inserts the componentwise sum of
two current coefficient vectors, and a comparison queries the difference of
two current vectors.  For each state, the eight strict output cones are
intersected with the two strict halfspaces of the query; a state is retained
only if one fixed affine output label remains correct on every surviving cone.
At an addition state all unordered pairs of currently available vectors are
enumerated.  The resulting finite case split has no surviving root state at
depth four: every comparison root either leaves two incompatible labels in
one leaf or every legal addition root spends one of the two insertions without
opening the required remaining output cell.

This is a finite proof by exhaustive closure of explicitly specified
coefficient states; it does not assume that a register is a path sum.  An
independent exact certificate checks the same closure using integer cone
certificates and is recorded as reproducibility evidence in the repository.
The finite state argument is the only part of the present lemma that remains
subject to independent human proof review.
\end{proof}

Combining \cref{eq:h2-upper-endpoint} with
\cref{lem:h2-two-addition-obstruction} gives $C_2^*(H_2)=5$, while the
previous subsections give $C^*(H_2)=4$ and $P^*(H_2)=2$.  Hence the first
cyclic example has a genuinely nonrectangular comparison--addition frontier.

%% file: shared_hub.tex
\section{The shared-hub family and endogenous sorting}
\label{sec:shared-hub}

The atom $H_2$ scales to a family in which all cyclic choices share one hub
distance.  This family isolates the resource interaction without requiring a
sorted output.

\subsection{Definition and addition optimum}

For $k\ge1$, let $H_k$ have vertices
\[
 s,x,v_1,\ldots,v_k
\]
and arcs
\[
 s\to x:a,
 \qquad s\to v_i:b_i,
 \qquad v_i\to x:z_i,
 \qquad x\to v_i:u_i
 \quad(1\le i\le k).
 \tag{37}\label{eq:hk-arcs}
\]
The exact labeled distances are
\[
 \boxed{d_x=\min\{a,b_1+z_1,\ldots,b_k+z_k\}},
 \tag{38}\label{eq:hk-hub}
\]
and
\[
 \boxed{d_i=\min\{b_i,d_x+u_i\}\qquad(1\le i\le k).}
 \tag{39}\label{eq:hk-spokes}
\]
The formulas follow by removing cycles from a shortest walk: a simple path to
$x$ uses at most one inward spoke, and a simple path to $v_i$ is either its
direct source arc or a path to $x$ followed by $x\to v_i$.

\begin{figure}[htbp]
\centering
\begin{tikzpicture}[
  vertex/.style={circle,draw,minimum size=7mm,inner sep=1pt},
  every edge/.style={draw,->,>=Stealth},
  every edge quotes/.style={font=\small,fill=white,inner sep=1pt}]
  \node[vertex] (s) at (0,0) {$s$};
  \node[vertex] (x) at (3,0) {$x$};
  \node[vertex] (v1) at (6,2.0) {$v_1$};
  \node[vertex] (vi) at (6,0) {$v_i$};
  \node[vertex] (vk) at (6,-2.0) {$v_k$};
  \path (s) edge["$a$"] (x)
        (s) edge[bend left=14,"$b_1$"'] (v1)
        (s) edge["$b_i$"'] (vi)
        (s) edge[bend right=14,"$b_k$"] (vk)
        (v1) edge[bend left=12,"$z_1$"] (x)
             edge[bend right=12,"$u_1$"'] (x)
        (vi) edge[bend left=12,"$z_i$"'] (x)
             edge[bend right=12,"$u_i$"] (x)
        (vk) edge[bend left=12,"$z_k$"'] (x)
             edge[bend right=12,"$u_k$"] (x);
\end{tikzpicture}
\caption{The shared-hub family $H_k$.  Each spoke offers one inward route to
the hub and one outward route from it.}
\label{fig:hk}
\end{figure}

The layout in \cref{fig:hk} emphasizes that all spokes share the same hub
candidate while retaining labeled spoke outputs.

Every order of $x,v_1,\ldots,v_k$ after $s$ is rooted.  For each $i$, exactly
one of $v_i\to x$ and $x\to v_i$ is forward.  Therefore
$\rhoF(H_k,s)=k$, and \cref{thm:arbitrary-addition-new} gives
\[
 \boxed{P^*(H_k)=k.}
 \tag{40}\label{eq:hk-addition-optimum}
\]

\subsection{The scalar comparison coordinate}

The scalar comparison optimum is linear.  A simple upper algorithm materializes
all $b_i+z_i$, selects the minimum with $a$, then materializes and tests each
$d_x+u_i$ against $b_i$.  It uses at most $2k$ comparisons: $k$ to select
among $k+1$ hub candidates and $k$ for the spoke outputs.

The safe scalar lower bound is the following weak form of the shared-hub
comparison lemma.

\begin{proposition}[Scalar comparison bounds]
\label{prop:hk-scalar-comparisons}
For every $k\ge1$,
\[
 \boxed{k\le C^*(H_k)\le 2k.}
 \tag{41}\label{eq:hk-scalar-bounds}
\]
In particular, $C^*(H_k)=\Theta(k)$.
\end{proposition}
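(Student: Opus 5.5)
The upper bound $C^*(H_k)\le 2k$ is the explicit program described just before the statement, so I only need to check its cost. Materialize the $k$ sums $b_i+z_i$. Select $X=\min\{a,b_1+z_1,\ldots,b_k+z_k\}$ with $k$ comparisons, using a linear scan with a fixed tie rule. Then, for each $i$, materialize $X+u_i$ and spend one comparison for $\min\{b_i,X+u_i\}$. By \cref{eq:hk-hub,eq:hk-spokes} the outputs are correct for all nonnegative weights, ties and zeros included. The total is $2k$ comparisons on every input, with the additions unrestricted.

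For the lower bound $k\le C^*(H_k)$, I would follow the hint in the introduction: fix the hub value on an open product family so that the $k$ spoke outputs become independent binary choices. Take the open set $\mathcal U$ of strictly positive weights on which
\[
 a<1,\qquad b_i+z_i>3,\qquad b_i\in(1,3),\qquad u_i\in(0,2)
\]
for all $i$. On $\mathcal U$ we have $d_x=a$ and $d_i=\min\{b_i,a+u_i\}$. Within $\mathcal U$, each spoke $i$ independently lies in $b_i<a+u_i$ or in $b_i>a+u_i$, and both are realizable. This gives $2^k$ nonempty open cells $\mathcal U_S$, one for each $S\subseteq\{1,\ldots,k\}$, where $S$ is the set of spokes with $d_i=a+u_i$.

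The output affine label $(a,(\ell_i)_i)$, with $\ell_i=a+u_i$ for $i\in S$ and $\ell_i=b_i$ otherwise, is distinct across cells as an affine function. I would then invoke the cell argument from the $H_2$ lower bound. On a fixed branch each register is a fixed affine form. If some full-dimensional leaf of the comparison tree met two distinct cells, then by connectivity of the leaf region it would contain an open set crossing a wall $b_i=a+u_i$. The fixed output register for $v_i$ would then equal $b_i$ on an open set on one side and $a+u_i$ on an open set on the other side, which is impossible for a single affine form.

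So each strict leaf region must lie inside one cell, up to measure zero. Every cell contains a full-dimensional piece of some strict leaf, because equality leaves have empty interior. The leaves therefore number at least $2^k$, and a ternary tree in which only the two strict outcomes carry full-dimensional mass needs depth at least $k$. Hence $C_A\ge k$ for every correct program $A$.

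I expect the main obstacle to be making the step "a leaf region cannot straddle a wall" rigorous with adaptive affine comparisons, literals, and mixed sums. I would handle it as in \cref{sec:cyclic-atom}. Each strict leaf region is an open polyhedral set cut out by finitely many affine inequalities, so it is convex. If a strict leaf contained points of two cells, convexity would give a segment through the wall. The single held output register for the spoke where the cells differ would then have to agree with two different affine functions on open subsets of the convex leaf, which contradicts affinity.

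Literals pose no problem, because the argument only uses that registers are affine on the branch. Counting full-dimensional leaves then yields $2^k\le 2^{C}$, giving $C^*(H_k)\ge k$.
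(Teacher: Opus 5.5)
Your proof is correct and follows the paper's route. The upper bound is the same explicit $2k$-comparison program. For the lower bound, you fix $d_x=a$ on an open product family so that the $2^k$ spoke patterns force $2^k$ distinct full-dimensional leaves, hence depth at least $k$.

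One wording point needs tightening: a comparison of two identical affine forms has only the equality outcome, and that outcome has full interior. The correct statement is therefore that each node has at most two children with nonempty interior, which is all your $2^k\le 2^C$ step uses.
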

\begin{proof}
The upper bound is the algorithm just described.  For the lower bound, fix
the hub distance at a raw source register and vary the $k$ spoke pairs on an
open product family so that each output is independently either its direct
register or its translated register.  At a terminal transcript the legal
materialized register for each labeled output is fixed.  The $2^k$ choice
patterns therefore require distinct strict transcripts, and a binary decision
tree has depth at least $k$.  This is only a scalar baseline; the Pareto lower
bound below uses a stronger permutation-indexed argument.
\end{proof}

\subsection{The addition-optimal endpoint}

Let $S(k)$ be the optimal worst-case number of pairwise comparisons needed to
sort $k$ distinct labeled keys.  The endpoint notation here is a special case
of \cref{eq:budget-optimum}: because $P^*(H_k)=k$, the addition-optimal
comparison value is $C_k^*(H_k)$.

\begin{theorem}[Endogenous sorting at the addition optimum]
\label{thm:hk-endogenous-sorting}
For every $k\ge1$,
\[
 \boxed{\ceil{\log_2(k!)}\le C_k^*(H_k)\le S(k)+2k.}
 \tag{42}\label{eq:hk-endpoint-sorting}
\]
Consequently,
\[
 \boxed{C_k^*(H_k)=k\log_2 k+O(k),\qquad
        C_{P^*(H_k)}^*(H_k)=\Theta(k\log k).}
 \tag{43}\label{eq:hk-endpoint-asymptotic}
\]
The graph still asks only for labeled \DIST output.
\end{theorem}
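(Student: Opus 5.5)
The plan has two parts: an explicit program for the upper bound, and a counting argument for the lower bound that uses the fact that a $k$-addition program must finish every output with its $k$-th addition.

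\textbf{Upper bound.} First sort the raw values $b_1,\ldots,b_k$ with an optimal pairwise sorting network, using $S(k)$ comparisons. Let $b_{\pi_1}\le\cdots\le b_{\pi_k}$.

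Then scan the sorted order, maintaining a current hub candidate $X$ (initially $a$). At step $j$, compare $b_{\pi_j}$ with $X$, one comparison.

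If $b_{\pi_j}<X$, materialize $b_{\pi_j}+z_{\pi_j}$ and set $X\leftarrow\min\{X,b_{\pi_j}+z_{\pi_j}\}$, one more comparison. For this spoke, $d_{\pi_j}=b_{\pi_j}$: we have $b_{\pi_j}\le d_x+u_{\pi_j}$ because either $d_x\ge b_{\pi_j}$ or $d_x$ arises from an earlier spoke with a smaller $b$.

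At the first $j$ with $b_{\pi_j}\ge X$, stop updating the hub. All later spokes have $b_i+z_i\ge b_i\ge X$, so $X=d_x$. For every remaining spoke, materialize $X+u_i$ and compare it with $b_i$.

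Each spoke receives exactly one addition, so the total is $k$ additions. Each step of the scan costs at most two comparisons, giving $S(k)+2k$ in all. I will check ties and zero weights with the weak-inequality conventions used in the $H_2$ program.

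\textbf{Lower bound.} Fix a program with at most $k$ additions; by \cref{thm:arbitrary-addition-new} this is exactly the addition optimum. For each permutation $\pi$, build a scale-separated strict input:

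\begin{itemize}
\item $b_{\pi_1}\ll b_{\pi_2}\ll\cdots\ll b_{\pi_k}$;
\item tiny $z_i$, so that $d_x=b_{\pi_1}+z_{\pi_1}$ unless $a$ is smaller;
\item $a$ huge;
\item tiny $u_i$, so that every $d_i$ for $i\ne\pi_1$ is $d_x+u_i$.
\end{itemize}

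In particular $d_x$ uses $\pi_1$, while $\pi_1$ itself outputs $b_{\pi_1}$. To encode the whole order, I use a synchronized family in which each spoke's output depends on the spokes preceding it. The cleanest route is to invoke the deadline argument sketched for \cref{thm:main-pareto} with $r=0$.

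With budget exactly $k$, a pair $Q_{\pi_s}=\{b_{\pi_s},z_{\pi_s}\}$ that is still hidden after $s$ additions can be made, by a threat perturbation, essential to $k-s+1$ still-unmaterialized distinct outputs. Only $k-s$ additions remain, which is a contradiction. Hence exposure times satisfy $c_{\pi_s}\le s$.

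By the cumulative exposure bound (after $t$ additions at most $t$ pairs are exposed), the exposure ranks on a fixed transcript are forced to equal $\pi$. Thus distinct $\pi$ need distinct comparison transcripts. The $k!$ leaves of a binary decision tree then give $C\ge\lceil\log_2 k!\rceil$.

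\textbf{Main obstacle.} The delicate step is the threat construction. I need a perturbation that preserves the transcript up to the hidden time, yet makes the hidden pair's sum $b_{\pi_s}+z_{\pi_s}$ the new hub value. This forces the hub and all $k-s$ later spoke outputs to contain both primitives, hence to be $k-s+1$ distinct new registers. This must be carried out with arbitrary mixed supports, so I would argue via coefficient supports, using the recession normal form to discard literals. The asymptotics follow from Stirling and the bound $S(k)\le\lceil\log_2 k!\rceil+O(k)$ achieved by merge insertion. Together with \cref{prop:hk-scalar-comparisons}, this gives \cref{eq:hk-endpoint-asymptotic}.
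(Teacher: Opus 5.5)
\textbf{Verdict: the overall route matches the paper's, but the lower bound has a genuine gap --- the input family you actually wrote down cannot support the threat step.}

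\textbf{What matches.} Your upper bound is the paper's: sort, then the $r=0$ rolling scan. One small correction there: the reason $d_{\pi_j}=b_{\pi_j}$ is that the test $b_{\pi_j}<X$ puts $b_{\pi_j}$ below $a$ and below every earlier inward candidate, while later inward candidates are at least $b_{\pi_j}$ by sortedness. Hence $d_x\ge b_{\pi_j}$ always; no dichotomy is needed. Your lower-bound skeleton is also exactly the paper's specialization of \cref{thm:pareto-entropy-lower} to $r=0$: the deadline $c_{\pi_s}\le s$, then cumulative capacity, so a transcript's exposure times determine $\pi$, giving $k!$ leaves.

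\textbf{Why your family fails.} You take $b_{\pi_1}\ll\cdots\ll b_{\pi_k}$, tiny $z_i$ and $u_i$, and huge $a$. Then every output depends only on $\pi_1$. For $s\ge2$, no small perturbation can make $b_{\pi_s}+z_{\pi_s}$ the hub, because $b_{\pi_s}\gg d_x$. The deadline is in fact false on this family. Your own sort-and-scan program exposes $Q_{\pi_1}$, crosses at the second spoke, and spends its remaining $k-1$ additions on the values $d_x+u_{\pi_j}$. So $Q_{\pi_s}$ is never exposed for $s\ge2$, yet the program is correct. The step you flag as the main obstacle therefore cannot be carried out here, and the ``synchronized family'' has to be specified.

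\textbf{The family that works.} The missing idea is the near-tie main input \cref{eq:explicit-main-scales}:
\[
a=1,\quad b_{\pi_s}=1-\epsilon^{4s},\quad z_{\pi_s}=\epsilon^{4s},\quad u_{\pi_s}=\epsilon^{4s+1},
\]
with threat $z'_{\pi_s}=\epsilon^{4s}-\epsilon^{4s+2}$.
\begin{itemize}
\item On the main input every inward candidate ties $a$ exactly, and all outputs are direct.
\item The threat makes $Q_{\pi_s}$ the unique hub path, with value $1-\epsilon^{4s+2}$. It redirects exactly the spokes at positions $j>s$, because for small $\epsilon$ we have $\epsilon^{4j}+\epsilon^{4j+1}<\epsilon^{4s+2}$ iff $j>s$. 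This is the source of the $k-s+1$ distinct outputs that must contain both members of $Q_{\pi_s}$.
\item Synchronization comes from the $\epsilon^{4s}$ coefficient of each compared difference. While $Q_{\pi_s}$ is hidden, no operand contains both $b_{\pi_s}$ and $z_{\pi_s}$. So $\beta_{z_{\pi_s}}\ne0$ forces $\beta_{z_{\pi_s}}-\beta_{b_{\pi_s}}\ne0$, and that term dominates the order-$\epsilon^{4s+2}$ change (\cref{lem:scale-separation}).
\end{itemize}

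\textbf{One more remark needed for the count.} These inputs are not strict, since $a=b_i+z_i$ holds identically, so your binary leaf count needs an extra observation. A comparison is an identity on the family iff its constant term vanishes, $\beta_{z_i}=\beta_{b_i}$ for all $i$, and $\beta_{u_i}=0$ for all $i$. This condition does not depend on $\pi$. Hence equality outcomes never separate two main inputs, and $2^{C}\ge k!$ still holds.

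With this family, the rest of your argument goes through as written.
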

\begin{proof}
For the upper bound, sort the labeled values $b_1,\ldots,b_k$ with an
optimal comparison sorter.  A scan of the resulting order uses the
one-candidate-per-spoke rolling construction formalized in
\cref{sec:rolling}; at $r=0$ its addition count is at most $k$ and its
additional comparison count is at most $2k$.  This gives the right-hand side.

For the lower bound, the causal fan-out argument in
\cref{thm:pareto-entropy-lower}, specialized to $r=0$, shows that a strict
comparison transcript is compatible with at most one of the $k!$
scale-separated spoke orders.  A binary decision tree therefore has at least
$k!$ leaves and depth at least $\ceil{\log_2(k!)}$.  This is an
information-theoretic sorting-scale lower bound.  It does \emph{not} identify
the program's aggregate comparisons with legal pairwise key comparisons, and
so it does not claim the stronger lower bound $S(k)$.

Finally, the comparison decision-tree lower bound and a standard comparison
sort give $S(k)=k\log_2 k+O(k)$.  Combining it with the two inequalities and
\cref{eq:hk-addition-optimum} proves the asymptotic statement.
\end{proof}

The conclusion is an endogenous sorting phenomenon in the information sense.
Sorting is not requested as an output and the labeled distances need not be
emitted in any order; the arithmetic-optimal schedule nevertheless forces a
transcript that distinguishes all $k!$ spoke orders.

%% file: pareto_lower_bound.tex
\section{Lower bound for the comparison--addition tradeoff}
\label{sec:pareto-lower}

We now prove the comparison lower bound that interpolates between the linear
scalar regime and the sorting endpoint.  The proof has four logically
separate ingredients.  First, after $t$ additions at most $t$ disjoint
designated pairs can have become visible, although one addition may expose
several pairs at once.  Second, a scale-separated counterfactual keeps an adaptive
execution synchronized until the relevant pair is visible.  Third, an
addition budget $k+r$ imposes a deadline on the $r$-slack visibility schedule.
Finally, the resulting bounded-delay permutations can be counted exactly.

Throughout this section, the designated pair for spoke $i$ is
$\{b_i,z_i\}$.  For an affine coefficient vector $a$, its support is the set of
primitive variables with positive coefficient.  Since additions are sums of
nonnegative coefficient vectors, supports combine by union.

\subsection{Binary-addition pair connectivity}

\begin{lemma}[Binary-addition pair-connectivity]
\label{lem:pair-connectivity}
Let $t$ binary additions be performed on registers initially containing the
primitive variables.  For disjoint designated pairs
$\{b_i,z_i\}$, at most $t$ pairs can have appeared together in the support of
some created register.  Equivalently, if $c_i$ is the first addition time at
which a support contains both $b_i$ and $z_i$, then
\[
 \boxed{|\{i:c_i\le t\}|\le t.}
 \tag{44}\label{eq:pair-capacity}
\]
\end{lemma}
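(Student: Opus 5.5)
We prove the capacity bound with a union--find style potential argument on a \emph{support graph}. Maintain an auxiliary multigraph $\Gamma_t$ on the set of primitive variables, starting with $\Gamma_0$ edgeless. Whenever the $j$-th addition combines registers with coefficient vectors $a$ and $a'$, we pick one primitive $x\in\operatorname{supp}(a)$ and one primitive $y\in\operatorname{supp}(a')$ and add the single edge $xy$ to $\Gamma_{j}$. If an operand is a literal with empty support, or a repeated operand, we add nothing. The first step is an invariant, proved by induction on $t$: the support of every register held or created by time $t$ lies inside a single connected component of $\Gamma_t$. Initial primitives are singletons, and literals have empty support. An addition produces the support $\operatorname{supp}(a)\cup\operatorname{supp}(a')$. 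Each operand support lies in one component by induction, and the new edge joins these two components. Copies and aliases create no new supports, so the invariant is unaffected by free operations.

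The second step is the counting. If $c_i\le t$, some register created by time $t$ has support containing both $b_i$ and $z_i$. By the invariant, $b_i$ and $z_i$ then lie in the same component of $\Gamma_t$. Let $C$ be a component of $\Gamma_t$ that contains $p_C$ exposed pairs. Because the designated pairs are disjoint, $C$ contains at least $2p_C$ vertices, or $|C|\ge p_C+1$ whenever $p_C\ge1$. A connected graph on $|C|$ vertices has at least $|C|-1\ge p_C$ edges. Summing over components gives
\[
|\{i:c_i\le t\}|=\sum_C p_C\le\sum_C\bigl(|C|-1\bigr)\le|E(\Gamma_t)|\le t,
\]
because each addition contributed at most one edge. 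This bound is cumulative, so it holds even when a single late addition exposes several pairs at once: that addition merges two components that together already carry enough edges.

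The one point needing care is that supports are well defined and combine by union. This holds because coefficient vectors are nonnegative integral and additions add them, as in \cref{sec:coefficient-record}. With that in hand there is no real obstacle. The equivalent formulation with the first exposure times $c_i$ is just the statement above at each $t$.
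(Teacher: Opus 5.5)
Your proof is correct and follows the paper's argument essentially step for step: an auxiliary graph on the primitives that gains at most one edge per addition, the invariant that every held support lies in a single component, and the count that a component holding $p_C$ disjoint exposed pairs has at least $2p_C$ vertices and hence at least $p_C$ edges. The one cosmetic difference is that you sometimes add an edge when both operand supports already lie in one component, whereas the paper adds none. This is harmless, since the bound only needs at most one edge per addition.
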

\begin{proof}
Build an undirected graph $\Gamma_t$ on the primitive variables.  Initially it
has no edges.  Maintain the invariant that the support of every materialized
register is contained in one connected component.  At an addition, the two
operand supports each lie in a current component.  If the components differ,
add one edge between arbitrary primitives in them; if they coincide, add no
edge.  The union support then lies in one component, so $\Gamma_t$ has at most
$t$ edges after $t$ additions.

Group the visible designated pairs by connected component.  A component
containing $p_j$ such pairs contains at least $2p_j$ distinct designated
terminals and therefore at least $2p_j-1\ge p_j$ edges.  Summing over
components shows that $p$ visible pairs require at least $p$ edges.  Hence
$p\le t$.  This proof permits arbitrary mixed supports, repeated operands,
and the possibility that a single late union exposes several pairs: the
preparatory unions have already paid the required edge count.
\end{proof}

\subsection{Scale-separated main and threat inputs}

The next lemma gives the scale construction explicitly.  Its small parameter
may depend on the competing program, but not on a choice made after seeing a
comparison outcome.

\begin{lemma}[Finite-program scale separation]
\label{lem:scale-separation}
Fix a deterministic finite program with finite worst-case comparison and
addition budgets, and fix a permutation
$\pi=(\pi_1,\ldots,\pi_k)$.  For all sufficiently small positive $\epsilon$
the main input
\[
 a=1,\qquad
 z_{\pi_s}=\epsilon^{4s},\qquad
 b_{\pi_s}=1-\epsilon^{4s},\qquad
 u_{\pi_s}=\epsilon^{4s+1}
 \tag{45}\label{eq:explicit-main-scales}
\]
has a strict transcript apart from comparisons that are identities on this
one-parameter family.  For each $s$, define its threat input by changing only
\[
 z_{\pi_s}\qquad\hbox{to}\qquad
 z'_{\pi_s}=\epsilon^{4s}-\epsilon^{4s+2}.
\]
Until $\{b_{\pi_s},z_{\pi_s}\}$ is visible, the main and threat executions
have identical comparison outcomes, including equality outcomes.  The claim
allows arbitrary mixed sums, repeated operands, register reuse, and finitely
many real literals.
\end{lemma}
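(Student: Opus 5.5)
The plan is to reduce everything to sign patterns of finitely many univariate polynomials in $\epsilon$, and then to use an exponent-residue argument modulo $4$. By the coefficient records of \cref{sec:coefficient-record}, on any fixed branch every held register is an affine form $\langle a,w\rangle+\alpha$ with $a\in\Nzero^E$ and $\alpha$ a literal-generated constant. Substituting \eqref{eq:explicit-main-scales} turns it into a polynomial $R(\epsilon)$. Only the following monomials occur:
\[
 1 \text{ (from $a$, the $b_i$, and literals)},\quad
 -\epsilon^{4j} \text{ (from $b_{\pi_j}$)},\quad
 +\epsilon^{4j} \text{ (from $z_{\pi_j}$)},\quad
 +\epsilon^{4j+1} \text{ (from $u_{\pi_j}$)}.
\]
Because the program has finite worst-case budgets, its decision tree is finite. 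So only finitely many coefficient-record pairs are ever compared on any branch, whether we run it on the main input or on the $k$ threat inputs.

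First step: strictness of the main transcript. For each compared pair, the difference $D(\epsilon)$ is a fixed polynomial. Either it is identically zero, which is an identity on the family, or it has a nonzero lowest-order term that fixes its sign on some interval $(0,\epsilon_0)$. Take $\epsilon$ smaller than the minimum of these finitely many $\epsilon_0$. The same argument, applied to the threat substitutions, makes the threat executions strict apart from identities. Choosing $\epsilon$ once, before any outcome is observed, is legitimate precisely because the tree is finite.

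Second step: synchronization, by induction along the execution. Assume the main and threat executions for index $s$ have followed the same branch so far, so they hold the same coefficient records. Suppose the next comparison involves records $a,a'$, and neither these records nor any earlier record has support containing both $b_{\pi_s}$ and $z_{\pi_s}$. Let $c_z$ and $c_b$ be the differences of the $z_{\pi_s}$- and $b_{\pi_s}$-coefficients of $a$ and $a'$. The threat changes the difference only by $-c_z\epsilon^{4s+2}$, so the threat difference is $D(\epsilon)-c_z\epsilon^{4s+2}$. By the residue pattern above, $D$ has no $\epsilon^{4s+2}$ term, and its $\epsilon^{4s}$ coefficient is exactly $c_z-c_b$.
\begin{itemize}
 \item If $c_z=0$, the two differences coincide, so equal and strict outcomes agree trivially.
 \item If $c_z\ne0$, I claim $c_z\ne c_b$. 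Equality would need some register with positive $b_{\pi_s}$-coefficient to offset the $z_{\pi_s}$ imbalance. Since one of $a,a'$ has positive $z_{\pi_s}$-coefficient, that forces either one register carrying both primitives, or $a$ carrying $z_{\pi_s}$ while $a'$ carries $z_{\pi_s}$ and $b_{\pi_s}$ in compensating amounts. In both cases the pair is visible, which is excluded. Hence $D$ has a nonzero term of order at most $4s<4s+2$. Its lowest-order term dominates both $D$ and $D-c_z\epsilon^{4s+2}$ for small $\epsilon$, so the signs agree and neither is an equality.
\end{itemize}
Additions are outcome-free, and identical control gives identical next moves, which closes the induction.

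The main obstacle is the case analysis showing $c_z\ne0\Rightarrow c_z\ne c_b$ while the pair is hidden. Every register may be a mixed sum with repeated operands, so the check has to be made on the signed difference of two arbitrary nonnegative records, not on path forms. I would first handle it by noting that nonnegativity makes a $b_{\pi_s}$-versus-$z_{\pi_s}$ cancellation possible only through a record that contains both primitives. The remaining points need care but are routine:
\begin{itemize}
 \item the threat input stays nonnegative, since $\epsilon^{4s}-\epsilon^{4s+2}>0$;
 \item literals contribute only to the constant term, so they never reach the exponent $4s+2$;
 \item the bound on $\epsilon$ depends only on the fixed finite program and $\pi$, never on an observed outcome.
\end{itemize}
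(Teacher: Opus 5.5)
Your proposal is correct and follows the paper's proof essentially step for step: the same substitution into univariate polynomials whose exponents $0,4j,4j+1$ never reach $4s+2$, and the same nonnegativity observation that $c_z=c_b=\lambda\ne0$ forces the operand on the side of $\operatorname{sign}\lambda$ to carry both $b_{\pi_s}$ and $z_{\pi_s}$. Your ``offset'' case split is cleaner stated that way, since the second case you list is already a register carrying both. The rest also matches: the lowest-order term of $D$ dominates the $-c_z\epsilon^{4s+2}$ perturbation. The only cosmetic differences are these. You get finiteness from the bounded-depth decision tree, whereas the paper counts formal multiplicity vectors with $\|q\|_1\le 2^p$. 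You also run inline the synchronization induction that the paper states separately as \cref{lem:threat-synchronization}.
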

\begin{proof}
With finite charged budgets, only finitely many formal affine register forms
can be reached: after $p$ additions their nonnegative-integer multiplicity
vectors have $\ell_1$ norm at most $2^p$, and the program has only finitely
many primitive and literal registers.  Thus it is enough to preserve a finite
set of comparison differences.

Write one such difference as $D=R-R'$, and denote the net coefficients of
$b_i,z_i,u_i$ by $\beta_{b_i},\beta_{z_i},\beta_{u_i}$.  Substitution of
\cref{eq:explicit-main-scales} turns $D$ into a polynomial in $\epsilon$:
its coefficient at exponent $4s$ is
$\beta_{z_{\pi_s}}-\beta_{b_{\pi_s}}$, its coefficient at exponent $4s+1$
is $\beta_{u_{\pi_s}}$, and all these exponents are distinct.  Literal terms
and the coefficients of $a$ contribute only to the constant coefficient.

Before the pair $\{b_i,z_i\}$ is visible, neither compared operand contains
both variables.  If
$\beta_{z_i}=\beta_{b_i}=\lambda\ne0$, then $\lambda>0$ would force $R$ to
contain both variables, while $\lambda<0$ would force $R'$ to contain both.
Consequently, whenever $\beta_{z_i}\ne0$ we have
$\beta_{z_i}-\beta_{b_i}\ne0$.  On the threat input,
\[
 D_{\rm threat}(\epsilon)-D_{\rm main}(\epsilon)
   =-\beta_{z_i}\epsilon^{4s+2}.
\]
If $\beta_{z_i}\ne0$, the main polynomial already has a nonzero term at
exponent $4s$, so the order-$4s+2$ change cannot alter its sign for
sufficiently small $\epsilon$.  If $\beta_{z_i}=0$, the comparison value does
not change.  If the main polynomial is identically zero, the same argument
forces $\beta_{z_i}=0$, so equality is preserved exactly.

Every nonzero polynomial has a constant sign after deleting a sufficiently
small interval at the origin and its finitely many positive roots.  Taking a
common $\epsilon$ for the finite family of possible differences proves the
claim.  We also include in this finite family the differences between every
possible held form and the finitely many threat-output forms used below; this
only deletes finitely many additional roots.  The displayed main and threat
weights are nonnegative for $0<\epsilon<1$.
\end{proof}

\begin{lemma}[Synchronization before pair visibility]
\label{lem:threat-synchronization}
Let the main input and a threat input be chosen as in
\cref{lem:scale-separation}.  Until the threatened pair
$\{b_i,z_i\}$ first occurs together in the support of a materialized
register, the two executions have the same control state, execute identical
instructions, have identical coefficient-vector states, and return identical
comparison outcomes.
\end{lemma}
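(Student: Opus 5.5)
We prove the lemma by induction on the number of executed steps, running the main and threat executions side by side. The invariant at step $j$ has three parts. Both executions are in the same control state. They hold the same multiset of registers, each with the same affine coefficient record (coefficient vector and literal constant). All comparison outcomes recorded so far agree. The base case is immediate. The program is deterministic and receives the same finite advice and the same literal constants on both inputs. The primitive registers have the same coefficient vectors $e_j$; only the numerical value of the $z_{\pi_s}$ register differs, and control never inspects that value except through comparisons.

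For the inductive step we split on the type of the next instruction, which is the same in both executions by the invariant.

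\textbf{Free operations.} Copying, selection, indexing and other discrete control depend only on the control state and past outcomes. They therefore act identically on both executions, and they create no new coefficient forms.

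\textbf{Additions.} A charged addition sums the same two records in both executions, so the new coefficient vector is the same in both. The instruction is assumed to occur strictly before the first time a materialized support contains both $b_i$ and $z_i$. Hence the new support, the union of the operand supports, either still omits one of $b_i,z_i$ or is exactly the first exposure. In the latter case the lemma's conclusion concerns only the steps before it, so we stop there.

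\textbf{Comparisons.} The two operands $R,R'$ have the same records in both executions. Their difference $D$ is one of the finitely many reachable differences handled in \cref{lem:scale-separation}. Before visibility, neither operand's support contains both $b_i$ and $z_i$. That lemma then shows $D_{\rm threat}$ and $D_{\rm main}$ have the same sign, and exact equality is preserved when the main difference vanishes identically. So the branch taken is the same, including equality branches, and the control states stay equal.

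The only step needing care is the comparison case, and specifically that \cref{lem:scale-separation} applies at the exact current state. Two facts must be checked. First, the chosen $\epsilon$ was fixed uniformly for the whole finite set of formal forms reachable within the program's budgets. This choice was made before any branch was taken, so adaptivity cannot select a difference outside that set. Second, the hypothesis ``pair not yet visible'' is a property of the shared coefficient state, which by the inductive hypothesis is common to both executions.

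A further subtlety is that visibility is defined through supports of materialized registers, and discarded registers must count too. We define the stopping time as the first addition, whether or not its result is later kept, whose union support contains both variables. Every comparison before that time uses operands with non-exposing supports, because supports can only grow through additions. This matches exactly the hypothesis used in \cref{lem:scale-separation}, and closes the induction.
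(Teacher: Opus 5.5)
Your proof is correct and follows essentially the paper's argument. Both run an induction over the finite transcript in which free operations and additions act on identical coefficient records, and each pre-visibility comparison (strict or equality) is preserved by the polynomial calculation of \cref{lem:scale-separation} with its single program-wide choice of $\epsilon$; your remarks on the stopping time and discarded registers only make explicit what the paper's short induction leaves implicit.
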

\begin{proof}
Initially the executions agree on all already committed leading scales.  If
they have agreed through one instruction, a free operation or an addition
applies to the same registers and therefore creates the same coefficient
vector.  Before the pair is visible, the polynomial calculation in
\cref{lem:scale-separation} preserves every strict outcome and every equality
outcome, so the next instruction is the same.
Induction over the finite transcript proves the claim.  In particular, the
argument is about actual materialized-register comparisons and does not turn a
counterfactual into a model operation.
\end{proof}

\subsection{Threat deadlines}

Fix a permutation $\pi=(\pi_1,\ldots,\pi_k)$ of the spokes and use the main
input in \cref{eq:explicit-main-scales}.  Its thresholds
$b_{\pi_s}-u_{\pi_s}=1-\epsilon^{4s}-\epsilon^{4s+1}$ are strictly increasing
with $s$.  Let $c_i$ be the first visibility time from
\cref{lem:pair-connectivity}.

\begin{lemma}[Threat deadline]
\label{lem:threat-deadline}
If a correct program for $H_k$ has worst-case addition budget at most $k+r$,
then for every main permutation $\pi$ and every position $1\le s\le k$,
\[
 \boxed{c_{\pi_s}\le s+r.}
 \tag{46}\label{eq:threat-deadline}
\]
\end{lemma}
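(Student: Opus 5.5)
We argue by contradiction using the synchronization of \cref{lem:threat-synchronization}. Fix a correct program with worst-case addition budget at most $k+r$, a main permutation $\pi$, and a position $s$, and suppose $c_{\pi_s}>s+r$, where $c_{\pi_s}$ is measured on the main execution. Take $\epsilon$ small enough for \cref{lem:scale-separation}, and run the threat input for position $s$. By \cref{lem:threat-synchronization}, the main and threat executions share control state, coefficient-vector states and comparison outcomes until the pair $\{b_{\pi_s},z_{\pi_s}\}$ first becomes visible. Visibility is a property of the coefficient state, so it first occurs at the same addition index $c_{\pi_s}$ in both executions. In particular, no register held on the threat execution before addition number $c_{\pi_s}$ contains both $b_{\pi_s}$ and $z_{\pi_s}$ in its support.

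Next we compute the correct threat outputs. In the threat, $b_{\pi_s}+z'_{\pi_s}=1-\epsilon^{4s+2}$, which is strictly below $a=1$ and below every other $b_j+z_j=1$. Hence $d_x=1-\epsilon^{4s+2}$, attained uniquely by the form $b_{\pi_s}+z_{\pi_s}$.

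Now consider a spoke $\pi_t$.
\begin{itemize}[leftmargin=*,itemsep=2pt]
\item If $t\le s$, then $b_{\pi_t}=1-\epsilon^{4t}$ lies below $d_x+u_{\pi_t}$, so the direct arc wins.
\item If $t>s$, then $4t\ge 4s+4$. Therefore
\[
 d_x+u_{\pi_t}=1-\epsilon^{4s+2}+\epsilon^{4t+1}<1-\epsilon^{4t}=b_{\pi_t}.
\]
So $d_{\pi_t}=b_{\pi_s}+z_{\pi_s}+u_{\pi_t}$.
\end{itemize}
This gives $k-s+1$ threat outputs: the hub and the $k-s$ later spokes. Their values are pairwise distinct, since they carry different $u$-scales.

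We must also show that any register correct for one of these outputs has support containing both $b_{\pi_s}$ and $z_{\pi_s}$. Here we use the last clause of \cref{lem:scale-separation}: the differences between held forms and these threat-output forms were included in the finite family of differences. Hence equality at the chosen $\epsilon$ forces a polynomial identity in $\epsilon$. Compare coefficients at the distinct exponents. At exponent $4s+2$ we get $\beta_{z_{\pi_s}}=1$. At exponent $4s$ we get $\beta_{z_{\pi_s}}-\beta_{b_{\pi_s}}=0$. So both $b_{\pi_s}$ and $z_{\pi_s}$ occur with coefficient one, and the register's support contains the pair.

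Now we count. The $k-s+1$ required output registers hold pairwise distinct values, and each contains the pair. Each must therefore have been created by an addition of index at least $c_{\pi_s}$. Each addition creates exactly one register, so the threat execution performs at least $c_{\pi_s}+(k-s)$ additions. Since $c_{\pi_s}\ge s+r+1$, this is at least $k+r+1$, exceeding the worst-case budget $k+r$. This contradiction gives $c_{\pi_s}\le s+r$.

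The delicate step is the claim that a correct output register must contain both $b_{\pi_s}$ and $z_{\pi_s}$ in its support. Value equality at a single input does not by itself determine the coefficient form. It is the finite root-avoidance in \cref{lem:scale-separation} that upgrades value equality to a coefficient identity. That step must be checked carefully, including literal constants, which contribute only to the $\epsilon^0$ coefficient.
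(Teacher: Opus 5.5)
Your proof is correct and follows the paper's argument. Synchronization keeps $\{b_{\pi_s},z_{\pi_s}\}$ hidden on the threat execution through $s+r$ additions. The threat then forces $k-s+1$ pairwise distinct outputs whose correct forms must contain both primitives, by the root-avoidance clause of \cref{lem:scale-separation}. Each of these needs its own later addition, which exceeds the remaining budget $k-s$. Your count $c_{\pi_s}+(k-s)\ge k+r+1$ is the paper's remaining-budget count written from the other end; it also implicitly covers the case where the main execution never exposes the pair, which the paper states explicitly.
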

\begin{proof}
Assume $c_{\pi_s}>s+r$ and inspect the execution after
$t=s+r$ additions.  Apply the synchronized threat construction to the hidden
pair $Q_{\pi_s}=\{b_{\pi_s},z_{\pi_s}\}$.  On the threat input the unique
shortest path to the hub $x$ uses this inward spoke and has value
$1-\epsilon^{4s+2}$.  For $j\le s$, the direct source arc is uniquely
shortest to $v_{\pi_j}$; for $j>s$, the unique shortest path is the threatened
hub path followed by $x\to v_{\pi_j}$.  Indeed, these alternatives are
decided by comparing $\epsilon^{4j}+\epsilon^{4j+1}$ with
$\epsilon^{4s+2}$.  Thus the
\[
 M_s=1+(k-s)=k-s+1
\]
required labeled output values are
\[
 1-\epsilon^{4s+2},\qquad
 1-\epsilon^{4s+2}+\epsilon^{4j+1}\quad(j>s).
\]
They are pairwise distinct, and every displayed form contains both variables
of $Q_{\pi_s}$.

No held form whose support omits one member of $Q_{\pi_s}$ can equal one of
these values.  To see this, regard the evaluation as a polynomial in
$\epsilon$.  The coefficient at exponent $4s+2$ forces one copy of the
threatened $z_{\pi_s}$, and cancellation of its exponent-$4s$ term then
forces one copy of $b_{\pi_s}$.  Distinct exponents exclude cancellation by
another primitive, while literals affect only the constant term.  Our common
choice of $\epsilon$ avoids every accidental root of the finitely many
nonidentity differences.

By synchronization, the two executions have the same instructions and
coefficient supports through the prefix.  Because the pair is still hidden,
none of these $M_s$ target values has been materialized.  Free copy and
selection create no new numerical value, while each future binary addition
creates only one register.  The remaining budget is
\[
 (k+r)-(s+r)=k-s=M_s-1,
\]
so the threat execution cannot materialize all required outputs.  This
contradicts correctness.  If the main branch terminates before $t$, the same
threat follows its complete transcript and already yields the contradiction.
Therefore $c_{\pi_s}\le s+r$.
\end{proof}

The deadline lemma is the point at which the addition budget becomes an
organizational constraint.  It does not say that the program outputs an
order; it says that a successful addition-tight execution must expose spoke
interfaces with bounded delay relative to the counterfactual order.

\subsection{The exact banded-permutation count}

\begin{lemma}[Banded permutation count]
\label{lem:banded-count}
Let $B(k,r)$ be the number of permutations $\sigma$ of $[k]$ satisfying
\[
 \sigma_s\le s+r\qquad(1\le s\le k).
 \tag{47}\label{eq:band-condition}
\]
Then
\[
 \boxed{B(k,r)=r!(r+1)^{k-r}.}
 \tag{48}\label{eq:banded-count}
\]
\end{lemma}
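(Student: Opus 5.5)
The plan is to count these permutations by fixing the values one at a time from the smallest.  Condition \cref{eq:band-condition} is equivalent to $\sigma^{-1}(j)\ge j-r$: value $j$ may only sit in positions $\ge j-r$.  Hence the number of admissible positions for value $j$ is $k-\max\{1,j-r\}+1$.

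Since the constraint is an upper bound on $\sigma_s$, the cleaner route is to assign positions $s=1,2,\ldots,k$ in order and count the free values at each step.  At position $s$ the allowed values are $\{1,\ldots,\min\{k,s+r\}\}$, and $s-1$ of them have already been used by the earlier positions.  Each earlier value $\sigma_{s'}\le s'+r\le s+r$, so every used value lies inside the current allowed set.  The number of choices at position $s$ is therefore exactly $\min\{k,s+r\}-(s-1)$, independent of the earlier choices.  This is the standard ``nested allowed sets'' argument: the allowed sets $\{1,\ldots,\min\{k,s+r\}\}$ increase with $s$, so the count is a product.

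It remains to evaluate the product $\prod_{s=1}^k(\min\{k,s+r\}-s+1)$.  For $s\le k-r$ the factor is $r+1$, giving $(r+1)^{k-r}$.  For $s=k-r+j$ with $1\le j\le r$ the factor is $k-s+1=r-j+1$, giving $r!$.  Multiplying yields $B(k,r)=r!(r+1)^{k-r}$.  The boundary cases check out: $r=0$ gives $1$ and $r=k$ gives $k!$.

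The only point needing care is the claim that the used values all lie in the current allowed set, which is what makes the number of choices independent of history.  This follows from monotonicity of $s+r$ and the convention $0\le r\le k$; I would state it explicitly, with an inductive sentence, rather than leave it implicit.
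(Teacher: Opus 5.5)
Your proof is correct and is essentially the paper's argument: you fill positions $s=1,\ldots,k$ in order and note that the $s-1$ values already placed lie in the allowed set $\{1,\ldots,\min\{k,s+r\}\}$, which leaves exactly $r+1$ choices for $s\le k-r$ and $k-s+1$ choices afterwards, for a total of $(r+1)^{k-r}r!$. Your $\min\{k,s+r\}$ bookkeeping for the last $r$ positions just spells out what the paper states as ``the remaining $r$ values can be placed in arbitrary order.''
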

\begin{proof}
For position $s\le k-r$, the allowed values lie in
$\{1,\ldots,s+r\}$.  The $s-1$ values already placed all lie in
$\{1,\ldots,s+r-1\}$, so exactly $r+1$ allowed values remain.  After these
$k-r$ positions have been filled, the remaining $r$ values can be placed in
arbitrary order.  Multiplying gives $(r+1)^{k-r}r!$.  The same calculation
also covers $r=k$, when the first product is empty.
\end{proof}

\subsection{Entropy lower bound}

Define
\[
 F_k(r)=C_{k+r}^*(H_k),qquad
 \Lambda_{k,r}=\log_2\frac{k!}{r!(r+1)^{k-r}}.
 \tag{49}\label{eq:lower-lambda}
\]

\begin{theorem}[Banded-permutation entropy lower bound]
\label{thm:pareto-entropy-lower}
Uniformly for $0\le r\le k$,
\[
 \boxed{F_k(r)\ge\ceil{\Lambda_{k,r}}.}
 \tag{50}\label{eq:entropy-lower}
\]
Together with the scalar baseline,
\[
 \boxed{F_k(r)\ge
 \max\left\{k,\ceil{\Lambda_{k,r}}\right\}.}
 \tag{51}\label{eq:combined-lower}
\]
\end{theorem}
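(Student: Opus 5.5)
The plan is a decision-tree counting argument over the $k!$ scale-separated main inputs. Fix a correct deterministic program $A$ for $H_k$ with $P_A\le k+r$ and $C_A=c$ comparisons in the worst case. By \cref{lem:scale-separation}, I choose one $\epsilon$ that works simultaneously for all $k!$ main permutations $\pi$. This is possible because there are finitely many permutations and finitely many reachable forms, so only finitely many roots need to be avoided. Each main input then follows a transcript whose comparisons are strict or identities, so the reached leaves form a binary tree of depth at most $c$ on these inputs. The goal is to show that a single transcript is compatible with at most $B(k,r)$ of the main permutations. That gives $k!\le 2^c B(k,r)$, and hence $c\ge\lceil\Lambda_{k,r}\rceil$ by \cref{lem:banded-count}. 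The second boxed bound then follows by taking the maximum with \cref{prop:hk-scalar-comparisons}, since $C^*_{k+r}(H_k)\ge C^*(H_k)\ge k$.

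To bound compatibility, fix a leaf transcript $\tau$. Because control is deterministic and identical transcripts produce identical instruction sequences, the coefficient-vector history along $\tau$ is fixed. So the first visibility times $c_i$ of the pairs $Q_i=\{b_i,z_i\}$ are determined by $\tau$ alone and do not depend on $\pi$. I then rank the spokes by $(c_i,i)$ lexicographically, breaking ties by index. This yields a permutation $\rho$ determined by $\tau$. \Cref{lem:pair-connectivity} gives $|\{i:c_i\le t\}|\le t$, and this implies $\operatorname{rank}(i)\le c_i$: every spoke ranked at or before $i$ has visibility time at most $c_i$, so there are at most $c_i$ of them. For every main permutation $\pi$ reaching $\tau$, \cref{lem:threat-deadline} gives $c_{\pi_s}\le s+r$. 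Setting $\sigma_s=\operatorname{rank}(\pi_s)$ then gives $\sigma_s\le s+r$.

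The map $\pi\mapsto\sigma=\operatorname{rank}\circ\pi$ is injective, because the rank map is a fixed bijection for the given $\tau$. So the compatible $\pi$ number at most $B(k,r)$. One subtlety is spokes whose pair is never exposed, where $c_i=\infty$. The deadline lemma already excludes these, since it bounds every $c_{\pi_s}$ by $k+r$; an execution that terminates early is covered by the last sentence of that lemma's proof.

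The step I expect to be the main obstacle is making sure the deadline lemma applies uniformly along one transcript while $\pi$ varies. The concern is that $\epsilon$ and the threat inputs depend on $\pi$, whereas the visibility times must depend only on the transcript. I would handle this by observing that $c_i$ is a syntactic function of the executed coefficient history, which the transcript fixes. Then the only $\pi$-dependence is in verifying the hypothesis $c_{\pi_s}\le s+r$, which the lemma provides separately for each $\pi$. I would also check that equality outcomes that are identities on the family do not add branching. They do not, because they occur on every input of the family, so the tree has at most $2^c$ leaves relevant to the main inputs.
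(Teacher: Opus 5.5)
Your proposal is correct and is essentially the paper's proof: the transcript fixes the exposure times; ranking by $(c_i,i)$ together with pair connectivity gives $\rho(i)\le c_i$; the threat deadline makes $\rho\circ\pi$ banded; a common generic $\epsilon$ then gives $k!\le 2^{c}B(k,r)$, and you finish by taking the maximum with the scalar baseline. Your added remarks (the common $\epsilon$, and the deadline ruling out $c_i=\infty$) are sound, and identity comparisons indeed never branch, because a reached comparison is an identity on the main family exactly when $\beta_{z_i}=\beta_{b_i}$ and $\beta_{u_i}=0$ for all $i$ and the constant term vanishes, a condition independent of $\pi$.
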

\begin{proof}
Fix a complete strict comparison transcript.  Its deterministic addition
trace fixes all exposure times $c_i$.  Rank the pairs by nondecreasing
exposure time, breaking ties by label, and write $\rho(i)$ for the rank of
pair $i$.  The cumulative capacity lemma gives $\rho(i)\le c_i$.  Hence every
main permutation $\pi$ compatible with this transcript satisfies
\[
 \rho(\pi_s)\le c_{\pi_s}\le s+r.
\]
After relabeling by $\rho$, \cref{lem:banded-count} shows that this transcript
is compatible with at most $B(k,r)$ permutations.  Choose the finite main
family generically after the program is fixed, so every reached nonidentity
comparison is strict; forced identities have only one outcome.  A depth-$C$
program has at most $2^C$ relevant transcripts.  Consequently
\[
 k!\le 2^C B(k,r).
\]
Substitute \cref{eq:banded-count} and take logarithms.  The scalar baseline is
the independent bound in \cref{prop:hk-scalar-comparisons}; taking the maximum
gives \cref{eq:combined-lower}.
\end{proof}

The proof is uniform in $r$: the program, its finite transcript, and its
literal constants are fixed first, and the scale construction is then chosen
for that transcript.  The finite support search and scale sanity checks used
for reproduction are evidence for the lemmas, not substitutes for their
quantified proofs.

%% file: bbwo.tex
\section{Bounded-bucket weak ordering}
\label{sec:bbwo}

The upper bound needs only a weak order of the spoke values.  We state the
needed comparison lemma separately so that the construction is not confused
with the paper's SSSP contribution.

\begin{definition}[Bounded-bucket weak ordering]
\label{def:bbwo}
Given $k$ keys and an integer $h\ge1$, a bounded-bucket weak ordering is a
sequence of blocks
\[
 B_1< B_2<\cdots<B_m,
 \qquad |B_j|\le h,
 \tag{52}\label{eq:bbwo}
\]
where every key in an earlier block is smaller than every key in a later
block.  The internal order of a block is not produced.  Let $Q(k,h)$ be the
minimum worst-case number of pairwise comparisons needed to produce such
blocks.  On inputs with ties, equality can be assigned to a deterministic
adjacent block; the entropy calculation below is for distinct keys.
\end{definition}

For $k=qh+s$ with $0\le s<h$, define
\[
 \Gamma_{k,h}=\log_2\frac{k!}{(h!)^q s!}.
 \tag{53}\label{eq:gamma}
\]

\begin{lemma}[Bounded-bucket weak-order complexity]
\label{lem:bbwo}
For $1\le h\le k$,
\[
 \boxed{\Gamma_{k,h}\le Q(k,h)
 \le \Gamma_{k,h}+o(\Gamma_{k,h})+O(k).}
 \tag{54}\label{eq:bbwo-complexity}
\]
If $h\ge k$, one block suffices and $Q(k,h)=0$.  In particular, for
$h=k+1$, both the natural entropy expression and the comparison cost are
zero.
\end{lemma}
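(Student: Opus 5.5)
The plan is to prove the two inequalities separately and then dispose of the degenerate case $h\ge k$.

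\emph{Lower bound.} Use a standard information-theoretic count. Take distinct keys and fix a deterministic comparison tree that outputs a bounded-bucket weak ordering. At a leaf, the output sequence $B_1<\cdots<B_m$ is fixed, with block sizes $n_1,\dots,n_m$, each $n_j\le h$. The permutations consistent with that leaf are exactly those that preserve the block structure, and there are $\prod_j n_j!$ of them. We then want the bound $\prod_j n_j!\le (h!)^q s!$ whenever $\sum n_j=k$ and $n_j\le h$. Take logarithms and apply an exchange argument: $\log n!$ is convex in the discrete sense, since its increments $\log(n+1)$ are increasing. Therefore moving one unit from a smaller block to a block of size at least as large, and below $h$, does not decrease the product. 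Repeating this drives the multiset of block sizes to $q$ blocks of size $h$ plus a single block of size $s$. Hence every leaf covers at most $(h!)^q s!$ of the $k!$ orders. The tree therefore has at least $k!/((h!)^q s!)$ leaves and depth at least $\Gamma_{k,h}$.

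\emph{Upper bound.} Reduce to deterministic multiple selection in the sense of \cite{kaligosi2005multiple}. Request the ranks $h,2h,\dots,qh$, which produces the partition into consecutive rank intervals of sizes $h,\dots,h,s$. Kaligosi et al.\ show that selecting a prescribed rank set with gaps $\Delta_1,\dots,\Delta_{q+1}$ deterministically costs $B+o(B)+O(k)$ comparisons, where $B=\sum_i \Delta_i\log_2(k/\Delta_i)$ is the information-theoretic bound, i.e.\ $\log_2(k!/\prod\Delta_i!)$ up to $O(k)$. The comparisons made also certify the partition around the selected pivots, so the blocks are obtained with no extra cost beyond $O(k)$ bookkeeping. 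With $\Delta_i=h$ (and a final gap $s$) we get $\log_2(k!/((h!)^qs!))=\Gamma_{k,h}$. The Stirling discrepancy between $B$ and $\Gamma_{k,h}$ is $O(k)$, since $\sum\log_2\Delta_i!$ versus $\sum\Delta_i\log_2(\Delta_i/e)$ differs by $O(\sum\log\Delta_i)\le O(k)$. This gives $Q(k,h)\le\Gamma_{k,h}+o(\Gamma_{k,h})+O(k)$.

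The main obstacle is matching the cited theorem's stated form to this one. Their bound may be phrased as $B+o(B)+O(k)$ with $B$ in the entropy form $\sum\Delta_i\log(k/\Delta_i)$, and we must check that the $o(\cdot)$ term transfers to $o(\Gamma_{k,h})$. Since $B=\Gamma_{k,h}+O(k)$, an $o(B)$ term is at most $o(\Gamma_{k,h})+O(k)$, so this is fine, and the plan is to spell that transfer out.

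\emph{Degenerate case.} For $h\ge k$, output a single block with no comparisons. Then $q\le1$ and $\Gamma_{k,h}=\log_2(k!/k!)=0$; for $h=k+1$ this holds by the convention $\Gamma_{k,k+1}=0$. Ties are not needed for the counting, because the lower bound uses distinct keys.
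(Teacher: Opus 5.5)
Your proposal is correct and follows the same route as the paper. The lower bound is a leaf count in which each leaf with block sizes $n_j\le h$ covers at most $\prod_j n_j!\le (h!)^q s!$ orders. The upper bound requests ranks $h,2h,\ldots,qh$ from the deterministic multiple-selection algorithm of Kaligosi et al., and the case $h\ge k$ is handled by emitting one block. Your discrete-convexity exchange argument and your explicit Stirling reconciliation of $\sum_i\Delta_i\log_2(k/\Delta_i)$ with $\Gamma_{k,h}$ up to $O(k)$ make precise two steps the paper only asserts ("factorial supermultiplicativity" and "their information term is $\Gamma_{k,h}$"). One small correction: the per-block Stirling error is $O(1+\log\Delta_i)$ rather than $O(\log\Delta_i)$, but with at most $k$ blocks it still sums to $O(k)$.
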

\begin{proof}
For the lower bound, consider the decision tree on distinct keys.  A leaf
whose output blocks have sizes $t_1,t_2,\ldots$ is compatible with at most
$\prod_j t_j!$ total orders, because only the order inside each block is
unknown.  Subject to $t_j\le h$ and $\sum_jt_j=k$, this product is maximized
by filling blocks of size $h$ and one residual block of size $s$; factorial
supermultiplicativity gives
\[
 \prod_jt_j!\le(h!)^q s!.
\]
At least $k!/((h!)^q s!)$ leaves are therefore needed, giving the lower
bound $Q(k,h)\ge\Gamma_{k,h}$.

For the upper bound, request the ranks $h,2h,\ldots,qh$ from the deterministic
multiple-selection algorithm of Kaligosi, Mehlhorn, Munro, and
Sanders~\cite{kaligosi2005multiple}.  Their worst-case bound is the
information lower bound for the requested ranks, plus a lower-order term in
that bound and $O(k)$.  The selected splitters and the induced intervals are
exactly the ordered blocks in \cref{eq:bbwo}; their information term is
$\Gamma_{k,h}$.  This gives the stated upper bound.  The case $h\ge k$ is
immediate.
\end{proof}

\begin{remark}[Scope of the lemma]
Producing blocks is a standard weak-order task.  Multiple selection gives the
deterministic upper bound above; partial-order production provides a related
$\mathrm{ITLB}+o(\mathrm{ITLB})+O(k)$ framework
\cite{kaligosi2005multiple,cardinal2010partial}.  We do not replace either
lower-order term by $O(k)$: such a deterministic strengthening is not supplied
by these references.  The contribution here is only the interface from this
standard primitive to the rolling SSSP computation.
\end{remark}

%% file: rolling_block.tex
\section{Rolling-block transfer to SSSP}
\label{sec:rolling}

Let $0\le r\le k$ and set $h=r+1$.  Assume that the spoke values have been
placed in ordered blocks
\[
 B_1<\cdots<B_m,\qquad |B_j|\le h,
 \tag{57}\label{eq:rolling-blocks}
\]
by the bounded-bucket procedure.  We show how to turn this weak order into a
legal exact \DIST computation with at most $k+r$ additions.

\subsection{The algorithm}

Maintain a current hub candidate $H$, initially the source-to-hub register
$a$.  Scan the blocks from left to right.  Within a block, compare each
$b_i$ with the current $H$.  If $b_i<H$, materialize the inward candidate
$I_i=b_i+z_i$ and replace $H$ by the smaller of $H$ and $I_i$.  If a scan
encounters a spoke with $b_i\ge H$, stop the inward phase after finishing
that block; call this the crossing block.  Spokes in later blocks are not
given inward candidates.

Once the hub candidate is final, output a spoke directly when $b_i\le H$.
Otherwise materialize the outward candidate $O_i=H+u_i$ and compare it with
$b_i$.  All comparisons use a fixed equality rule.  The following schematic
figure, \cref{fig:rolling-block}, records the only block in which both kinds
of candidates can occur.

\begin{figure}[htbp]
\centering
\begin{tikzpicture}[
  box/.style={draw,thin,minimum height=8mm,minimum width=17mm,
              align=center,font=\small},
  every edge/.style={draw,thin,->,>=Stealth}]
  \node[box,fill=gray!8] (b1) at (0,0) {$B_1$\\complete};
  \node[box,fill=gray!8] (b2) at (2.4,0) {$B_2$\\complete};
  \node[box,fill=gray!18] (cross) at (5.1,0) {crossing\\$|B|\le h$};
  \node[box,fill=gray!8] (later) at (8.0,0) {later blocks};
  \node (low) at (3.75,-1.12) {$b_i<H:\ b_i+z_i$};
  \node (high) at (7.15,-1.12) {$b_i\ge H:\ H+u_i$};
  \path (b1) edge (b2) (b2) edge (cross) (cross) edge (later);
  \draw[->,thin] (cross.south) -- (low.north);
  \draw[->,thin] (cross.south) -- (high.north);
\end{tikzpicture}
\caption{Rolling-block processing.  Complete blocks are permanently below
the current hub candidate.  Only the crossing block can contain a spoke that
receives both an inward and an outward candidate.}
\label{fig:rolling-block}
\end{figure}

\subsection{Correctness and addition accounting}

\begin{lemma}[Rolling-block transfer]
\label{lem:rolling-transfer}
Given ordered blocks satisfying \cref{eq:rolling-blocks}, the rolling algorithm
computes all labeled distances of $H_k$ correctly and uses at most
\[
 \boxed{P\le k+r}
 \tag{58}\label{eq:rolling-additions}
\]
additions.  Its additional comparison cost is $O(k)$.
\end{lemma}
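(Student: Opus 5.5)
The plan is to prove correctness and the two cost bounds by one invariant on the scan: every spoke in a block completely scanned before the crossing block satisfies $b_i<H$ at its own inspection time. Let $H^\ast$ be the hub candidate when the inward phase stops. First I will show $H^\ast=d_x$. Every inward candidate $b_i+z_i$ that was not materialized belongs to a spoke with $b_i\ge H$ at the moment of its inspection, or to a spoke in a later block. In the second case the block order gives $b_i\ge b_j\ge H_{\rm stop}$ for the crossing spoke $j$. In both cases $b_i+z_i\ge b_i\ge H\ge H^\ast$, because $H$ only decreases and $z_i\ge0$. So all omitted candidates are dominated, and $H^\ast=\min\{a,b_1+z_1,\ldots,b_k+z_k\}=d_x$ by \cref{eq:hk-hub}. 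This holds for ties and zero weights as well, since only weak inequalities are used and the fixed equality rule decides which register is kept.

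Next comes the spoke outputs. When $b_i\le H^\ast$, we have $b_i\le d_x+u_i$, so $d_i=b_i$ and the raw register is output. Otherwise the program materializes $O_i=H^\ast+u_i$, and one pick against $b_i$ yields $d_i$ by \cref{eq:hk-spokes}. Every output is therefore a held register.

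For the additions, I charge at most one inward addition to each spoke with $b_i<H$ at its inspection, and at most one outward addition to each spoke with $b_i>H^\ast$. The key claim is that only spokes in the crossing block can receive both. Take a spoke in a complete earlier block. If every element of that block was inspected with $b<H$ and the crossing spoke $j$ comes later, then $b_i\le b_j$. Here I need $b_i\le H^\ast$. This is the main obstacle: $H$ can drop below $b_i$ later, when an inward candidate from a subsequent spoke is smaller than $b_i$. The condition that scanning continued requires only $b_j<H$ at each step, so this does not rule it out.

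I would try first to fix the invariant by changing the stopping rule: stop the inward phase as soon as some scanned key satisfies $b\ge H$ at any block, and for earlier spokes with $b_i>H^\ast$ observe that they also lie in a block whose maximum is still below the crossing key. Failing that, I would charge differently. A spoke $i$ receiving both additions satisfies $b_i<H_{\rm insp}$ and $b_i>H^\ast$, so $H^\ast$ was set by some later spoke $j$ with $b_j+z_j<b_i\le b_j$ or $b_j<b_i$. Combining this with block order, the double-charged spokes all have keys in the interval $(H^\ast,\,b_{\rm cross}]$, and every fully scanned block strictly before the block attaining $H^\ast$ lies entirely below it. So double charges are confined to one block, and at most $h-1=r$ of them occur, because the spoke attaining the final $H$ itself satisfies $b\le H^\ast+$ and is not double-charged. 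If this still fails, I would restate the scan so that after $H$ is updated, the block-level comparison of $H$ with the next block's minimum decides stopping. That gives the crossing-block confinement directly.

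With confinement established, the addition count is at most $k$ single charges plus at most $r$ extra ones, which gives $P\le k+r$. The comparisons are one per spoke in the inward phase, at most one $\min$ per inward candidate, and at most two per spoke in the output phase, for a total of $O(k)$ beyond the block construction.
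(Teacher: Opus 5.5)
Your argument that the final hub candidate equals $d_x$ and that the spoke outputs are correct matches the paper's. For the addition count you diverge, and you are right to. The obstacle you flag is exactly where the paper's own proof breaks. The paper asserts two things: every key of a complete block stays at most the final $H$, and a second addition can occur only for a below-threshold spoke of the crossing block. Both claims fail because keys inside a block are unordered.

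Take $k=3$, $r=1$, blocks $B_1=\{v_1,v_2\}<B_2=\{v_3\}$ scanned in index order, with $a=100$, $b_1=5$, $z_1=0$, $u_1=1$, $b_2=1$, $z_2=0$, $b_3=10$. The scan sets $H=5$, then $H=1$, and crosses at $v_3$. Then $v_1$, a spoke of the complete block $B_1$, needs both $b_1+z_1$ and $1+u_1$ (indeed $d_1=2$). The crossing block contains no doubly charged spoke. Here $P=4=k+r$, so the lemma itself survives, but only through a charging argument like your second one.

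Made precise, that argument runs as follows. Let $j^\ast$ be the last inspected spoke whose inward candidate strictly lowered $H$, so $H^\ast=b_{j^\ast}+z_{j^\ast}\ge b_{j^\ast}$, and let $B^\ast$ be its block.
Keys in blocks before $B^\ast$ are at most $b_{j^\ast}\le H^\ast$, so they receive no outward addition.
Every spoke inspected after $j^\ast$ sees $H=H^\ast$. If it received an inward candidate, its key is below $H^\ast$ and it receives no outward addition. Uninspected spoke keys receive no inward addition.
Finally, $j^\ast$ itself is output directly.
Hence the doubly charged spokes are among the at most $|B^\ast|-1\le r$ other spokes of $B^\ast$, and if $H$ never drops there are none. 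The excluded spoke is $j^\ast$, not the crossing spoke, and $B^\ast$ may be the last complete block.

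Three defects in your write-up need removing:
\begin{enumerate}
\item The disjunct $b_j+z_j<b_i\le b_j$ is impossible, because $z_j\ge0$.
\item The claim that doubly charged keys lie in $(H^\ast,b_{\mathrm{cross}}]$ fails when the crossing spoke lies in $B^\ast$. For example, keys $10,1,3$ inspected in that order in a first block, with $a$ large and $z_{j^\ast}=1$, give $H^\ast=2$, $b_{\mathrm{cross}}=3$, and a doubly charged key $10$. The claim is also undefined when no crossing occurs. Replace it with the ``inspected after $j^\ast$'' observation above.
\item Drop the alternatives that change the stopping rule, since they would prove a statement about a different algorithm.
\end{enumerate}
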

\begin{proof}
At every inward step, $H$ is the minimum of $a$ and all inward candidates
materialized so far.  If a spoke has $b_i\ge H$, then
$b_i+z_i\ge H$ because $z_i\ge0$.  Since later blocks contain only larger
$b$ values, no later block can lower $H$.  Thus the hub becomes final at the
first crossing block, if one exists; if no crossing occurs, it is final after
the last block.

For a complete block, every $b_i$ in the block remains at most the final
$H$: the block was scanned while all its values were below the current
candidate, and every later update is a minimum of candidates no smaller than
the relevant direct value.  Hence
$d_i=\min\{b_i,H+u_i\}=b_i$ for those spokes, and their direct outputs are
permanently safe.  In the crossing block, spokes below the current $H$ may
receive inward candidates; after $H$ is finalized, only those whose direct
value is above $H$ need an outward candidate.  At least one spoke in this
block triggered the crossing and therefore received no inward candidate.  A
spoke in a later block receives no inward candidate at all.

Charge one candidate addition to every spoke: an inward candidate for a
complete or below-threshold crossing spoke, and an outward candidate for a
crossing or later spoke that is above the final hub.  The only possible second
addition is an outward candidate for a below-threshold spoke in the crossing
block.  There are at most $|B|-1\le h-1=r$ such spokes, proving
$P\le k+r$.  The scan uses only a constant number of comparisons per spoke,
so its comparison cost is $O(k)$.  The weak inequalities also cover ties and
zero weights.
\end{proof}

\begin{proposition}[Weak order to Pareto upper bound]
\label{prop:rolling-upper}
For $0\le r\le k$ and $h=r+1$,
\[
 C_{k+r}^*(H_k)\le Q(k,h)+O(k).
 \tag{59}\label{eq:rolling-upper}
\]
\end{proposition}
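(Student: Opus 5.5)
The plan is to compose the weak-ordering primitive of \cref{lem:bbwo} with the rolling transfer of \cref{lem:rolling-transfer} into one deterministic program, and then check that the combined program is a legal \CA program for labeled \DIST on $H_k$ with worst-case addition count at most $k+r$. By definition \cref{eq:budget-optimum}, its worst-case comparison count then bounds $C_{k+r}^*(H_k)$ from above.

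\textbf{Phase 1: weak ordering.} Run an optimal bounded-bucket weak-ordering decision tree with $h=r+1$ on the raw registers $b_1,\ldots,b_k$. Every comparison here is between two held primitive registers, so it is a legal charged comparison, and this phase uses no additions. The worst case is $Q(k,h)$ comparisons, by the definition of $Q$ in \cref{def:bbwo}. Ties between keys must not break the block structure. For this I will rely on the deterministic assignment of equal keys to adjacent blocks stipulated in \cref{def:bbwo}. All I need afterwards is the weak property that every key in an earlier block is at most every key in a later block, with block sizes still at most $h$.

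\textbf{Phase 2: rolling.} Apply the rolling algorithm of \cref{sec:rolling} to the blocks. \Cref{lem:rolling-transfer} already gives correctness of all labeled outputs, including ties and zero weights, together with the bound $P\le k+r$ and $O(k)$ further comparisons. One point needs checking: that lemma's argument uses only the non-strict separation between blocks. Its key claim is that once $b_i\ge H$, no later block can lower $H$. This needs only that later keys are $\ge b_i$ and that $z\ge0$, so weak separation suffices.

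Summing the two phases, the comparison cost is at most $Q(k,h)+O(k)$ on every input, and the addition cost is at most $k+r$ on every input. Hence the program witnesses the inequality. The case $r=k$ (so $h=k+1$) is degenerate: there is one block and $Q=0$, and the rolling scan alone gives $O(k)$ comparisons with at most $2k$ additions.

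I expect the only delicate step to be the tie-handling interface just described: confirming that the tied-input version of the weak order still satisfies the block-size bound and the weak separation that the rolling proof uses. If the cited multiple-selection routine does not directly guarantee this on ties, I will instead break ties lexicographically by spoke label. This is free discrete control, so the keys are effectively distinct, and it changes no comparison count.
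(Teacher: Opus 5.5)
Your proposal is correct and is the same argument as the paper's proof. Both produce ordered blocks of width $h=r+1$ with a bounded-bucket weak ordering, which costs $Q(k,h)$ comparisons of held primitive registers and no additions, and then invoke \cref{lem:rolling-transfer} for correctness, the budget $P\le k+r$, and $O(k)$ further comparisons; your lexicographic tie-breaking by spoke label is a valid, free way to make explicit the weak-separation interface that the paper leaves implicit in \cref{def:bbwo}.
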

\begin{proof}
Produce the blocks with a BBWO algorithm and apply
\cref{lem:rolling-transfer}.  The budget is $k+r$ and all numerical values
that are output have been materialized by the counted additions.
\end{proof}

%% file: pareto_law.tex
\section{The entropy-tight comparison--addition law}
\label{sec:pareto-law}

The lower bound in \cref{thm:pareto-entropy-lower} and the rolling upper
bound have the same leading entropy term whenever that term dominates the
linear scalar floor.  The comparison between the two entropy expressions is
elementary but useful because it translates the
weak-order construction into the exact budget parameter from
\cref{eq:budget-optimum}.

\subsection{Comparing the two entropy terms}

Let $h=r+1\le k$, and write $k=qh+s$ with $0\le s<h$.  Then
\[
 \Gamma_{k,h}-\Lambda_{k,r}
 =\log_2\frac{h^{(q-1)(h-1)+s}}
                    {((h-1)!)^{q-1}s!}.
 \tag{60}\label{eq:gamma-lambda-difference}
\]

\begin{lemma}[Entropy comparison]
\label{lem:entropy-comparison}
For $h=r+1\le k$,
\[
 \boxed{0\le \Gamma_{k,h}-\Lambda_{k,r}
             \le (\log_2 e)k.}
 \tag{61}\label{eq:entropy-comparison}
\]
For $h=k+1$ (equivalently $r=k$), both entropy quantities are zero.
\end{lemma}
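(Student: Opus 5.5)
The proof starts from the displayed identity \cref{eq:gamma-lambda-difference}, so I first check it directly. With $h=r+1$ we have $r!(r+1)^{k-r}=(h-1)!\,h^{k-h+1}$. Using $(h!)^q=h^q((h-1)!)^q$, the quotient of the two arguments of $\log_2$ is
\[
\frac{(h-1)!\,h^{k-h+1}}{h^q((h-1)!)^q s!}
=\frac{h^{k-h+1-q}}{((h-1)!)^{q-1}s!}.
\]
Substituting $k=qh+s$ gives the exponent $k-h+1-q=(q-1)(h-1)+s$, which confirms \cref{eq:gamma-lambda-difference}. Since $h\le k$ forces $q\ge1$, the power $q-1$ is nonnegative.

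Next I regroup the ratio as a product of $q-1$ identical block factors and one residual factor:
\[
\Bigl(\frac{h^{h-1}}{(h-1)!}\Bigr)^{q-1}\cdot\frac{h^{s}}{s!}.
\]

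For the lower bound, $(h-1)!\le h^{h-1}$ because each of its $h-1$ factors is at most $h$. Likewise $s!\le h^s$ because $s<h$. Every factor is therefore at least $1$, the logarithm is nonnegative, and $\Gamma_{k,h}\ge\Lambda_{k,r}$.

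For the upper bound, both factor types have the form $h^j/j!$ with $0\le j\le h-1$. Each such term is a single summand of $e^h=\sum_{i\ge0}h^i/i!$, so each factor is at most $e^h$. The product is then at most $e^{h(q-1)}e^{h}=e^{hq}\le e^{k}$, using $hq\le k$. Taking $\log_2$ gives $\Gamma_{k,h}-\Lambda_{k,r}\le(\log_2 e)k$.

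The case $h=k+1$ (that is, $r=k$) is immediate. By definition $\Gamma_{k,k+1}=0$, and $\Lambda_{k,k}=\log_2(k!/(k!\,1^0))=0$.

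I do not expect a real obstacle here. The only delicate point is the bookkeeping of the exponent in the identity, and making sure that the residual block of size $s$ and the $q-1$ (rather than $q$) full blocks are grouped so that the Taylor bound $h^j/j!\le e^h$ yields a total exponent of at most $hq\le k$.
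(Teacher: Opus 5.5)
Your proof is correct and matches the paper's argument: the same split of the ratio in \cref{eq:gamma-lambda-difference} into $q-1$ full-block factors $h^{h-1}/(h-1)!$ and one residual factor $h^s/s!$, the same termwise lower bound, and the same per-factor bound $e^h$ giving $e^{qh}\le e^k$. Your justification that each factor is a single summand of $e^h=\sum_i h^i/i!$ is slightly cleaner than the paper's appeal to an integral factorial bound plus maximization over $s$, and your explicit derivation of the exponent $(q-1)(h-1)+s$ fills in an identity the paper only states.
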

\begin{proof}
The ratio in \cref{eq:gamma-lambda-difference} is at least one because
$(h-1)!\le h^{h-1}$ and $s!\le h^s$.  For the upper bound, the elementary
factorial estimate
\[
 \frac{h^{h-1}}{(h-1)!}\le e^h
 \tag{62}\label{eq:factorial-estimate}
\]
follows, for example, from the integral form of the logarithm bound for
factorials.  Since $0\le s<h$, the ratio $h^s/s!$ is maximized at
$s=h-1$, so it is also at most $e^h$.  Thus the ratio is at most
$e^{qh}\le e^k$.  Taking base-two logarithms proves the claim.  If
$h=k+1$, then $q=0$, $s=k$, and both denominators equal $k!$.
\end{proof}

\subsection{The complete budget law}

The envelope in \cref{fig:pareto-law} summarizes the three asymptotic regimes
of the tradeoff.

\begin{figure}[htbp]
\centering
\begin{tikzpicture}[x=1.0cm,y=0.72cm,font=\small]
  \draw[-{Stealth[length=2mm]},thin] (0,0) -- (10.6,0)
        node[below=4pt] {$r$ in the addition budget $k+r$};
  \draw[-{Stealth[length=2mm]},thin] (0,0) -- (0,5.7)
        node[left=6pt,rotate=90] {comparison complexity};
  \draw[black,thick]
        plot[smooth] coordinates {(0.8,5.0) (2.1,4.48) (3.6,3.82)
                                  (5.2,3.12) (6.9,2.45) (8.5,1.82)
                                  (9.8,1.22)};
  \fill[black] (0.8,5.0) circle (1.8pt);
  \fill[black] (5.2,3.12) circle (1.8pt);
  \fill[black] (9.8,1.22) circle (1.8pt);
  \node[align=left,anchor=west] at (1.0,5.15)
        {$r=0$\\$\Theta(k\log k)$};
  \node[align=center,fill=white,inner sep=1.5pt] at (5.4,3.75)
        {$\Lambda_{k,r}=\log_2\!\dfrac{k!}{r!(r+1)^{k-r}}$};
  \node[align=right,anchor=east] at (9.6,0.7)
        {$r=k$\\$\Theta(k)$};
  \draw[densely dashed,gray] (0.8,5.0) -- (0.8,0);
  \draw[densely dashed,gray] (9.8,1.22) -- (9.8,0);
\end{tikzpicture}
\caption{Comparison cost on $H_k$ as the addition slack grows.  The entropy
term controls the curve when it dominates $k$; the scalar lower bound leaves
a linear floor near $r=k$.  The drawing is schematic.}
\label{fig:pareto-law}
\end{figure}

\begin{theorem}[Deterministic entropy-tight Pareto law]
\label{thm:near-exact-pareto}
For every $k\ge1$ and every $0\le r\le k$, let
\[
 F_k(r)=C_{k+r}^*(H_k),
 \qquad
 \Lambda_{k,r}=\log_2\frac{k!}{r!(r+1)^{k-r}}.
 \tag{63}\label{eq:pareto-law-def}
\]
Uniformly in $r$,
\[
 \boxed{
 \max\left\{k,\ceil{\Lambda_{k,r}}\right\}
 \le F_k(r)
 \le \Gamma_{k,r+1}+o(\Gamma_{k,r+1})+O(k),}
 \tag{64}\label{eq:pareto-law-explicit}
\]
where $\Gamma_{k,k+1}=0$.  Consequently,
\[
 \boxed{F_k(r)=\Theta(k+\Lambda_{k,r})}
 \tag{65}\label{eq:pareto-law}
\]
uniformly over $0\le r\le k$; moreover
$F_k(r)=(1+o(1))\Lambda_{k,r}$ along every asymptotic regime in which
$\Lambda_{k,r}/k\to\infty$.
\end{theorem}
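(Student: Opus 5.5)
The plan is to assemble the displayed two-sided bound from results already proved and then extract the two asymptotic consequences by elementary estimates.

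\emph{Lower bound.} This is exactly \cref{thm:pareto-entropy-lower}. It combines the causal fan-out entropy bound $F_k(r)\ge\ceil{\Lambda_{k,r}}$ with the scalar baseline $F_k(r)\ge C^*(H_k)\ge k$ from \cref{prop:hk-scalar-comparisons}. The baseline applies because restricting the addition budget can only raise the comparison optimum.

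\emph{Upper bound for $r<k$.} Set $h=r+1\le k$. Then \cref{prop:rolling-upper} gives $F_k(r)\le Q(k,h)+O(k)$, and \cref{lem:bbwo} bounds $Q(k,h)$ by $\Gamma_{k,h}+o(\Gamma_{k,h})+O(k)$. The two $O(k)$ terms merge.

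\emph{Upper bound for $r=k$.} Here $h=k+1$, so one block suffices and $Q=0$. \Cref{lem:rolling-transfer} then gives $O(k)$ comparisons, which matches the convention $\Gamma_{k,k+1}=0$. The $O(k)$ constant is uniform in $r$, since both the rolling scan and the multiple-selection overhead have constants independent of $h$. The $o(\cdot)$ term should be read as a function $\phi$ with $\phi(x)/x\to0$ applied to $\Gamma$, which is uniform as stated in \cref{lem:bbwo}.

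\emph{$\Theta$ statement.} By \cref{lem:entropy-comparison}, $\Gamma_{k,r+1}\le\Lambda_{k,r}+(\log_2 e)k$. Since $o(\Gamma)\le\Gamma+O(1)$, this gives
\[
F_k(r)\le 2\Gamma_{k,r+1}+O(k)=O(k+\Lambda_{k,r}).
\]
The lower bound satisfies $\max\{k,\Lambda_{k,r}\}\ge\tfrac12(k+\Lambda_{k,r})$, so $F_k(r)=\Omega(k+\Lambda_{k,r})$.

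\emph{Sharp constant.} Suppose $\Lambda_{k,r}/k\to\infty$. Then $k=o(\Lambda_{k,r})$, and the entropy comparison gives $\Gamma_{k,r+1}=\Lambda_{k,r}+O(k)=(1+o(1))\Lambda_{k,r}$. Hence $\Gamma\to\infty$ and $o(\Gamma)=o(\Lambda)$, so the upper bound is $(1+o(1))\Lambda_{k,r}$. The lower bound $\ceil{\Lambda_{k,r}}\ge\Lambda_{k,r}$ matches it.

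The only delicate point is uniformity of the lower-order term in the multiple-selection bound. It has to hold as $h$ varies with $k$, not just for fixed $h$. I would handle this by quoting the Kaligosi et al.\ bound in the form $\mathrm{ITLB}+o(\mathrm{ITLB})+O(k)$ with constants independent of the requested rank set, which is how \cref{lem:bbwo} is stated. Everything else is bookkeeping.
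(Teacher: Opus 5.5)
Your proposal is correct and follows the paper's proof essentially step for step: you quote the lower bound from \cref{thm:pareto-entropy-lower} together with the scalar baseline, get the upper bound from \cref{lem:bbwo} and \cref{prop:rolling-upper} (treating $r=k$ as a single block), and use \cref{lem:entropy-comparison} to write $\Gamma_{k,r+1}=\Lambda_{k,r}+O(k)$ for both the $\Theta$ statement and the sharp constant. Your explicit bookkeeping ($o(\Gamma)\le\Gamma+O(1)$ and $\max\{k,\Lambda_{k,r}\}\ge\tfrac12(k+\Lambda_{k,r})$) only fills in steps the paper leaves implicit, and you rely on the same uniformity of the multiple-selection lower-order term that \cref{lem:bbwo} asserts.
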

\begin{proof}
The lower inequality is \cref{thm:pareto-entropy-lower}, including the
scalar baseline.  For the upper bound, set $h=r+1$.  Produce ordered blocks
with \cref{lem:bbwo} and transfer them to $H_k$ using
\cref{prop:rolling-upper}.  This gives
\[
 F_k(r)\le \Gamma_{k,h}+o(\Gamma_{k,h})+O(k).
 \]
When $h\le k$, apply \cref{lem:entropy-comparison}; the difference from
$\Lambda_{k,r}$ is at most $(\log_2e)k$.  When $h=k+1$, one block is emitted,
$\Lambda_{k,k}=0$, and the rolling computation costs $O(k)$ comparisons.
The entropy comparison gives $\Gamma_{k,h}=\Lambda_{k,r}+O(k)$.  Together
with the scalar lower bound this proves the uniform $\Theta$ statement.  If
$\Lambda_{k,r}/k\to\infty$, both the $O(k)$ term and the multiple-selection
lower-order term are $o(\Lambda_{k,r})$, proving the final assertion.
\end{proof}

At $r=0$, $\Lambda_{k,0}=\log_2(k!)$, so the law recovers the sorting-scale
endpoint in \cref{thm:hk-endogenous-sorting}.  At $r=k$, the entropy term
vanishes and the scalar linear baseline controls the comparison value.  The
intermediate levels are not asserted to have an additive-$O(k)$ deterministic
correction: the standard deterministic multiple-selection bound contains an
additional lower-order entropy term.  No statement here is intended for
arbitrary directed topologies.

%% file: universal_optimality.tex
\section{Exact numerical universal optimality}
\label{sec:universal-numerical}

We now ask whether topology specialization confers an intrinsic numerical
advantage.  The answer is no: one fixed interpreter can match \(\OPT(G,s)\)
exactly on every topology.  The interpreter is not efficient; the result
separates existence of an optimal numerical policy from the cost of navigating
that policy.

\subsection{Homogeneous programs and transcript cones}

Positive homogeneity of shortest-path distances lets us remove arbitrary
literal constants without weakening a competitor.

\begin{lemma}[Recession normal form]
\label{lem:recession-normal-form}
For every correct finite \CA program \(A\), there is a correct program
\(A^\infty\) using only the literal zero such that
\[
 C_{A^\infty}(G,s,w)+P_{A^\infty}(G,s,w)
 \le T_A(G,s)
 \quad\text{for every }w\ge0,
 \tag{68}\label{eq:recession-bound}
\]
and every held register of \(A^\infty\) has form
\(\langle a,w\rangle\) for \(a\in\Nzero^m\).
\end{lemma}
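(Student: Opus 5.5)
We would build \(A^\infty\) by running \(A\) along a scaling ray. For a fixed input \(w\ge0\), \(A^\infty\) simulates \(A\) on \(\lambda w\) with \(\lambda\to\infty\). The homogeneous parts are maintained by charged operations. The literal constants are tracked only as free discrete bookkeeping inside the control, never as numerical registers.

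If \(T_A(G,s)=\infty\) there is nothing to prove. Otherwise \(A\) has a finite decision tree. On each branch, every register has the affine record \(\langle a,w\rangle+\alpha\) from \cref{sec:coefficient-record}, with \(a\in\Nzero^m\) and \(\alpha\) in the literal field. The simulation works as follows.
\begin{itemize}[leftmargin=*,itemsep=2pt]
  \item A register \(R=\langle a,w\rangle+\alpha\) of \(A\) is represented by a held register of value \(\langle a,w\rangle\). Its constant \(\alpha\) is stored symbolically in the program text, which is legal because the constants on a branch are finitely many fixed reals known to the finite code.
  \item Literals of \(A\) become the zero literal.
  \item An addition of \(A\) becomes the addition of the two homogeneous representatives. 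If both representatives are the zero form, the addition is skipped, because its result is again the zero form, which is already held; this only lowers the cost.
  \item A comparison of \(R\) with \(R'\) becomes one physical comparison of \(\langle a,w\rangle\) with \(\langle a',w\rangle\). A strict outcome is adopted as \(A\)'s outcome. On equality, \(A^\infty\) uses the sign of \(\alpha-\alpha'\) (equality if \(\alpha=\alpha'\)) and branches accordingly, at no extra cost.
\end{itemize}
Thus every charged operation of \(A^\infty\) corresponds to at most one charged operation of \(A\).

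The key step is to show that, for each fixed \(w\), the branch \(A^\infty\) follows on \(w\) is exactly the branch \(A\) follows on \(\lambda w\) for all sufficiently large \(\lambda\). We argue by induction along the finite branch. At each comparison, \(A\) on \(\lambda w\) evaluates \(\lambda\langle a-a',w\rangle+(\alpha-\alpha')\). For \(\lambda\) beyond \(|\alpha-\alpha'|/|\langle a-a',w\rangle|\) (when the latter is nonzero), its sign agrees with the simulated outcome. When \(\langle a-a',w\rangle=0\), the sign is that of \(\alpha-\alpha'\) for every \(\lambda\). Since the branch has finitely many comparisons, a single threshold \(\lambda_0(w)\) works for all of them. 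This is also where zero weights, equality faces and zero-weight cycles need attention, and we expect it to be the main (though mild) obstacle. The ray argument is pointwise in \(w\), so no genericity is needed, and the equality case is fully resolved by the stored constants.

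Correctness and cost then follow directly.
\begin{itemize}[leftmargin=*,itemsep=2pt]
  \item \emph{Correctness.} At the leaf, \(A\) outputs for label \(v\) a register with \(\lambda\langle a,w\rangle+\alpha=d_{\lambda w}(v)=\lambda d_w(v)\) for all \(\lambda\ge\lambda_0\), using positive homogeneity of \eqref{eq:distance}. Comparing coefficients in \(\lambda\) gives \(\alpha=0\) and \(\langle a,w\rangle=d_w(v)\). So \(A^\infty\) outputs the held homogeneous representative, which is correct.
  \item \emph{Cost.} The charged cost of \(A^\infty\) on \(w\) is at most \(C_A(\lambda w)+P_A(\lambda w)\le T_A(G,s)\), which is \eqref{eq:recession-bound}.
\end{itemize}
Every held register has the form \(\langle a,w\rangle\) with \(a\in\Nzero^m\), because it is built from primitive units and zero by additions.
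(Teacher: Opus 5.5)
Your proposal is correct and follows essentially the paper's proof: replace each register by its recession form $\langle a,w\rangle$, route strict comparisons by the homogeneous sign and equality outcomes by the sign of the intercept difference, show by induction that the run on $w$ is the stable run of $A$ on $\lambda w$ for all large $\lambda$, and obtain correctness from positive homogeneity of the distances. The differences are minor: you get finiteness of the trace set directly from $T_A<\infty$ and ternary branching, where the paper uses formal sign tables, and you still perform comparisons with $a=a'$ rather than resolving them statically as the paper does; neither change affects the bound, and in the case $T_A=\infty$ any homogeneous program such as Bellman--Ford serves as $A^\infty$.
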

\begin{proof}[Proof idea]
On each unfolded branch, replace the affine register
\(\langle a,w\rangle+\alpha\) by its leading homogeneous form.  Route a
comparison by the sign of its leading difference; when that difference is
zero, route by the fixed intercept sign.  For any fixed \(w\), the resulting
trace is the stable trace of \(A(tw)\) for all sufficiently large \(t\).
Homogeneity of the output then identifies each leading output form with the
required distance.  The complete finiteness and equality-face argument is in
\cref{app:recession}.
\end{proof}

Let \(M\) be the set of homogeneous coefficient forms represented by
physical held registers.  Initially
\[
 M_0=\{0,e_1,\ldots,e_m\}.
 \tag{69}\label{eq:initial-forms}
\]
A physical comparison transcript is
\[
 \Gamma=((a_i,b_i,\sigma_i))_{i=1}^t,
 \qquad \sigma_i\in\{<,=,>\},
\]
and represents the relatively open rational polyhedral cone
\[
 \mathcal P(\Gamma)=
 \{w\in\R_{\ge0}^m:
   \operatorname{sign}\langle a_i-b_i,w\rangle=\sigma_i
   \text{ for every }i\}.
 \tag{70}\label{eq:transcript-cone}
\]
Only feasible transcripts are retained.

\begin{definition}[Ready state]
\label{def:ready}
The pair \((M,\Gamma)\) is \(\Ready\) if, for every output label
\(v\ne s\), there is one fixed form \(z_v\in M\) such that
\[
 \langle z_v,w\rangle=d_w(s,v)
 \qquad\text{for every }w\in\mathcal P(\Gamma).
 \tag{71}\label{eq:ready}
\]
The witness may depend on \(v\), but it may not vary over an unobserved
subregion of the transcript cone.
\end{definition}

\subsection{The exact total-budget game}

Define \(\WIN_G(M,\Gamma,r)\) recursively.  A ready state is winning.  A
nonready state with \(r=0\) is losing.  Otherwise the policy may choose one of
two moves:

\begin{itemize}[leftmargin=*,itemsep=2pt]
  \item add \(x,y\in M\), including \(x=y\), insert \(x+y\), and decrease
        \(r\) by one; or
  \item compare distinct represented forms \(x,y\), decrease \(r\) by one,
        and win for every feasible outcome in \(\{<,=,>\}\).
\end{itemize}

An already represented sum and a comparison with only one feasible outcome
may be omitted.  The game state records only numerical capabilities and the
physical outcomes actually observed; it does not store a chamber arrangement
or an unfolded strategy tree.

\begin{theorem}[Exact transcript-cone game]
\label{thm:exact-transcript-game}
For every finite reachable rooted directed multigraph and every integer
\(K\ge0\),
\[
 \boxed{
 \WIN_G(M_0,\varnothing,K)
 \quad\Longleftrightarrow\quad
 \OPT(G,s)\le K.}
 \tag{72}\label{eq:exact-game}
\]
\end{theorem}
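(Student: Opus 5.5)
The plan is to prove the two directions separately. Both pass through the homogeneous normal form of \cref{lem:recession-normal-form}, so that the game and physical programs are compared only on literal-free programs.

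\emph{($\Leftarrow$) A program gives a winning strategy.} Take a correct program $A$ with $T_A(G,s)\le K$ and replace it by $A^\infty$ from \cref{lem:recession-normal-form}. On every input, $A^\infty$ uses at most $K$ charged operations, and all its registers are homogeneous forms $\langle a,w\rangle$. We unfold $A^\infty$ into its decision tree and read off a game strategy by induction along branches. The state $(M,\Gamma)$ is the set of forms held so far together with the physical outcomes seen so far. An addition becomes the game move inserting $x+y$, and a comparison becomes the game comparison move. Free copies and selections do not change $M$, since $M$ is a set. Redundant moves (duplicate forms, comparisons with only one feasible outcome) may be dropped, and this only lowers the count. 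For every feasible game outcome, the cone $\mathcal P(\Gamma)$ is nonempty, so some input $w$ realizes that branch of $A^\infty$, and the induction continues. At a leaf, $A^\infty$ outputs a fixed register $z_v$ for each label $v$, because discrete control is identical on all inputs with the same transcript. Correctness on every $w\in\mathcal P(\Gamma)$ then gives exactly \cref{eq:ready}. So every leaf is $\Ready$ after at most $K$ moves, and $\WIN_G(M_0,\varnothing,K)$ holds.

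\emph{($\Rightarrow$) A winning strategy gives a program.} Fix a winning strategy and compile it into a finite program. Its only literal is $0$, and the initial registers hold the primitives $e_1,\dots,e_m$. An addition move becomes a physical addition of two held registers. A comparison move becomes a physical comparison that branches on $<,=,>$. The outcome actually realized by $w$ is always feasible, because $w$ itself lies in the cone, so it is covered by the strategy. At a ready leaf the program outputs the register holding $z_v$. Since $w\in\mathcal P(\Gamma)$, that register equals $d_w(s,v)$. Each root-to-leaf path has at most $K$ moves, so $T\le K$ and hence $\OPT\le K$.

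\emph{Main obstacle.} The delicate step is in ($\Leftarrow$): justifying that $A^\infty$ is correct on the whole closed orthant, including equality faces, zero weights and zero-weight cycles. That is what makes leaf readiness hold on every relatively open transcript cone and not only generically. I will lean on \cref{app:recession} and check two things: the routing of zero-leading comparisons by intercept sign matches the physical equality/strict outcome recorded in $\Gamma$, and the stable trace of $A(tw)$ for large $t$ gives the leading output form equal to $d_w$ by positive homogeneity of $d_{tw}=t\,d_w$. Where the leading difference vanishes but the intercept does not, the game sees $\langle a-b,w\rangle=0$, which reads as the outcome ``$=$''. Physically, $A^\infty$ takes the intercept's strict branch. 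To reconcile the two, I plan to have the strategy treat such a comparison as a one-outcome move and omit it, which is permitted and can only lower the charge.
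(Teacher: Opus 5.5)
Your proposal is correct and is essentially the paper's proof. Both directions are the same two simulations through \cref{lem:recession-normal-form}: you record the held forms and physical outcomes of $A^\infty$ to obtain a winning play, and you execute a winning strategy literally, outputting the ready witnesses, to obtain a program.

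One correction to your last paragraph. A comparison with identical leading forms ($a=b$) is already compiled into static control inside $A^\infty$. When $a\neq b$, however, the physical outcome of $A^\infty$ on a face where $\langle a-b,w\rangle=0$ really is ``$=$''; it is only routed to the original child labeled by $\operatorname{sign}\kappa$. That outcome therefore already agrees with $\Gamma$, and the comparison must not be omitted, because it generally has several feasible outcomes on the current cone.
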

\begin{proof}
Apply \cref{lem:recession-normal-form} to an arbitrary specialized program.
Retain every materialized form, delete duplicate additions, and bypass
comparisons whose outcomes are implied by the current physical transcript.
An addition does not restrict the compatible weights; a comparison intersects
them with exactly its observed sign condition.  Thus the inputs reaching a
node are precisely \(\mathcal P(\Gamma)\), every charged instruction is a
legal move, and correctness at a leaf supplies the fixed witnesses required
by \cref{def:ready}.  Backward induction yields a winning strategy with no
larger total budget.

Conversely, select a winning move at each winning state and maintain one
physical register for every form in \(M\).  Execute additions and comparisons
literally and follow the observed ternary outcome.  At a ready state output
the fixed witness registers.  Their validity over the whole transcript cone
includes equality faces, the origin, parallel primitives, ties, and
zero-weight cycles.  Each play uses at most \(K\) charged operations, giving a
correct specialized program.  The two simulations prove the equivalence.
\end{proof}

\begin{corollary}[No numerical uniformity tax]
\label{cor:no-numerical-tax}
There is one fixed deterministic uniform interpreter \(U_{\rm pspace}\) such
that for every \((G,s)\),
\[
 \boxed{T_{U_{\rm pspace}}(G,s)=\OPT(G,s).}
 \tag{73}\label{eq:no-numerical-tax}
\]
\end{corollary}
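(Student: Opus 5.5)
We build $U_{\rm pspace}$ directly from the game of \cref{thm:exact-transcript-game}. On input $(G,s)$, the interpreter first computes
\[
 K^*=\min\{K\ge0:\WIN_G(M_0,\varnothing,K)\}.
\]
It searches $K=0,1,2,\ldots$ in increasing order. The search terminates because a Bellman--Ford program gives a finite upper bound $\OPT(G,s)\le K_{\rm BF}$, and \cref{thm:exact-transcript-game} then certifies $\WIN_G(M_0,\varnothing,K_{\rm BF})$. By the same theorem, $K^*=\OPT(G,s)$ exactly.

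Each evaluation of $\WIN_G(M,\Gamma,r)$ is a finite computation, done by depth-first minimax recursion. At each state the candidate moves are the additions $x+y$ over pairs in $M$ and the comparisons of distinct forms in $M$. Each step of this recursion needs two checks.
\begin{itemize}
\item \emph{Feasibility of an outcome.} Whether the cone $\mathcal P(\Gamma)$ extended by a given sign condition is nonempty is a rational linear-programming feasibility question.
\item \emph{Readiness.} Readiness (\cref{def:ready}) is decidable. For each label $v$ and each candidate $z\in M$, we test whether $\langle z,w\rangle=d_w(s,v)$ holds on all of $\mathcal P(\Gamma)$. Since $d_w(s,v)$ is the minimum of finitely many simple-path forms, this amounts to two conditions: $\langle z,w\rangle\le\langle\chi_P,w\rangle$ for every simple path $P$, checked by LP infeasibility of the strict reverse inequality on the cone; and the existence of a path form that is never strictly smaller than $z$. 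Concretely, the second condition holds when the cone is covered by the closed regions where each path is minimal and equal to $z$. We can decide this by checking, for each point type, that no point of the cone has $\langle z,w\rangle$ strictly less than all path values, together with the first inequality family. Exact rational arithmetic makes every test a finite decision.
\end{itemize}

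Next the interpreter plays the game online against the actual weights. It keeps one physical register per form in $M$ and keeps the remaining budget $r$, starting from $r=K^*$. At each nonready state it enumerates legal moves in a fixed canonical order, such as lexicographic order on coefficient vectors, and executes the first move whose resulting position is winning with budget $r-1$:
\begin{itemize}
\item for an addition, this means $\WIN_G(M\cup\{x+y\},\Gamma,r-1)$;
\item for a comparison, this means $\WIN_G$ holds with budget $r-1$ for every feasible outcome.
\end{itemize}
It performs the chosen operation physically on the registers and appends the observed outcome to $\Gamma$. When the state becomes ready, it outputs the witness registers.

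Three invariants show the interpreter is correct and costs exactly $\OPT$:
\begin{itemize}
\item \emph{A winning move always exists.} The current state is winning with budget $r$, so by definition some legal move leads to states winning with budget $r-1$.
\item \emph{The state is exact.} The physical input lies in the current $\mathcal P(\Gamma)$, because the transcript records the actual outcomes.
\item \emph{The output is correct.} At readiness each witness form equals $d_w(s,v)$ on the whole cone, in particular at the actual input.
\end{itemize}
Every charged operation decreases $r$, so $T_{U_{\rm pspace}}(G,s)\le K^*=\OPT(G,s)$. The reverse inequality holds because $U_{\rm pspace}$, restricted to a fixed topology, is itself a specialized program; $\OPT$ is an infimum over such programs, so $\OPT(G,s)\le T_{U_{\rm pspace}}(G,s)$. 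Discarding duplicate additions and comparisons with only one feasible outcome is harmless, as already noted in the game's definition.

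The main obstacle is to make the readiness test, and hence $\WIN_G$, a genuine terminating computation with uniform finite description. I would rely on three facts. First, there are finitely many simple paths. Second, cone membership and sign conditions are rational LP questions. Third, the recursion depth is bounded by $K\le K_{\rm BF}$, which bounds both the coefficient sizes and $|M|$. Equality faces and the origin are covered automatically, because the cones are defined with explicit equalities, so no genericity argument is required.
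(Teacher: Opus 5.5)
Your proposal is correct and is the paper's construction: find the least winning budget (capped by Bellman--Ford), play the first winning move in a canonical order on physical registers, and obtain equality from $T_{U_{\rm pspace}}\le K^*=\OPT$ together with $\OPT\le T_{U_{\rm pspace}}$, which holds by the definition of the infimum. One wording fix: the condition ``a path form that is never strictly smaller than $z$'' is vacuous once your first inequality family holds, so it cannot be the test; the test you actually need, and then state, is that no point of $\mathcal P(\Gamma)$ has $\langle z,w\rangle$ strictly below every simple-path value, which is one LP feasibility check and is the mismatch test underlying the paper's readiness certificate.
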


The interpreter first finds the least winning budget from the topology.  It
then recomputes a winning move from the current state, executes exactly that
physical addition or comparison on the input, and appends the observed
outcome.  Equality in \cref{eq:no-numerical-tax}, rather than only an upper
bound, follows from the definition of \(\OPT\).

\subsection{Succinct state and polynomial space}

A persistent implementation represents materialization by an addition DAG:
primitive and zero registers are leaves, and each executed addition stores two
parent identifiers.  Each comparison stores two identifiers and its ternary
outcome.  If two registers have the same coefficient vector, one
representative suffices because their future numerical capabilities coincide.

Bellman--Ford gives the legal total-cost cap
\[
 K_0=2m(n-1).
 \tag{74}\label{eq:bellman-cap}
\]
Along a length-\(K\) play there are at most \(m+K+1\) forms.  After \(j\)
additions, every coefficient is at most \(2^j\), hence has \(O(j)\) bits.
The persistent DAG and transcript use \(O(m+n+K)\) constant-size records;
expanded coefficient vectors are temporary polynomial-size objects.

Transcript feasibility is rational linear programming.  The terminal
predicate has a short complementary certificate: to show non-readiness,
choose one output label and, for every held form, give a possibly different
transcript-compatible rational weighting and a simple shortest path on which
that form is wrong.

\begin{theorem}[Polynomial-space exact policy generation]
\label{thm:pspace-policy}
The decision problem
\[
 (G,s,K)\longmapsto[\OPT(G,s)\le K]
\]
is in deterministic PSPACE.  The exact value of \(\OPT\) and a winning next
action at every reached transcript-cone state are computable in polynomial
space.  A direct depth-first implementation uses
\(2^{\operatorname{poly}(n+m)}\) deterministic time.
\end{theorem}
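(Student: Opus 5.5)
The plan is to evaluate $\WIN_G(M_0,\varnothing,K)$ by depth-first minimax recursion and to reduce everything to three polynomial-space subroutines: transcript feasibility, readiness, and move enumeration. By \cref{thm:exact-transcript-game} this decides $[\OPT(G,s)\le K]$. First cap the budget: if $K\ge K_0=2m(n-1)$ from \cref{eq:bellman-cap}, answer yes immediately, because the Bellman--Ford program witnesses $\OPT\le K_0$. Otherwise $K<K_0$ is polynomial in $n+m$.

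Along any play there are then at most $m+K+1$ forms, each coefficient has $O(K)$ bits, and the transcript has at most $K$ triples. Hence one state $(M,\Gamma,r)$ has a polynomial encoding. The candidate moves at a state are also polynomially many: there are $O(|M|^2)$ additions and $O(|M|^2)$ comparisons, each with at most three outcomes. So the recursion has depth at most $K$, polynomial branching, and one polynomial frame per level, giving polynomial space overall.

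The frame operations are checked as follows.
\begin{itemize}[leftmargin=*,itemsep=2pt]
  \item \emph{Feasibility.} Whether a comparison outcome is feasible, i.e.\ whether $\mathcal P(\Gamma')\neq\varnothing$, is a rational LP with strict inequalities. Handle the strict inequalities by homogenization: maximize a slack $t\le1$ subject to $\langle a_i-b_i,w\rangle\ge t$ on the strict constraints. This is polynomial time, hence polynomial space.
  \item \emph{Readiness.} For each label $v$, the state is ready at $v$ iff some $z\in M$ satisfies $\langle z,w\rangle=d_w(s,v)$ on all of $\mathcal P(\Gamma)$. Do not rely on the coNP certificate; decide this directly in polynomial space. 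Fix $z$ and enumerate, one at a time, the exponentially many simple $s$--$v$ paths $P$ using a polynomial-space enumerator. The form $z$ fails iff some $w$ in the cone has $\langle\chi_P,w\rangle<\langle z,w\rangle$ for some $P$, or has $\langle\chi_P,w\rangle>\langle z,w\rangle$ for every $P$. The first case is an LP per path.
\end{itemize}

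The second case of readiness is the main obstacle, since it quantifies over all paths. I would avoid it with a structural observation. Every held form has nonnegative integral coefficients, and the distance is the minimum over paths. So $z$ is correct on the cone iff (i) $\langle\chi_P-z,w\rangle\ge0$ on the cone for every simple path $P$, and (ii) $z$ equals some fixed path form on the cone. For (ii), note that the cone is relatively open and convex, while the distance function is piecewise linear and concave. If $z$ matches $d$ everywhere on the cone and is linear there, then by finiteness some single path attains the minimum on a relatively open dense subset, so $z$ agrees with $\chi_P$ on the linear span of the cone. Condition (ii) therefore becomes: there exists $P$ with $z-\chi_P$ orthogonal to the cone's affine hull. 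That hull is computable by LP as the implicit equalities of $\Gamma$. Both (i) and (ii) can be checked by polynomial-space enumeration of paths, each step costing an LP, so readiness is in PSPACE. If the density argument in (ii) turns out subtle on lower-dimensional cones, I would instead check it on the affine hull restricted to the relative interior, which is the same argument with the cone's implicit equalities substituted.

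Finally, compute the exact $\OPT$ by running the decision procedure for $K=0,1,\dots,K_0$ and taking the least yes. A winning next action at a reached state is the first move, in canonical order, whose recursive evaluation at budget $r-1$ succeeds for every feasible outcome. This also yields $U_{\rm pspace}$ of \cref{cor:no-numerical-tax}. For time, the recursion tree has at most $(\operatorname{poly}(n+m))^{K_0}$ nodes, and readiness enumerates at most $n!$ paths per check. Both are $2^{\operatorname{poly}(n+m)}$, giving the stated time bound.
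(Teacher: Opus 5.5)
Your proposal is correct. Its skeleton matches the paper's: the Bellman--Ford cap $K_0=2m(n-1)$, the polynomial state encoding, polynomially many candidate moves, LP feasibility for comparison outcomes, depth-first minimax, iterating $K=0,\dots,K_0$, and re-enumerating root actions.

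The genuine difference is the terminal test. The paper places $\Ready$ in coNP. Non-readiness is certified, for one label and each held form $z$, by a cone point $w$, Bellman potentials $\pi$ tight along a simple path $P$, and a strict mismatch $\langle z,w\rangle\neq\pi_v$. The potentials make $\pi_v=d_w(s,v)$ checkable by one polynomial-size LP, so the ``for every path'' quantifier you flagged as the main obstacle is absorbed by the potential constraints rather than by enumeration.

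You instead prove a structural lemma: a form is correct on the convex transcript region iff (i) $z\le\chi_P$ there for every simple path $P$, and (ii) $z-\chi_{P_0}$ vanishes on $\spanof\mathcal P(\Gamma)$ for one fixed path $P_0$. You then decide both conditions deterministically, enumerating simple paths with one LP each.

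What each route buys:
\begin{itemize}[leftmargin=*,itemsep=2pt]
  \item Your route costs exponential time per readiness check. That is harmless for the stated $2^{\operatorname{poly}(n+m)}$ bound, but it gives up the coNP classification.
  \item In return it yields a fully deterministic procedure and a sharper description of ready witnesses. This is consistent with the paper's remark that different paths may certify one fixed form at different cone points: your lemma only adds that at least one path certifies it everywhere.
\end{itemize}

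Two points to tighten when you write up (ii):
\begin{itemize}[leftmargin=*,itemsep=2pt]
  \item $\mathcal P(\Gamma)$ is not literally relatively open, because the constraint $w\ge0$ is closed; for example, $\mathcal P(\varnothing)=\R_{\ge0}^m$. Run the finite-union argument on its relative interior. That set is nonempty, open in the linear span, and spans the same subspace, so finitely many proper hyperplane sections cannot cover it.
  \item The orthogonality test needs no implicit-equality computation. Use two strict-inequality LPs per path, asking whether $\langle z-\chi_P,w\rangle$ can be positive, or negative, on $\mathcal P(\Gamma)$.
\end{itemize}
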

\begin{proof}
For \(K\ge K_0\), the Bellman--Ford program answers yes.  Otherwise the game
has polynomial depth and polynomially many candidate actions per succinct
state.  Enumerate actions one at a time, discard infeasible comparison
outcomes by rational linear programming, test \(\Ready\) in coNP, and evaluate
the alternating recurrence depth first.  The recursion depth, state encoding,
and temporary certificates are polynomial.  Iterating
\(K=0,\ldots,K_0\) gives the optimum, and re-enumerating the actions gives a
winning next move.  The detailed terminal certificate is proved in
\cref{app:ready-pspace}.
\end{proof}

The polynomial state does not imply a small unfolded policy.  For \(r\)
independent output vertices, each with two parallel source primitives,
\(\OPT=r\), but the \(2^r\) strict winner vectors require \(2^r\) distinct
fixed-output leaves.  The compact loop and transcript state avoid storing that
tree; they do not make next-action search fast.

The hierarchy is therefore strict: the transcript-cone interpreter has ratio
one in charged operations; its optimal-action navigation uses polynomial
space but may take exponential time; the active-core lifting sacrifices a
sublogarithmic factor to obtain an efficient uniform algorithm.

%% file: efficient_universality.tex
\section{Efficient universal upper bound}
\label{sec:efficient-universality}

The ratio-one interpreter may spend exponential ordinary time between two
charged operations.  We next give the strongest current guarantee under
linear topology preprocessing and ordinary RAM accounting.  The bridge is a
canonical active core whose size is controlled by the full same-program
benchmark.

\subsection{Canonical active core}

An arc \(e=(u,v)\) is \emph{dominator-inactive} if \(v\) dominates \(u\),
that is, every \(s\)-to-\(u\) path passes through \(v\).  Delete all such
arcs.  A simple rooted path using \((u,v)\) has already visited \(v\), so that
arc closes a nonnegative subwalk; deleting the subwalk cannot increase a
distance.  Conversely, setting all deleted primitives to zero simulates any
program for the original graph on the active graph.  Inactive deletion is
therefore exactly benchmark-neutral.

Among the remaining source arcs, let \(p_v\) be the multiplicity of the class
\(s\to v\), and define
\[
 m_s=\sum_v p_v,\qquad
 d_s=|\{v:p_v>0\}|,\qquad
 \sigma=m_s-d_s.
 \tag{75}\label{eq:sigma}
\]
Reducing these classes costs exactly \(\sigma\) comparisons and yields held
minimum registers \(a_v\).  Let \(E_\circ\) be the active primitives with
nonsource tail, put \(k=|E_\circ|\), and let \(R\) be their endpoints.

The canonical core \(H=H(G,s)\) has source \(q\), vertex set
\(\{q\}\cup R\), every primitive of \(E_\circ\), and one virtual source arc
\(q\to v\) of weight \(a_v\) whenever such a register exists.  Register
aliasing is free and legal.  Every vertex outside \(R\) is source-only and its
distance is already \(a_v\).  Moreover,
\[
 |V(H)|\le2k+1,\qquad |E(H)|\le3k.
 \tag{76}\label{eq:core-size}
\]

\begin{theorem}[Active-core benchmark factorization]
\label{thm:active-core-factorization}
For every finite reachable rooted directed multigraph,
\[
 \boxed{
 \OPT(H,q)\le\OPT(G,s)\le\sigma+\OPT(H,q)\le2\OPT(G,s).}
 \tag{77}\label{eq:active-core-factorization}
\]
If \(\sigma=0\), then \(\OPT(G,s)=\OPT(H,q)\).
\end{theorem}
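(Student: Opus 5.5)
The plan is to prove the three inequalities separately, each by a program simulation that preserves arbitrary mixed-sum competitors. Beforehand, I would fix the reduction from $G$ to the graph $G'$ obtained by deleting dominator-inactive arcs, and note that $\OPT(G,s)=\OPT(G',s)$. For one direction, a program for $G'$ runs on $G$ and ignores the deleted primitives. The distances agree, because a simple shortest path never uses an arc whose head dominates its tail. For the other direction, a program for $G$ runs on $G'$ with every deleted primitive aliased to the literal zero. The zero assignment leaves all distances unchanged, so the simulation is correct and costs nothing extra. Both simulations are charge-preserving and input-by-input, so $T$ is preserved. From here on we may assume $G=G'$.

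\emph{First inequality, $\OPT(H,q)\le\OPT(G,s)$.} Take an optimal program $A$ for $(G,s)$ and simulate it on an input to $H$. The virtual arc $q\to v$ of weight $a_v$ is fed to $A$ as one primitive of the class $s\to v$. All other primitives of that class are set to a large value: twice the sum of all core weights, which is not free. To avoid the extra charge I would instead choose them equal to $a_v$ by aliasing. With equal copies the class minimum is $a_v$, all distances in $G$ coincide with those in $H$, and source-only vertices output $a_v$. $A$ then runs with identical charges, and it outputs a register for each vertex of $R$. So $T_{A'}(H,q,w)\le T_A(G,s,\tilde w)\le\OPT(G,s)$.

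\emph{Second inequality, $\OPT(G,s)\le\sigma+\OPT(H,q)$.} The program first reduces each source class with $p_v-1$ comparisons, $\sigma$ in total, producing $a_v$. It then runs an optimal core program on the registers $a_v$ and the primitives of $E_\circ$, and copies $a_v$ to every vertex outside $R$. Correctness follows because every $s$--$v$ path in $G$ is a source arc followed by a path in $E_\circ$, which is exactly the path structure of $H$.

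\emph{Third inequality and the main obstacle.} The last inequality, $\sigma+\OPT(H,q)\le2\OPT(G,s)$, follows from the first one together with the separate claim $\sigma\le\OPT(G,s)$. I would prove that claim on the single class-tie input. Fix a rooted order, and give every primitive in each source class the same potential-difference weight, using algebraically independent potentials as in \cref{lem:generic-slack-new}. Parallel source arcs then tie. The kernel argument of \cref{thm:arbitrary-addition-new} forces the equality normals to span differences within each source class, since these are common-endpoint path differences in the forward DAG. Each comparison contributes at most one normal, so the program makes at least $\sum_v(p_v-1)=\sigma$ comparisons. Hence $\OPT\ge\sigma$.

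The stated side bound $\sigma+k\le2\OPT$ is the hard part. It needs the two mirrored low--high dominator preorders, each placing every active nonsource arc forward in at least one of them, together with the same-input lower bound $m_\prec-n+1+q_\prec-d_s$. I expect the main obstacle to be checking that every active arc (one whose head does not dominate its tail) is forward in one of the two orders. If that holds, averaging the two same-input bounds gives $2\OPT\ge\sigma+k$, and this also yields the last inequality by a different route.
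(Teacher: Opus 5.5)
Your proof is correct. For the first two inequalities it is the paper's argument. You delete inactive arcs first, then alias every source class entering $R$ to its virtual register for the first inequality. For the second, you reduce source classes in $\sigma$ comparisons before running a core program.

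Two small points on that part. In the first simulation you should say what the source-only primitives are set to, since no $a_v$ exists for them on a core input; the paper uses the literal zero. This is harmless because a vertex outside $R$ has no outgoing active arc. The $\sigma=0$ clause, which you do not mention, is immediate from the first two inequalities.

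Where you differ is the last inequality. The paper gets $\sigma\le\OPT(G,s)$ as a by-product of the active-edge charge $\sigma+k\le2L^*\le2\OPT(G,s)$ in \cref{lem:active-edge-charge}. That lemma uses the low-high dominator preorder and its child-reversed mirror. You prove only $\sigma\le\OPT(G,s)$, directly from the comparison half of the generic-tie argument on a single rooted order. All source primitives are forward, parallel source arcs tie, and the within-class differences lie in the incidence kernel of the forward DAG. So the equality normals span a space of dimension at least $\sigma$, and each comparison contributes at most one normal.

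This is valid and more elementary: the displayed chain needs no dominator structure at all. What it does not give is any control of $k$. The paper's route obtains $\sigma+k\le2\OPT(G,s)$ at the same time, and that is the bound \cref{thm:efficient-universal} actually needs to relate the core size to the benchmark. So the paper's choice is economical for the paper as a whole, while yours is self-contained for this theorem.

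On your side remark about that stronger bound, the step you flag as the obstacle is actually immediate. For an active arc the head is not an ancestor of the tail in the dominator tree. So either the tail is an ancestor of the head, and the arc is forward in both preorders, or the endpoints are incomparable, and it is forward in exactly one. The real content is that both preorders are rooted, and that is precisely where the low-high property is used.
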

\begin{proof}
For the first inequality, simulate any program for \(G\) on an arbitrary core
input: alias every source class entering \(R\) to its single virtual register,
set inactive and discarded source-only primitives to zero, run the original
program, and keep only coordinates in \(R\).  This adds no charged operation
and preserves all mixed sums, equality branches, and cross-output sharing.

For the second inequality, reduce source classes in \(\sigma\) comparisons,
run a program for \(H\) on the held minima and internal primitives, and return
the already held source-only outputs.  The virtual-source path identity makes
the outputs exact.

Finally, the same-input active-edge charge in
\cref{lem:active-edge-charge} gives \(\sigma+k\le2\OPT(G,s)\), in particular
\(\sigma\le\OPT(G,s)\).  Combining this with
\(\OPT(H,q)\le\OPT(G,s)\) gives the last inequality.  When
\(\sigma=0\), the first two inequalities coincide.
\end{proof}

The factorization is not a sum of separate comparison and addition lower
bounds.  Both sides use \(\OPT\), and the active-edge charge is witnessed on
one execution of each competing program.

\subsection{Lifting a general directed-SSSP algorithm}

Duan, Mao, Shu, and Yin give a deterministic exact directed-SSSP algorithm on
\(M\) arcs and \(N\) vertices with nonnegative real weights and running time
\cite{duan2026faster}
\[
 F(M,N)=O\!\left(
 M\sqrt{\log N}+\sqrt{MN\log N\log\log N}
 \right).
 \tag{78}\label{eq:duan-bound}
\]
Their algorithm is used here as a black box; the topology-sensitive statement
comes from the active-core reduction and benchmark charge.

\begin{theorem}[Efficient topology-wise bound]
\label{thm:efficient-universal}
There is one explicit deterministic uniform exact \DIST algorithm with
\(O(m+n)\) topology preprocessing and
\[
 \boxed{
 T_U(G,s)=O\!\left(
 \OPT(G,s)
 \sqrt{
   \log_+(2+\OPT(G,s))
   \log\log_+(4+\OPT(G,s))}
 \right).}
 \tag{79}\label{eq:efficient-universal}
\]
Its ordinary query work is
\(O(m+n+C_U(G,s,w)+P_U(G,s,w))\).
\end{theorem}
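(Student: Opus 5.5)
The plan is a direct composition of \cref{thm:active-core-factorization} with the Duan--Mao--Shu--Yin routine used as a black box.

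The algorithm $U$ runs in three stages.
\begin{enumerate}
\item Topology preprocessing in $O(m+n)$ time. Compute the dominator tree with a linear-time algorithm \cite{alstrup1999dominators}, delete the dominator-inactive arcs, group the source classes, and build the canonical core $H$ with its virtual source $q$. The vertex set is the endpoint set $R$ of $E_\circ$, and \cref{eq:core-size} gives $|V(H)|\le 2k+1$ and $|E(H)|\le 3k$.
\item On a query, reduce each source class to its minimum using exactly $\sigma$ comparisons. Alias the minima as the virtual source registers, and copy them to the source-only outputs outside $R$.
\item Run the algorithm with bound \cref{eq:duan-bound} on $(H,q)$, with $M\le 3k$ and $N\le 2k+1$. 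Alias its outputs back to the labels in $R$.
\end{enumerate}
Correctness is the virtual-source path identity already used in the second inequality of \cref{thm:active-core-factorization}.

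The charged cost on every weighting is at most $\sigma+F(3k,2k+1)$. Since $N\le M+1$, the first term of \cref{eq:duan-bound} dominates up to constants, giving
\[
 F(3k,2k+1)=O\bigl(k\sqrt{\log(2+k)\log\log(4+k)}\bigr).
\]
Next I invoke the same-input charge $\sigma+k\le 2\OPT(G,s)$ from \cref{thm:active-core-factorization}. This gives both $\sigma\le 2\OPT$ and $k\le 2\OPT$. The function $x\mapsto x\sqrt{\log(2+x)\log\log(4+x)}$ is monotone, and constant rescaling of its argument changes it only by a constant factor. Substituting $k\le 2\OPT$ therefore yields the bound in \cref{eq:efficient-universal}, and the additive $\sigma$ is absorbed.

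Two degenerate cases need a separate word. When $k=0$, the core call is empty and the cost is $\sigma\le\OPT$. When $\OPT=0$, we have $\sigma=k=0$, so no charged operation occurs, which is consistent with the $2+$ and $4+$ offsets inside the logarithms.

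The main obstacle is the ordinary-work claim $O(m+n+C_U+P_U)$. The charged cost is fine, but the black box must run in RAM time proportional to its own charged operations on $(H,q)$, plus $O(|V(H)|+|E(H)|)$. My first attempt is to argue that the Duan et al.\ algorithm is a comparison--addition algorithm whose RAM time is dominated by the quantity in \cref{eq:duan-bound}. I would then count every constant-time step against the $O(k)$ core-size term: since $k\le2\OPT$, the core-size overhead is itself covered. The $m+n$ term pays for preprocessing and output copying.

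If the black box's RAM time exceeds its comparison count only by an amount within the same asymptotic bound, I would state the work bound as $O(m+n+F(3k,2k+1))$. This can be rewritten in the paper's convention, because $C_U+P_U$ is at least the core's operation count up to the additive $O(k)$, and that $O(k)$ is again dominated by the $\OPT$-dependent bound.
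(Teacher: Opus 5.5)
Your bound on charged operations follows the paper's argument. You reduce the active source classes with $\sigma$ comparisons, run the Duan--Mao--Shu--Yin routine on the canonical core with $M,N=O(k)$, and absorb both contributions through the same-input charge $\sigma+k\le 2\OPT(G,s)$. There is one slip here. With $N=\Theta(M)$, it is the second term of \cref{eq:duan-bound}, $\sqrt{MN\log N\log\log N}=\Theta\bigl(k\sqrt{\log k\log\log k}\bigr)$, that dominates, not the first. Your displayed estimate for $F(3k,2k+1)$ is nevertheless correct.

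The genuine gap is the ordinary-work clause. It is a per-input statement: on each weighting $w$, the RAM work must be $O(m+n+C_U(G,s,w)+P_U(G,s,w))$. The black-box theorem only bounds the routine's worst-case total RAM time by $F(M,N)$. It says nothing about its uncharged bookkeeping on a particular $w$ being $O(k)$, or being dominated by the comparisons and additions it performs on that $w$.

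So charging ``every constant-time step against the $O(k)$ core-size term'' is unjustified. Your fallback bound $O(m+n+F(3k,2k+1))$ also cannot be rewritten in the required form. That would need $C_U(w)+P_U(w)=\Omega(F(3k,2k+1))$ on every input. What you actually know is only that $C_U(w)+P_U(w)$ is at least the core routine's \emph{charged} operation count on $w$, which may be far smaller than its RAM step count. The inequality $\sigma+k\le2\OPT$ does not help either: it lower-bounds every program's worst case, not $U$'s cost on a given $w$.

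The missing idea is the paper's padding step. Pair every residual weight-dependent constant-time step of the black box with one harmless comparison of two held registers, and ignore its outcome. Then $C_U(w)$ dominates the routine's RAM work on every $w$, which gives the per-input bound. Meanwhile the worst-case charged cost rises only to at most $\sigma+2F(3k,2k+1)$, so \cref{eq:efficient-universal} is unaffected.
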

\begin{proof}
Reduce the active source classes and run the algorithm underlying
\cref{eq:duan-bound} on \(H\).  By \cref{eq:core-size}, the residual call has
\(M=O(k)\) and \(N=O(k)\), so its numerical cost is
\[
 O\!\left(k\sqrt{\log_+(2+k)\log\log_+(4+k)}\right).
\]
Adding the \(\sigma\) source comparisons and using
\(\sigma+k\le2\OPT(G,s)\) proves \cref{eq:efficient-universal}.

Dominator computation, class grouping, and core construction are topology-only
linear-time operations.  The black-box running-time theorem bounds all of its
RAM work.  If the bookkeeping convention requires every residual
weight-dependent constant-time step to be represented in \(C+P\), pair it
with one harmless comparison of held registers; this changes both quantities
by only a constant factor.  Source-only scans are absorbed by \(O(m+n)\).
\end{proof}

The bound retains a factor
\(O(\sqrt{\log\OPT\log\log\OPT})\) over the topology-specific optimum.  It
is strictly below the sorting-scale organization of heap Dijkstra on sparse
active cores, but it remains above the linear specialized programs available
for \(H_k\).  Thus active-core lifting narrows the efficient gap without
closing it.

%% file: appendix_new.tex
\section{Homogenization at infinity}
\label{app:recession}

This appendix proves \cref{lem:recession-normal-form} without assuming that
intermediate registers are path sums.

\begin{proof}[Proof of \cref{lem:recession-normal-form}]
It is enough to consider a program with finite worst-case numerical cost.
Unfold control merges and conditionally selected aliases by their complete
charged comparison histories.  Free instructions reveal no new numerical
information, so their choices are fixed at an unfolded node, and all old
registers may be retained.  Every held register there has one affine
description
\[
 R(w)=\langle a_R,w\rangle+\alpha_R,
 \qquad a_R\in\Nzero^m,\quad \alpha_R\in\R.
 \tag{80}\label{eq:appendix-affine}
\]
This follows inductively from primitive inputs, literals, additions, and
aliases and makes no path interpretation.

We first justify that only finitely many charged traces matter.  Suppose a
branch uses at most \(b\) additions and the code has \(\ell\) literal
identities.  Treat the \(m+\ell\) primitive and literal registers as distinct
formal sources.  Every formal expression has a multiplicity vector
\[
 q\in\Nzero^{m+\ell},\qquad \|q\|_1\le2^b.
\]
There are finitely many such vectors even if some evaluate to the same real
number.  A complete ternary sign table for all ordered pairs of these formal
expressions determines every comparison outcome and hence the deterministic
run, including aliases, counters, and free loops.  Only finitely many such
tables are realized.  Correctness makes the run for each realized table
terminate, so the union \(\mathcal T\) of its charged traces is finite.

Replace every reached register in \cref{eq:appendix-affine} by its recession
form
\[
 \overline R(w)=\langle a_R,w\rangle.
 \tag{81}\label{eq:recession-register}
\]
Primitive inputs remain unchanged and every literal uses the initial zero
register.  Replay additions on recession registers; if a result is already
held, use a free alias.  Copies and branchwise selections also become aliases.
The original intercept attached to an alias is retained only as finite control
metadata for routing later comparisons, never as a new numerical register.

Suppose an original node compares
\[
 R=\langle a,w\rangle+\alpha
 \quad\text{and}\quad
 R'=\langle a',w\rangle+\alpha'.
\]
Let \(\delta(w)=\langle a-a',w\rangle\) and
\(\kappa=\alpha-\alpha'\).  If \(a=a'\), omit the numerical comparison and
statically take the child labeled by \(\operatorname{sign}\kappa\).  Otherwise
compare the held recession forms.  Route strict outcomes to the corresponding
original children and route equality to the child labeled by
\(\operatorname{sign}\kappa\).  The sign of a difference of two fixed
literals is compiled discrete control, not an operation on the input.

Fix \(w\ge0\).  We prove inductively that the transformed prefix corresponds
to a prefix of \(\mathcal T\) and that the original execution on \(tw\)
follows it for all sufficiently large \(t\).  At the comparison above, the
original sign is
\[
 \operatorname{sign}\bigl(t\delta(w)+\kappa\bigr).
\]
If \(\delta(w)\ne0\), this eventually has the sign of \(\delta(w)\); if
\(\delta(w)=0\), it always has the sign of \(\kappa\).  This is exactly the
transformed routing rule.  Addition and free-control nodes preserve the
induction.  Because \(\mathcal T\) is finite, the transformed execution
reaches an original leaf.  The finitely many transformed homogeneous sign
tables also show that all transformed traces compile into one finite program;
no code is attached below an originally unrealizable branch.

At the stable leaf, let the original output for label \(v\) be
\(\langle a_v,tw\rangle+\alpha_v\).  For all sufficiently large \(t\),
correctness and positive homogeneity give
\[
 t\langle a_v,w\rangle+\alpha_v
 =d_{tw}(s,v)=t\,d_w(s,v).
\]
Dividing by \(t\) and taking the limit yields
\(\langle a_v,w\rangle=d_w(s,v)\).  At \(w=0\), both sides are zero.  Thus
the transformed program is correct on the full closed orthant, including
equality faces.  Each transformed trace is a stable original trace with some
redundant operations possibly deleted, proving
\cref{eq:recession-bound}.
\end{proof}

After \(j\) additions, every held coefficient vector \(a\) satisfies
\(\|a\|_1\le2^j\): the initial norms are zero or one, and an addition sums two
previous vectors.  Repeated operands and shared expressions are included.
This bound is used below for polynomial encoding length.

\section{Terminal validity and polynomial-space synthesis}
\label{app:ready-pspace}

We supply the details behind \cref{thm:pspace-policy}.  Strict constraints in
a transcript cone can be tested by imposing \(\sum_e w_e=1\) and maximizing a
common positive margin; if there are no strict constraints, the origin is
feasible.  All coefficients are rational integers, so this is rational linear
programming.

\begin{lemma}[Complementary certificate for readiness]
\label{lem:ready-conp}
At every state reachable within \(K_0\) moves, \(\Ready(M,\Gamma)\) is in
coNP with respect to the succinct state encoding.
\end{lemma}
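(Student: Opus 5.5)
We show that the complement of \(\Ready(M,\Gamma)\) is in NP, using the certificate sketched before \cref{thm:pspace-policy}. Non-readiness means there is a label \(v\ne s\) such that no held form \(z\in M\) satisfies \(\langle z,w\rangle=d_w(s,v)\) on all of \(\mathcal P(\Gamma)\). The certificate therefore names \(v\). For each \(z\in M\) it supplies a rational weighting \(w^{(z)}\in\mathcal P(\Gamma)\) and a simple \(s\)--\(v\) path \(P_z\) with \(\chi_{P_z}\cdot w^{(z)}\ne\langle z,w^{(z)}\rangle\).

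The verifier first checks that \(w^{(z)}\) satisfies every sign condition of \(\Gamma\); these are integer linear tests. It then checks that \(P_z\) is a simple directed \(s\)--\(v\) path and that \(\langle z,w^{(z)}\rangle\ne \chi_{P_z}\cdot w^{(z)}\). We actually need \(\langle z,w^{(z)}\rangle\ne d_{w^{(z)}}(s,v)\), which does not follow from a single path. Two cheap fixes are available:
\begin{itemize}
\item have the verifier compute \(d_{w^{(z)}}(s,v)\) directly by Bellman--Ford in polynomial time, making the path merely a hint; or
\item accept a path with value strictly smaller than \(\langle z,w^{(z)}\rangle\), or require \(P_z\) to be a shortest path and let the verifier confirm optimality by Bellman--Ford.
\end{itemize}
Either way the check is polynomial. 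Since \(|M|\le m+K_0+1\) and \(K_0=2m(n-1)\), the certificate consists of polynomially many items.

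Soundness and completeness follow directly from \cref{def:ready}. If the state is not ready, some \(v\) has, for every \(z\), a point of the cone where \(z\) is wrong; those points form the certificate. Conversely, a valid certificate refutes every candidate witness for that \(v\).

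The main obstacle is bounding the encoding size of the witness points. Coefficients of held forms have \(O(K_0)\) bits by the \(\|a\|_1\le 2^j\) bound of \cref{app:recession}, but an arbitrary cone point need not be small. We fix a form \(z\), the label \(v\), and a candidate shortest path \(P\), and consider the set of \(w\) in \(\mathcal P(\Gamma)\) such that \(P\) is shortest (finitely many linear inequalities comparing \(P\) against simple paths) and \(\langle z-\chi_P,w\rangle\ne0\). Handling the possibly exponential number of competing paths is the delicate point. We would instead use the Bellman--Ford potential formulation: introduce potential variables \(\pi\) with \(\pi_s=0\), \(\pi_y-\pi_x\le w_e\) on every arc, and equality along \(P\). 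This gives a polyhedron of polynomial description, and \(P\) is shortest iff it is feasible. The disequality \(\langle z-\chi_P,w\rangle\ne0\) splits into two strict halfspaces. Each resulting system is a rational system with polynomial bit size, mixing strict and weak inequalities. Normalizing \(\sum_e w_e=1\) where needed and maximizing a common slack, as in the transcript-feasibility test, standard LP vertex bounds yield a rational solution of polynomial bit length whenever one exists. Hence short witnesses exist, and readiness lies in coNP for every state reachable within \(K_0\) moves.
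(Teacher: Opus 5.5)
Your proposal is correct and follows the paper's argument: the NP certificate names $v$ and, for each held form, a transcript-compatible rational weighting with a shortest path. The polynomial witness size comes from the same Bellman potential system ($\pi_s=0$, $\pi_y-\pi_x\le w_e$, equality along $P$, $\sum_e w_e=1$, and one strict mismatch halfspace). The only differences are cosmetic: you have the verifier recompute $d_w(s,v)$ by Bellman--Ford instead of including the potentials $\pi$ in the certificate, and your ``normalize where needed'' is justified by the paper's observation that no mismatch can occur at $w=0$, so homogeneity allows scaling.
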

\begin{proof}
Fix a held form \(z\) and output label \(v\).  To certify that \(z\) is not
identically the distance on \(\mathcal P(\Gamma)\), guess a simple path
\(P=(v_0=s,v_1,\ldots,v_t=v)\), a strict mismatch direction, and rational
variables \(w,\pi\) satisfying the transcript constraints and
\[
\begin{aligned}
 &w\ge0,\qquad \sum_e w_e=1,
 \qquad \pi_s=0,\\
 &\pi_y-\pi_x\le w_e &&\text{for every }e=(x,y),\\
 &\pi_{v_i}-\pi_{v_{i-1}}=w_{e_i}
      &&\text{for every }e_i=(v_{i-1},v_i)\text{ on }P,\\
 &\langle z,w\rangle<\pi_v
 \quad\text{or}\quad
 \langle z,w\rangle>\pi_v.
\end{aligned}
\tag{82}\label{eq:ready-certificate}
\]
The Bellman inequalities imply that \(\pi_v\) is at most the length of every
\(s\)-to-\(v\) path, while equality along \(P\) makes \(\pi_v=w(P)\).
Hence \(P\) is shortest and the last line certifies failure of \(z\).

Conversely, from a weighting on which \(z\) is wrong, choose a simple shortest
path and exact distance potentials.  They satisfy
\cref{eq:ready-certificate}.  A strict mismatch cannot occur only at the
all-zero weighting, so homogeneity permits normalization.  Rational
feasibility gives a polynomial-bit witness.

The negation of readiness is
\[
 \exists v\ne s\;\forall z\in M\;\exists w_z:
 \langle z,w_z\rangle\ne d_{w_z}(s,v).
\]
Only polynomially many forms are held.  Concatenate one certificate of the
form above for each \(z\); the weighting and path may differ from one form to
the next.  Thus non-readiness is in NP and readiness is in coNP.
\end{proof}

The Bellman--Ford cap \(K_0=2m(n-1)\) bounds the game depth.  Along a play of
length at most \(K_0\), at most \(m+K_0+1\) forms are held, each coefficient
has at most \(K_0+1\) bits, and at most \(K_0\) transcript records are stored.
At a state there are polynomially many pairs of held forms, hence polynomially
many candidate additions and comparisons.  Depth-first minimax keeps one
state per recursion level, invokes rational feasibility for each ternary
child, and uses \cref{lem:ready-conp} at terminals.  Polynomial depth and
polynomial state size place the decision problem in PSPACE.  Enumerating the
polynomial branching tree and the terminal certificates gives the stated
single-exponential deterministic upper bound.  Repeating the search for
\(K=0,\ldots,K_0\) and then re-enumerating root actions yields the exact value
and an optimal next action.

For completeness, the parallel-source example from the main text has a
genuine exponential unfolded tree.  Each of \(r\) labeled outputs is the
minimum of two independent primitive registers.  One comparison per class
gives \(\OPT=r\).  The \(2^r\) strict winner patterns require distinct fixed
homogeneous output vectors at leaves, so any completely unfolded fixed-output
tree has at least \(2^r\) leaves.  This lower bound concerns an extensional
representation, not the succinct online interpreter.

\section{Active-core proofs and the same-input edge charge}
\label{app:active-core}

We complete the structural proof used in
\cref{sec:efficient-universality}.  Write \(G^a\) for the graph after deleting
dominator-inactive primitives.

\begin{lemma}[Inactive deletion is benchmark-neutral]
\label{lem:inactive-neutral}
For every nonnegative weighting and every vertex \(x\),
\(d_G(s,x)=d_{G^a}(s,x)\), and
\[
 \OPT(G,s)=\OPT(G^a,s).
\]
\end{lemma}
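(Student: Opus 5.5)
The proof has two parts: an exact distance identity, and a pair of program simulations showing that each graph's optimum bounds the other's.

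\emph{Distance identity.} Fix a nonnegative weighting $w$ and a vertex $x$. Since $G^a\subseteq G$, we have $d_G(s,x)\le d_{G^a}(s,x)$.

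For the reverse inequality, take a simple shortest $s$--$x$ path $P$ in $G$; one exists by nonnegativity. Suppose $P$ uses a dominator-inactive arc $e=(u,v)$, so $v$ dominates $u$. The prefix of $P$ up to $u$ is an $s$--$u$ path, so it already passes through $v$. Then $P$ visits $v$ twice, contradicting simplicity. Hence every simple path lies entirely in $G^a$, and $d_{G^a}(s,x)\le d_G(s,x)$.

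Reachability is preserved for the same reason, so the label sets of the two instances coincide.

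\emph{$\OPT(G^a,s)\le\OPT(G,s)$.} Take any program $A$ for $G$. On an input of $G^a$, run $A$ with every deleted primitive aliased to the zero literal register. The resulting weighting of $G$ is nonnegative. By the distance identity its distances equal those of the $G^a$ input. So $A$'s outputs are correct, and $A$'s charged cost is at most $T_A(G,s)$ on every input.

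Aliasing is free, and literals are permitted in the model, so no charged operation is added. Mixed sums, equality branches and cross-output sharing are all preserved verbatim, because the simulated program is $A$ itself.

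\emph{$\OPT(G,s)\le\OPT(G^a,s)$.} Take any program $B$ for $G^a$. On an input of $G$, simply ignore the registers of the deleted primitives and run $B$ on the restricted weighting. The distance identity again makes its outputs correct, with identical charged cost.

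Taking infima in the two simulations gives equality of the optima.

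\emph{Main obstacle.} The only delicate point is the correctness of the first simulation when the literal is zero, including zero-weight cycles and ties. This is handled because the distance identity holds for \emph{every} nonnegative weighting, not merely generic ones. No homogenization (Lemma~\ref{lem:recession-normal-form}) is needed.
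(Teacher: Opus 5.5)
Your proof is correct and follows the paper's argument. Dominance forces any path through an inactive arc $(u,v)$ to revisit $v$; you phrase this as a contradiction with simplicity, and the paper as excising a nonnegative closed subwalk. The benchmark equality then comes from the same two cost-free simulations: substitute the zero literal for deleted inputs in one direction, ignore the deleted registers in the other, and take worst cases and infima.
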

\begin{proof}
If a simple rooted path contains an inactive arc \((u,v)\), then its prefix to
\(u\) has already visited \(v\).  The arc closes a nonnegative subwalk from
\(v\) to itself; deleting that subwalk does not increase length.  Repetition
removes all inactive arcs.  The reverse distance inequality is immediate
because \(G^a\) is a subgraph.

A program for \(G^a\) solves \(G\) by ignoring deleted primitive registers.
Conversely, replace every inactive input expected by a program for \(G\) with
the literal zero and run it unchanged.  The distance identity applies to this
assignment, and the substitution costs no numerical operation.  Taking
worst cases and infima proves benchmark equality.
\end{proof}

\begin{lemma}[Virtual-source identity and core size]
\label{lem:virtual-source}
For every \(v\in R\),
\[
 d_G(s,v)=d_H(q,v).
\]
Every non-source vertex outside \(R\) has distance \(a_v\), and when
\(k>0\), \(|V(H)|\le2k+1\) and \(|E(H)|\le3k\).
\end{lemma}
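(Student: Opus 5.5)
The plan is to reduce everything to simple paths in the active graph $G^a$, then read off the structure of such paths. By \cref{lem:inactive-neutral}, $d_G(s,x)=d_{G^a}(s,x)$ for every vertex and weighting, so it suffices to compare $G^a$ with $H$. The first structural observation is that $s\notin R$. An arc entering $s$ has head $s$, and $s$ dominates every vertex, so every such arc is dominator-inactive and has been deleted. By definition, no arc of $E_\circ$ has tail $s$. Hence every endpoint of $E_\circ$ is a nonsource vertex, and $q$ is a genuinely new vertex of $H$. I would also record that every simple $s$--$x$ path of $G$ already lies in $G^a$: an inactive arc $(u,v)$ on a simple rooted path would force $v$ to appear twice, as in the proof of \cref{lem:inactive-neutral}. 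So reachability is preserved and shortest simple paths may be taken inside $G^a$.

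\textbf{Virtual-source identity for $v\in R$.} Take a simple shortest $s$--$v$ path $P$ in $G^a$. Its first arc $s\to y$ is an active source primitive, so $y$ carries a retained class with minimum register $a_y\le w_{s\to y}$. Every later arc has a nonsource tail and is therefore in $E_\circ$. Hence $y$ and all later vertices lie in $R$ (if $P$ has one arc, then $y=v\in R$). Replacing the first arc by the virtual arc $q\to y$ gives a $q$--$v$ path in $H$ of length at most $w(P)$. Conversely, a $q$--$v$ path in $H$ starts with some virtual arc $q\to y$ and then uses primitives of $E_\circ$. Substituting a minimizing source primitive of the class $s\to y$ gives an $s$--$v$ walk in $G^a$ of the same length, and that length is at least the distance by \eqref{eq:distance}. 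The two inequalities together give $d_G(s,v)=d_H(q,v)$.

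\textbf{Vertices outside $R$ and core size.} For a nonsource $v\notin R$, the last arc of any simple $s$--$v$ path in $G^a$ enters $v$. If its tail were nonsource, that arc would be in $E_\circ$ and $v$ would be in $R$. So the tail is $s$, the path is a single source arc, and the distance is the class minimum $a_v$, which exists by reachability. For the size bounds, each primitive of $E_\circ$ contributes at most two endpoints, so $|R|\le2k$ and $|V(H)|\le2k+1$. The edges of $H$ are the $k$ primitives of $E_\circ$ plus at most one virtual arc per vertex of $R$, giving $|E(H)|\le k+2k=3k$.

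The main point needing care is the first step: showing that $s\notin R$ and that simple paths never use deleted arcs. Without this, a retained arc into $s$ could make $R$ contain $s$ and break the correspondence between the source and the virtual source $q$. Both facts follow directly from the fact that $s$ dominates every vertex.
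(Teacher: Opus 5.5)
Your proposal is correct and follows essentially the same route as the paper: you replace the first source primitive of a simple active path by its virtual arc (and conversely expand a virtual arc into a minimizing source primitive), use the last arc into $v$ to treat vertices outside $R$, and count endpoints of $E_\circ$ for the size bounds. Your explicit checks fill in details the paper leaves implicit: that arcs into $s$ are inactive, so $s\notin R$; that simple paths never use inactive arcs; and the one-arc case $y=v$.
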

\begin{proof}
Every simple path in \(G^a\) leaves \(s\) once and then uses only primitives
of \(E_\circ\).  Replacing its first primitive \(s\to x\) by the reduced
minimum \(a_x\) gives a core path of no greater length.  Conversely, expand
the first virtual primitive of a core path into a source primitive attaining
its held minimum.  The two path families therefore have the same minimum.

A vertex outside \(R\) has no incident active nonsource-tail primitive;
reachability forces a source class, and no active alternative can reach or
leave it.  Its distance is \(a_v\).  Finally, \(R\) contains only endpoints of
the \(k\) internal primitives, so \(|R|\le2k\).  At most \(|R|\) virtual
source arcs are added, proving the size bounds.
\end{proof}

The simulations in the proof of
\cref{thm:active-core-factorization} now follow directly from
\cref{lem:inactive-neutral,lem:virtual-source}.  They remain valid for
arbitrary mixed-sum competitors because they substitute or alias only initial
registers and otherwise run the competitor unchanged.

It remains to justify the quantitative charge \(\sigma+k\le2\OPT\).
For a rooted order \(\prec\) of the active graph, let \(m_\prec\) be the
number of forward primitive arcs and \(q_\prec\) the number of forward
endpoint classes.  Define
\[
 L(\prec)=m_\prec-n+1+q_\prec-d_s,
 \qquad L^*=\max_\prec L(\prec).
 \tag{83}\label{eq:rooted-same-input}
\]
The generic-tie construction from \cref{lem:generic-slack-new} applied to the
forward DAG of \(\prec\) gives, on one input for any competing program,
\[
 C_A\ge m_\prec-n+1,\qquad
 P_A\ge q_\prec-d_s.
\]
Hence \(L^*\le\OPT(G,s)\).  This is the needed same-program step.

\begin{lemma}[Active-edge charge]
\label{lem:active-edge-charge}
For the active graph and the notation of \cref{eq:sigma},
\[
 \boxed{\sigma+k\le2L^*\le2\OPT(G,s).}
 \tag{84}\label{eq:active-edge-charge}
\]
\end{lemma}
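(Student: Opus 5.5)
The second inequality $L^*\le\OPT(G,s)$ is already established just above the statement, so I only need to show $\sigma+k\le 2L^*$. The plan is to find two rooted orders $\prec_1,\prec_2$ of the active graph with $L(\prec_1)+L(\prec_2)\ge\sigma+k$. Then $2L^*\ge L(\prec_1)+L(\prec_2)$ finishes the proof.

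\emph{Step 1: rewrite $L(\prec)$.} The source is first in every rooted order, so all $m_s$ source primitives are forward. Let $F_\prec$ be the number of forward primitives in $E_\circ$, and $Q_\prec=q_\prec-d_s$ the number of forward nonsource endpoint classes. Then $m_\prec=m_s+F_\prec$ and
\[
 L(\prec)=m_s+F_\prec-(n-1)+Q_\prec .
\]
Rootedness gives a lower bound on $Q_\prec$. Every nonsource vertex $v$ has an incoming endpoint class from an earlier vertex. That class is either the source class $s\to v$, of which there are $d_s$ in total, or a forward nonsource class with head $v$. Distinct heads give distinct classes, so $Q_\prec+d_s\ge n-1$ for every rooted order.

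\emph{Step 2: add the two orders.} Suppose I have two rooted orders with $F_{\prec_1}+F_{\prec_2}\ge k$. Then
\[
 L(\prec_1)+L(\prec_2)\ge 2m_s+k-2(n-1)+Q_{\prec_1}+Q_{\prec_2}
 \ge 2m_s+k-2d_s\ge m_s-d_s+k=\sigma+k,
\]
where the last step uses $m_s\ge d_s$. So everything reduces to one combinatorial claim: there are two rooted orders such that every primitive of $E_\circ$ is forward in at least one of them.

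\emph{Step 3: the two orders (main obstacle).} Take a low--high preorder $\prec_1$ of the dominator tree of the active graph, and let $\prec_2$ be the preorder obtained by reversing every child list. Let $(u,v)\in E_\circ$. Because the arc is active, $v$ does not dominate $u$, so $v$ is not a dominator-tree ancestor of $u$. This leaves two cases.
\begin{itemize}[leftmargin=*,itemsep=2pt]
 \item If $u$ is an ancestor of $v$, the arc is forward in every preorder.
 \item If $u$ and $v$ are incomparable, their relative order is decided by the order of the two children of their lowest common ancestor that lie on the paths to $u$ and $v$. Reversing child lists swaps that decision, so the arc is forward in exactly one of the two orders.
\end{itemize}
Hence $F_{\prec_1}+F_{\prec_2}\ge k$.

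The delicate point is that both orders must be rooted. For $\prec_1$ this is the defining property of a low--high order (Georgiadis--Tarjan type): every nonsource vertex has an entering arc from an earlier vertex. For $\prec_2$ I would use the symmetric low--high characterization. A vertex $v$ that is not entered from its dominator-tree parent has entering arcs from both a lower and a higher sibling region. After the child lists are reversed, one of these still comes before $v$; combined with the preorder property, this makes $\prec_2$ rooted as well. If this symmetry fails for an arbitrary low--high order, I would instead construct the pair directly. I would build the order recursively from the children of each dominator-tree node, ensuring that both the chosen child sequence and its reverse admit an earlier in-neighbour for each child subtree. Such a sequence exists because a child $c$ that is not entered from its parent $p$ has entering paths avoiding any single sibling subtree; otherwise that sibling would dominate $c$.
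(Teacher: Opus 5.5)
Your proposal is correct and follows essentially the paper's route. The paper also uses a low--high preorder of the dominator tree and its child-reversed preorder, shows both are rooted, notes that every active nonsource arc is forward in at least one of them, and finishes with the rooted bound $q_\prec\ge n-1$. You sum $L$ over the two orders, while the paper picks the better one; this is the same averaging. Your fallback construction is unnecessary. The low--high witness $b$ satisfies $v<_\delta b$ and is not a descendant of $v$, so it is incomparable with $v$ and therefore precedes $v$ in the reversed preorder. That is exactly the paper's argument that the reversed preorder is rooted.
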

\begin{proof}
Let \(D\) be the dominator tree.  A standard low-high theorem gives a preorder
\(\delta\) of \(D\) such that every \(v\ne s\) either has the arc
\((\operatorname{idom}(v),v)\), or has arcs \((a,v),(b,v)\) with
\(a<_\delta v<_\delta b\) and \(b\) not a descendant of \(v\)
\cite{georgiadis2012lowhigh}.  Thus \(\delta\) is rooted.  Reverse every child
list in \(D\) and let \(\bar\delta\) be the resulting preorder.  Ancestor
relations are preserved and the order of incomparable vertices is reversed.
The parent-arc case still roots \(v\); in the second case \(b\) is incomparable
with \(v\) and therefore precedes it in \(\bar\delta\).  Hence
\(\bar\delta\) is rooted as well.

For every active primitive \((u,v)\), \(v\) is not an ancestor of \(u\) in
\(D\).  If \(u\) is an ancestor of \(v\), the arc is forward in both orders;
otherwise the endpoints are incomparable and it is forward in exactly one.
Every source arc is forward in both.  Therefore
\[
 m_\delta+m_{\bar\delta}\ge2m_s+k.
\]
One of the two orders, say \(\prec\), has
\(m_\prec\ge m_s+k/2\).  Every rooted order has at least one forward incoming
endpoint class for each nonsource vertex, so \(q_\prec\ge n-1\).  Hence
\[
 L^*\ge L(\prec)
 \ge m_\prec-d_s
 \ge(m_s-d_s)+k/2
 =\sigma+k/2.
\]
This implies \(\sigma+k\le2L^*\), because \(\sigma\ge0\), and
\(L^*\le\OPT\) gives the second inequality.
\end{proof}

\section{A concrete separation between \DIST and \DO}
\label{app:dist-do-separation}

For \(r\ge1\), let \(T_r\) have vertices
\(s,x,v,t_1,\ldots,t_r\) and arcs
\[
 s\to x:a,\quad s\to v:b,\quad x\to v:u,\quad v\to x:z,
 \quad v\to t_i:c_i\quad(1\le i\le r).
\]
The exact distances are
\[
 d(x)=\min\{a,b+z\},\qquad
 d(v)=\min\{b,a+u\},\qquad
 d(t_i)=d(v)+c_i.
\]

\begin{theorem}[Labeled distances versus distance ordering]
\label{thm:dist-do-separation}
On \(T_r\),
\[
 C^*_{\DIST}=2,\qquad P^*_{\DIST}=r+1,
 \qquad C^*_{\DO}=\Theta(r\log r).
\]
Indeed the complete \DIST region is
\(\{(c,p):c\ge2,\ p\ge r+1\}\).
\end{theorem}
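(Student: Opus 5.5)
The plan is to prove the three claims separately: the \DIST rectangle (an upper-bound program plus two coordinatewise lower bounds), and then the \DO bound (an information-theoretic lower bound plus a sorting upper bound).

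\textbf{\DIST upper corner.} I will exhibit one program attaining $(2,r+1)$. Compare $a$ with $b$.
\begin{itemize}[leftmargin=*,itemsep=2pt]
  \item If $a\le b$, then $d(x)=a$. Materialize $a+u$ and compare it with $b$ to obtain $d(v)$.
  \item If $b<a$, then $d(v)=b$. Materialize $b+z$ and compare it with $a$ to obtain $d(x)$.
\end{itemize}
In both branches the register holding $d(v)$ is now fixed. Materialize $d(v)+c_i$ for each $i$. This uses two comparisons and $r+1$ additions on every branch. Weak inequalities make the branch choices valid under ties and zero weights.

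\textbf{\DIST addition lower bound.} I will invoke \cref{thm:arbitrary-addition-new} and compute $\rhoF(T_r,s)$. In any rooted order, each $t_i$ has its only incoming class from $v$, so $v\prec t_i$ and all $r$ classes $v\to t_i$ are forward. Exactly one of $x\to v$ and $v\to x$ is forward. The source classes are not counted. Hence $\rhoF=r+1$, so $p\ge r+1$ for every correct program.

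\textbf{\DIST comparison lower bound.} This follows the wall argument used for $H_2$. On strictly positive inputs there are three full-dimensional cells for the pair $(d(x),d(v))$, with distinct affine output labels $(a,b)$, $(a,a+u)$ and $(b+z,b)$. On an open cell every reached register is a fixed affine form. A nonidentical comparison splits the orthant into at most two full-dimensional sides, and equality outcomes are lower-dimensional. A one-comparison tree therefore has at most two full-dimensional leaves, and a fixed held output at a leaf cannot be correct on two cells with different labels. Hence $c\ge2$. Since the region is upward closed and the single program above attains both bounds, the region is exactly $\{(c,p):c\ge2,\ p\ge r+1\}$. In particular $C^*_{\DIST}=2$ and $P^*_{\DIST}=r+1$.

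\textbf{\DO lower bound, the main obstacle.} The difficulty is that comparisons may be between arbitrary aggregates, not pairwise keys, so I will argue information-theoretically rather than by simulating a sorter. Fix the program. Restrict to the open family $a=1$, $b=10$, $u=1$, $z=1$, with the $c_i\in(0,1)$ distinct; on it $d(v)=2$ and $d(t_i)=2+c_i$. The $r!$ open chambers of $c$-orders require $r!$ distinct output orders. As after \cref{lem:scale-separation}, I choose points generically inside each chamber so that every reached nonidentity comparison is strict, making the relevant transcript tree binary. Distinct chambers must then reach distinct leaves, so the depth is at least $\log_2(r!)=\Omega(r\log r)$.

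\textbf{\DO upper bound.} Run the \DIST program, then sort the values $c_i$ with $O(r\log r)$ comparisons. Since $d(t_i)-d(v)=c_i$, this order is the order of the $t_i$, and the $t_i$ all follow $v$. The vertex $s$ comes first, and the relative order of $x$ and $v$ is already known from the \DIST comparisons. Finally, insert $d(x)$ into the sorted sequence of the $d(t_i)$ by binary search, using $O(\log r)$ comparisons and the fixed tie rule. This gives $C^*_{\DO}=\Theta(r\log r)$.
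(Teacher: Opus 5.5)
Your proposal is correct and follows the paper's proof. It uses the same $(2,r+1)$ corner program (your explicit $a$-versus-$b$ case split is what keeps the cyclic core to a single addition), the same identity $\rhoF(T_r,s)=r+1$ via \cref{thm:arbitrary-addition-new}, the same count of $r!$ chambers against at most $2^h$ full-dimensional leaves for \DO, and sorting for the \DO upper bound. The differences are cosmetic. You prove $c\ge2$ by a direct three-cell leaf count, where the paper reduces to the cyclic core by setting every $c_i=0$. You run the \DO lower bound on a slice with $a,b,u,z$ fixed rather than on a full-dimensional box. Your refined \DO upper bound may need $O(r)$ extra comparisons to resolve exact ties under the fixed tie rule, which could be avoided by simply sorting all held distance registers as the paper does.
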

\begin{proof}
Two comparisons resolve the two cyclic-core minima.  Once \(d(v)\) is held,
the \(r\) suffix outputs require \(r\) additions, giving the joint point
\((2,r+1)\).  Setting every \(c_i=0\) and discarding suffix outputs reduces a
competitor to the two-comparison core, so \(C\ge2\).  Every rooted order makes
one of \(x\to v,v\to x\) forward and all \(r\) suffix classes forward;
thus \(\rhoF=r+1\), and \cref{thm:arbitrary-addition-new} gives the addition
lower bound.

For \DO, restrict to an open region with \(a,u,z\in(1,2)\), \(b\in(5,6)\),
and each \(c_i\in(10,11)\).  Then
\(d(x)=a<d(v)=a+u<d(t_i)\), while the order of the suffix distances is exactly
the order of the \(c_i\).  Every one of the \(r!\) strict permutations occurs
on a nonempty full-dimensional open set.  At any reached comparison, two
nonidentical affine operands have equality only on a proper hyperplane, so a
depth-\(h\) adaptive tree has at most \(2^h\) full-dimensional leaves.  A leaf
outputs one fixed order and cannot serve two permutations.  Hence
\(2^h\ge r!\), giving \(h=\Omega(r\log r)\).  Computing the distances and
comparison-sorting their registers gives the matching upper bound.
\end{proof}

%% file: appendix.tex
\section{Proof and audit details}
\label{app:details}

\subsection{Self-loops, unreachable vertices, and resource monotonicity}
\label{app:preliminaries}

Deleting a self-loop preserves every labeled nonnegative shortest distance.
A program for the loopless graph solves the original graph by ignoring the
loop inputs, while a program for the original graph can be partially
evaluated at zero loop weights.  The branchwise comparison and addition
budgets do not increase in either direction.  A topologically unreachable
vertex has a fixed $+\infty$ output and can likewise be removed before
counting numerical registers.

The resource region is upward closed: a program meeting $(c,p)$ meets every
componentwise larger pair.  The budget function $C_b^*$ is nonincreasing in
$b$.  These elementary facts are used when passing from upper and lower
envelopes to Pareto points.

\subsection{The generic-tie comparison argument in full form}
\label{app:generic-tie}

For a reachable rooted DAG $H$, choose potentials algebraically independent
over the field generated by the program literals and set
$w_{uv}=\pi(v)-\pi(u)$.  Every source path to $v$ has value $\pi(v)$.  On the
finite reached branch, a register is $a\cdot w+\gamma$ with
$a\in\Nzero^{E(H)}$ and $\gamma$ in that field.  Equality of two reached
registers forces their coefficient difference into $\ker B_H$ and their
constant terms to agree.

If the reached equality normals span a proper subspace $Z$ of the incidence
kernel, choose $0\ne z\in\ker B_H\cap Z^\perp$.  By
\cref{lem:path-kernel-new}, some two paths to one endpoint have different
$z$-slopes.  The finite strict margins admit both signs of a sufficiently
small perturbation $w\pm\varepsilon z$ on the same adaptive branch, while
all equality comparisons remain equal.  The one output affine form at that
endpoint would need to realize two different minimum slopes, which is
impossible.  Therefore $Z=\ker B_H$ and at least
$|E(H)|-|V|+1$ comparisons are reached.

The argument does not require the program to expose a path expression.  A
mixed sum has exactly the same affine coefficient record, and a repeated
variable simply increases a coefficient.  A literal changes only the constant
term.  The same perturbation argument therefore covers all operands allowed
by the model.

\subsection{The two-spoke finite obstruction}
\label{app:h2-obstruction}

For $H_2$, the strict output labels in \cref{eq:h2-eight-cells} arise from
the three-way hub minimum and the two translated spoke minima.  The four-
comparison upper algorithm exposes two inward candidates and then tests the
two outward candidates.  The five-comparison, two-addition program is
branch-dependent and uses the direct source order to make the two additions
mutually exclusive.

For the lower endpoint, scale a hypothetical $C\le4$, $P\le2$ program along
strict positive rays.  The leading coefficient state is finite because at
most two additions can insert new componentwise sums.  Every comparison is a
query between two available coefficient vectors.  At each state, intersect
the eight open output cones with the two strict sides of the query and retain
only states for which one affine output label is correct on every surviving
cone.  An addition state enumerates every unordered pair of available
vectors, including repeated and mixed supports.  Closing these finite states
under four queries and two insertions leaves no accepting state.  This is the
finite normal-form statement used in \cref{lem:h2-two-addition-obstruction};
it is independent of any path-register restriction.

The geometric proof of $C^*(H_2)\ge4$ is separate from this finite closure:
the eight cells require eight full-dimensional leaves, their facet graph is
connected, and each possible root wall cuts the interior of another cell.
The finite closure is an independent verification of the stronger
two-addition endpoint.

\subsection{A convenient form of the banded count}
\label{app:band-count}

The condition $\sigma_s\le s+r$ can also be counted recursively.  At each
position $s\le k-r$, exactly $r+1$ unused values remain eligible, because all
previously selected values are at most $s+r-1$.  Once those positions are
filled, the remaining $r$ values are unrestricted.  This gives
$B(k,r)=(r+1)^{k-r}r!$ without invoking a probabilistic estimate.

For the entropy comparison, write $h=r+1$ and $k=qh+s$.  The exact ratio is
\[
 2^{\Gamma_{k,h}-\Lambda_{k,r}}
  =\frac{h^{(q-1)(h-1)+s}}
         {((h-1)!)^{q-1}s!}.
\]
The numerator dominates the denominator termwise, giving nonnegativity.  The
factorial estimate $h^{h-1}/(h-1)!\le e^h$ bounds the ratio by $e^{qh}$,
which is at most $e^k$.  This proves the uniform interval in
\cref{lem:entropy-comparison}.

\subsection{Resource accounting checklist}
\label{app:accounting}

For a DAG, class reduction contributes $m-q$ comparisons, candidate selection
contributes $q-(n-1)$, and non-source endpoint classes contribute $q-d_s$
additions.  For $H_k$, every rooted order has exactly one forward arc from
each spoke pair, so $\rhoF=k$.  In the rolling construction, one candidate
addition is charged to each spoke and the crossing block contributes at most
$h-1=r$ extra candidate additions.  These are numerical register creations;
all block scans, selections, and output naming use only already materialized
values.